\documentclass[10pt]{article}

\usepackage[protrusion=false,expansion,final,tracking=true,kerning=true,spacing=true,factor=1100,stretch=10,shrink=10]{microtype}
\usepackage[utf8]{inputenc}
\usepackage[colorlinks=true]{hyperref}
\usepackage[T1]{fontenc}

\usepackage{amsmath}
\usepackage{amsfonts}
\usepackage{amssymb}
\usepackage{amsthm}
\usepackage{mathtools}
\usepackage[dvipsnames]{xcolor}
\usepackage{xifthen}
\usepackage{tikz}
\usetikzlibrary{math,calc}
\usepackage{graphicx}
\DeclareGraphicsExtensions{.png,.pdf,.jpg}
\graphicspath{{pics/}}
\usepackage[export]{adjustbox}
\usepackage{authblk}
\usepackage{orcidlink}
\usepackage{caption}
\newcommand{\calA}{\mathcal{A}}
\newcommand{\calB}{\mathcal{B}}

\newcommand{\calD}{\mathcal{D}}

\newcommand{\calT}{\mathcal{T}}

\newcommand{\calV}{\mathcal{V}}

\newcommand{\frm}{\mathfrak{m}}

\newcommand{\bbE}{\mathbb{E}}

\newcommand{\bbP}{\mathbb{P}}
\newcommand{\bbQ}{\mathbb{Q}}
\newcommand{\bbR}{\mathbb{R}}
\newcommand{\bbS}{\mathbb{S}}

\newcommand{\bbV}{\mathbb{V}}

\newcommand{\bbZ}{\mathbb{Z}}
\newcommand{\sfb}{\mathsf{b}}

\newcommand{\sfo}{\mathsf{o}}

\newcommand{\sfr}{\mathsf{r}}

\newcommand{\sfD}{\mathsf{D}}

\newcommand{\sfK}{\mathsf{K}}

\newcommand{\sfO}{\mathsf{O}}

\newcommand{\sfW}{\mathsf{W}}

\newcommand{\sfZ}{\mathsf{Z}}

\DeclarePairedDelimiterX{\setof}[2]{\{}{\}}{%
  #1 \,\delimsize|\, #2%
}
\DeclarePairedDelimiter\abs{\lvert}{\rvert}%
\DeclarePairedDelimiter\norm{\lVert}{\rVert}%
\DeclarePairedDelimiter\angularbrackets{\langle}{\rangle}%
\newcommand*{\scalesub}[1]{%
  \mathchoice
    {\raisebox{-0.2ex}{\scalebox{0.9}{\(\displaystyle #1\)}}}%      Display style
    {\raisebox{-0.2ex}{\scalebox{0.9}{\(\textstyle #1\)}}}%         Text style
    {\raisebox{-0.1ex}{\scalebox{0.9}{\(\scriptstyle #1\)}}}%        Script style
    {\raisebox{-0.1ex}{\scalebox{0.9}{\(\scriptscriptstyle #1\)}}}%  Script-script style
}

\newcommand{\myQED}{\hfill\ensuremath{\qedsymbol}}

\newcommand{\normI}[1]{\norm{#1}_{\mkern-1mu\scalesub{1}}}
\newcommand{\normII}[1]{\norm{#1}_{\mkern-1mu\scalesub{2}}}
\newcommand{\normsup}[1]{\norm{#1}_{\mkern-1mu\scalesub{\infty}}}
\newcommand{\defby}{\coloneqq}%{\stackrel{\mathrm{def}}{=}}
\newcommand{\bydef}{\eqqcolon}%{\stackrel{\mathrm{def}}{=}}
\newcommand{\IF}[1]{\boldsymbol{1}_{\{#1\}}}
\newcommand{\spr}[3][]{\angularbrackets[#1]{#2, #3}}%

\newcommand{\dd}{\mathrm{d}}
\renewcommand{\emptyset}{\varnothing}
\newcommand{\arsinh}{\operatorname{arsinh}}

\newcommand{\artanh}{\operatorname{artanh}}
\newcommand{\loct}{\ell}
\newcommand{\prob}{\bbP}
\newcommand{\expectation}{\bbE}
\newcommand{\variance}{\bbV}
\newcommand{\p}[2]{\prob^{#1}_{#2}}
\newcommand{\E}[2]{\expectation^{#1}_{#2}}
\newcommand{\V}[2]{\variance^{#1}_{#2}}
\newcommand{\Z}[2]{\sfZ^{#1}_{#2}}
\newcommand{\D}[2]{\sfD^{#1}_{#2}}

\newcommand{\lambdac}{\lambda_{\mathrm{c}}}

\newcommand{\Bl}{\sfb_\lambda}
\newcommand{\W}{\sfW}
\newcommand{\icl}{\nu}
\newcommand{\fe}{f}
\newcommand{\K}{\sfK}
\newcommand{\Kl}{\K_\lambda}

\newcommand{\concat}{\circ}
\newcommand{\fcone}{Y^\blacktriangleleft}
\newcommand{\fbone}{Y^\blacktriangleright}

\newcommand{\QL}{\bbQ_{\mathrm L}^{h,\lambda}}
\newcommand{\QR}{\bbQ_{\mathrm R}^{h,\lambda}}
\newcommand{\QD}{\bbQ^{h,\lambda}}
\newcommand{\Skel}[1]{\operatorname{Skel}(#1)}
\newcommand{\bfv}{\boldsymbol{v}}

\newcommand{\bftau}{\boldsymbol{\tau}}
\newcommand{\MP}{\sfr} % Notation for the macroscopic paths
\newcommand{\Leb}{\mathrm{Leb}}
\newcommand{\Hess}[1]{H_{#1}}
\theoremstyle{plain}
\newtheorem{theorem}{Theorem}[section]
\newtheorem{lemma}[theorem]{Lemma}
\newtheorem{proposition}[theorem]{Proposition}
\newtheorem{corollary}[theorem]{Corollary}

\theoremstyle{definition}

\newtheorem{remark}{Remark}[section]
\AtBeginEnvironment{remark}{%
  \pushQED{\qed}\renewcommand{\qedsymbol}{\(\circ\)}%
}
\AtEndEnvironment{remark}{\popQED}

\title{Typical geometry of self-repelling polymers in a constant force field}
\author{Kamil Khettabi\thanks{Kamil.Khettabi@unige.ch} }
\author{Yvan Velenik\thanks{Yvan.Velenik@unige.ch} \orcidlink{0000-0002-6459-3197} }
\affil{Section de mathématiques, Université de Genève}

\begin{document}

\maketitle

\begin{abstract}
  We study a general class of self-repelling polymers on \(\mathbb Z^2\), including the simple random walk, the self-avoiding walk and the repulsive Domb--Joyce model, in the presence of a constant force field acting on each monomer. Conditioning the polymer to have fixed length and fixed endpoints, we identify the limiting free energy and prove that typical trajectories concentrate exponentially near a deterministic macroscopic shape. This shape is characterized as the unique minimizer of a variational problem and can be interpreted as a geodesic of a height-dependent Finsler metric. We also analyze two limiting regimes with universal features: for small field strength, in the symmetric case, the geodesic is close to a classical catenary, while for large field strength it converges to a universal polygonal shape governed by the nearest-neighbor lattice constraint.
\end{abstract}

% \yvan{Open problems (if we decide to put some): Is the minimizer always convex? How universal is the behavior at small \(g\) when \(a\neq 0\)? Extension to a suitable large deviation principle?  What is the process describing fluctuations around the minimizer (not obvious as there can be fluctuations both spatial and in monomer distribution)? What happens in higher dimensions?}

%%%%%%%%%%%%%%%%%%%%%%%
\section{Introduction}
\label{sec:Intro}
%%%%%%%%%%%%%%%%%%%%%%%

Consider a long polymer chain, anchored at the origin and subject to a tensile force acting on its free end. The analysis of the effect of this force on the chain conformation, and in particular on its degree of extension, has been a focal point of research over the last few decades, driven largely by the development of single-molecule experimental techniques. This has also led to a number of mathematically rigorous works addressing these issues; see, for instance, \cite{vanRensburg+Whittington-2013, Beaton-2015, Ioffe+Velenik-2008, Ioffe+Velenik-2012}. In particular, \cite{Ioffe+Velenik-2008, Ioffe+Velenik-2012} analyze in detail a general class of polymer models with either attractive or repulsive self-interactions. They derive a full local limit theorem for the position of the free endpoint as well as for the statistics of local patterns, and prove that the transition from a collapsed phase to an extended phase is first order for self-attractive polymers.

In the present paper, we analyze a related problem: that of a polymer in a constant force field, such as a gravitational field, an electric field acting on a charged polymer, or, as a first approximation, a polymer in an extensional flow. Namely, we consider a general class of self-repelling polymer models on \(\bbZ^2\) with a constant force acting on each monomer. Assuming that both the length of the polymer and the positions of its endpoints are fixed, we investigate the typical conformation of the polymer. Our main results are the determination of the associated free energy, as well as a proof of concentration of typical realizations on the minimizer of a suitable variational problem.

The presence of a force acting on every monomer makes the problem quite different from the usual endpoint-pulling setup, with additional difficulties arising from the lack of translation invariance. The energetic contribution of the field depends on the whole trajectory, while the fixed-length constraint compels the polymer to distribute its microscopic length along the macroscopic curve. The resulting variational problem does not split into an elastic term plus an independent gravitational potential. Rather, the field modifies the local effective tension itself, leading to a height-dependent Finsler metric.

A central point of the paper is that this apparently complicated variational problem has a hidden convex structure. After passing to a dual formulation and reparameterizing the curve by monomer time, we construct the minimizer explicitly by a shooting argument and show it to be unique and stable. The case of the simple random walk is exactly solvable, and the geodesic can be determined explicitly. We also show that the small field regime leads to universal behavior, at least in the symmetric case where the two endpoints are pinned at the same height: the geodesic is closely approximated by a catenary of the same apparent length. Finally, in the large field regime, a restricted form of universality also applies, where the limiting geometry of the geodesic is dominated by lattice effects.

\medskip
In the remainder of this section, we introduce the relevant polymer models (Section~\ref{sec:Intro:PathEnsembles}) and the associated thermodynamic quantities (Section~\ref{sec:Intro:Thermo}). Our main results are presented in Section~\ref{sec:Intro:MainResults}. Sections~\ref{sec:Thermo} to~\ref{sec:ProofMain} are devoted to the proofs. A more detailed roadmap to the paper can be found in Section~\ref{sec:Intro:Roadmap}.

\paragraph{Notation conventions.}
For functions of a generic vector variable in \(\bbR^2\), we write \(\partial_i\) for differentiation with respect to the \(i\)-th coordinate. Thus \(\partial_i\fe(h)\) denotes the \(i\)-th component of \(\nabla\fe(h)\), and \(\partial_1\fe(C,s)\) means \(\partial_1\fe\) evaluated at \(h=(C,s)\). For functions introduced with explicitly named scalar variables, we differentiate with respect to those names, writing for example \(\partial_C,\partial_s,\partial_\lambda\), or \(\partial_\mu\). When a Lagrangian depends on a velocity variable \(v=(v_1,v_2)\), we may also write \(\partial_{v_i}\) for differentiation with respect to the \(i\)-th velocity component.

%%%%
\subsection{Self-repelling polymers}
\label{sec:Intro:PathEnsembles}
%%%%

%
\subsubsection{Underlying random walk and path ensembles}
Given a nearest-neighbor path \(\gamma = (\gamma_0, \ldots, \gamma_n)\) in \(\bbZ^2\), we shall use the following observables:
\begin{itemize}
  \item The displacement, \(X(\gamma) \defby \gamma_n-\gamma_0\).
  \item The length, \(\abs{\gamma} \defby n\).
  \item The local time at \(x\in\bbZ^2\), \(\loct_x(\gamma) \defby \sum_{i=0}^n \IF{\gamma_i = x}\).
  \end{itemize}
Given two paths \(\gamma\) and \(\gamma'\), we denote by \(\gamma\concat\gamma'\) the path obtained by concatenating \(\gamma\) and \(\gamma'\), that is, if \(\gamma=(\gamma_0,\gamma_1,\dots,\gamma_n)\) and \(\gamma'=(\gamma'_0,\gamma'_1,\dots,\gamma'_m)\), then \(\gamma\concat\gamma' \defby (\gamma^{\vphantom\prime}_0,\dots,\gamma^{\vphantom\prime}_n,\gamma'_1+(\gamma^{\vphantom\prime}_n-\gamma'_0),\dots,\gamma'_m+(\gamma^{\vphantom\prime}_n-\gamma'_0))\).

\subsubsection{Self-interaction}

Let \(\phi:\bbZ_{\geq 0}\to\bbR\cup\{+\infty\}\) be such that
\begin{equation}\label{eq:condphi}
  \phi(0) = \phi(1) = 0,\qquad\forall m,n\geq 0:\, \phi(m+n) \geq \phi(m) + \phi(n).
\end{equation}
To each path \(\gamma=(\gamma_0,\ldots,\gamma_n)\), we associate the self-interaction energy
\[
  \Phi(\gamma) \defby \sum_{x\in\bbZ^2} \phi(\loct_x(\gamma))
\]
and the weight
\[
  \W(\gamma) \defby e^{-\Phi(\gamma)}.
\]

The superadditivity condition in~\eqref{eq:condphi} ensures that the self-interaction is repulsive: it energetically penalizes multiple occupations of the same site, thereby favoring trajectories in which sub-paths tend to avoid each other. Three well-known examples are the simple random walk (which corresponds to the choice \(\phi \equiv 0\)), the self-avoiding walk (which corresponds to the choice \(\phi(\ell)=+\infty\cdot\IF{\ell>1}\)) and the Domb--Joyce model
(which corresponds to the choice \(\phi(\ell)=\beta\ell(\ell-1)/2\)).

\begin{remark}
The assumption that \(\phi(1)=0\) is just a normalization, and does not lead to a loss of generality: we will be working with polymers of fixed length, and replacing \(\phi(m)\) by \(\phi(m)-m\phi(1)\) preserves superadditivity and only multiplies the weight of every path of length \(n\) by the constant factor \(e^{\phi(1)(n+1)}\). The fixed-length probability measures are therefore unchanged; only the corresponding thermodynamic quantities are shifted by constants.
\end{remark}

\subsubsection{Path ensembles}
We consider probability measures on three natural families of paths: our main interest is in paths of fixed length and fixed displacement, but we shall also use paths of fixed length but arbitrary displacement, or paths of fixed displacement but arbitrary length. We have tried to use consistent notations, putting intensive parameters as superscripts and extensive parameters as subscripts.

\paragraph{Fixed-length, fixed displacement ensemble.}

Given \(n\geq 1\) and \(x\in\bbZ^2\), the fixed-length, fixed-displacement ensemble is defined by the
probability measure on paths starting at \(0\) given by
\[
  \p{}{n,x}(\gamma) \defby \frac1{\Z{}{n,x}}\, \IF{\abs{\gamma}=n,\, X(\gamma)=x} \W(\gamma),
\]
where \(\Z{}{n,x}\defby \sum_\gamma \IF{\abs{\gamma}=n,\, X(\gamma)=x} \W(\gamma)\).

\paragraph{Fixed-length ensemble.}

Given \(n\geq 1\) and \(h\in\bbR^2\), the fixed-length ensemble is defined by the probability measure on paths starting at \(0\) given by
\[
  \p{h}{n}(\gamma) \defby \frac1{\Z{h}{n}}\,e^{\spr{h}{X(\gamma)}} \IF{\abs{\gamma}=n} \W(\gamma),
\]
where \(\Z{h}{n}\defby \sum_\gamma e^{\spr{h}{X(\gamma)}} \IF{\abs{\gamma}=n} \W(\gamma)\), and \(\spr{x}{y}\) denotes the usual inner product in \(\bbR^2\).

\paragraph{Fixed-displacement ensemble.}

Given \(x\in\bbZ^2\) and \(\lambda\in\bbR\), the fixed-dis\-place\-ment ensemble is defined by the probability measure on paths starting at \(0\) given by
\[
  \p{\lambda}{x}(\gamma) \defby \frac1{\Z{\lambda}{x}}\, e^{-\lambda\abs{\gamma}} \IF{X(\gamma)=x} \W(\gamma),
\]
where \(\Z{\lambda}{x}\defby \sum_\gamma e^{-\lambda\abs{\gamma}} \IF{X(\gamma)=x} \W(\gamma)\).
It will be convenient to allow the case in which \(\abs{\gamma}=0\) (i.e., when \(\gamma\) is composed of the single vertex \(0\)). In particular, \(\Z{\lambda}{0} \geq 1\) even for the self-avoiding walk.

Note that, unlike the two previous ensembles, the partition function \(\Z{\lambda}{x}\) is infinite when \(\lambda\) is too small, and the measure \(\p{\lambda}{x}\) is not well defined when this happens. We shall return to this issue below.

\bigskip
In all three cases, we denote by \(\E{}{}\), \(\V{}{}\), with the appropriate superscripts and/or subscripts, the expectation and the variance in the corresponding ensemble: \(\E{\lambda}{x}\), \(\V{h}{n}\), etc.

%%%%
\subsection{Thermodynamic quantities}
\label{sec:Intro:Thermo}
%%%%

In this section, we introduce the basic thermodynamic quantities associated to the partition functions \(\Z{h}{n}\), \(\Z{\lambda}{x}\) and \(\Z{}{n,x}\).
Their main properties are discussed in Section~\ref{sec:Thermo}.
\subsubsection{Free energy}
For any \(h\in\bbR^2\), the \emph{free energy} is defined by
\[
  \fe(h) \defby \lim_{n\to\infty} \fe_n(h),
  \quad\text{where}\quad
  \fe_n(h) \defby \tfrac 1n \log\Z{h}{n}.
\]
It is easy to check, using standard subadditivity arguments, that the limit exists for all \(h\in\bbR^2\) and \(\Z{h}{n} \geq e^{n\fe(h)}\) for all \(n\) and \(h\).
Moreover, convexity of \(\fe\) follows immediately from Hölder's inequality, and rough bounds can be used to establish finiteness of \(\fe\).

\subsubsection{Inverse correlation length}
Let us write \(\lambdac \defby \fe(0)\). The fixed-displacement ensemble is well defined when \(\lambda>\lambdac\). Indeed, \[
  \forall \lambda > \lambdac,\qquad
  \Z{\lambda}{x} \leq \sum_{x\in\bbZ^2} \Z{\lambda}{x} = \sum_{n\geq 0} e^{-\lambda n} \Z{0}{n} = \sum_{n\geq 0} e^{-(\lambda-\lambdac)n + \sfo(n)} < \infty .
\]
On the other hand, the susceptibility diverges when \(\lambda\leq\lambdac\), since \(\Z{0}{n} \geq e^{n\fe(0)} = e^{n\lambdac}\) and thus
\[
  \forall\lambda\leq\lambdac,\qquad
  \sum_{x\in\bbZ^2} \Z{\lambda}{x} = \sum_{n\geq 0} e^{-\lambda n} \Z{0}{n} \geq \sum_{n\geq 0} e^{-(\lambda-\lambdac)n} = \infty.
\]

Given \(x=(x_1,x_2)\in\bbR^2\), we write \([x] \defby (\lfloor x_1\rfloor, \lfloor x_2\rfloor)\in\bbZ^2\).
For any \(\lambda > \lambdac\) and any \(x\in\bbR^2\), the \emph{inverse correlation length} is defined by
\[
  \icl_\lambda(x)
  \defby -\lim_{k\to\infty} \tfrac 1k \log\Z{\lambda}{[kx]}.
\]
Again, subadditivity arguments (see, for instance, \cite[Section~3.A.2]{Ioffe+Velenik-2008}) imply that the limit exists and defines a norm on \(\bbR^2\).

\subsubsection{The rate function}
Let \(v\in\bbR^2\) such that \(\normI{v}\leq 1\). Consider a sequence \((x_n)_{n\geq 1}\) such that \(\normI{x_n}\equiv n\pmod 2\) for all \(n\geq 1\), and \(\lim_{n\to\infty} x_n/n = v\). The \emph{rate function} is defined as the Legendre--Fenchel transform of \(\fe\):
\[
  J(v) \defby \sup_{h\in\bbR^2} \bigl\{\spr{h}{v}-\fe(h)\bigr\}.
\]

By the usual convex-duality argument for the fixed-length ensemble, this function also satisfies
\[
  J(v) = -\lim_{n\to\infty}\frac1n\log \Z{}{n,x_n}.
\]
Since the polymer is nearest-neighbor, the effective domain of \(J\) is the closed unit \(\ell^1\)-ball: \(J(v)<\infty\) if and only if \(\normI v\leq 1\). The logarithmic estimates used later are stated in Section~\ref{sec:logAsymptotics}.

Physically, \(J\) is thus the stretching free energy per monomer associated with a macroscopic displacement vector \(v\). We prefer to refer to it as the rate function, in order to avoid confusion with the free energy \(\fe\).

%%%%
\subsection{Main results}
\label{sec:Intro:MainResults}
%%%%

\begin{figure}[t]
  \includegraphics[width=\textwidth]{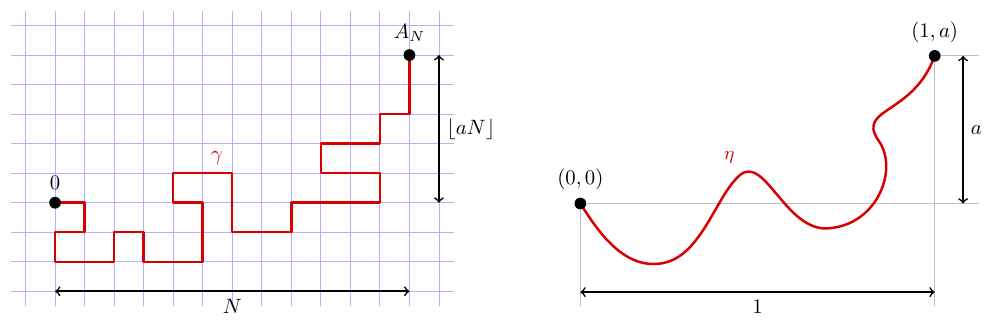}
  \caption{The microscopic setup (left), and the corresponding continuum setup (right).}
  \label{fig:Setup}
\end{figure}

Our central interest in this paper is to analyze the effect of a constant force field on the macroscopic geometry of a polymer of fixed length, with both endpoints fixed at a macroscopic distance from each other.

Let \(g>0\), \(a\in\bbR\) and \(\alpha>1+\abs{a}\). For each \(N\in\bbZ_{>0}\), let \(A_N\defby (N, \lfloor aN \rfloor)\) and \(L_N \in \{\lfloor \alpha N \rfloor, \lfloor \alpha N \rfloor+1\}\) be such that \(L_N\equiv\normI{A_N}\pmod 2\).
We consider the following probability measure on paths \(\gamma\) starting at \(0\) (see Fig.~\ref{fig:Setup}):
\[
  \p{g}{L_N,A_N}(\gamma) \defby \frac{1}{\Z{g}{L_N,A_N}} \IF{X(\gamma)=A_N} \IF{\abs{\gamma}=L_N} \W(\gamma) \exp\Bigl( - \frac gN \sum_{k=1}^{L_N} (\gamma_k)_2 \Bigr),
\]
where
\[
  \Z{g}{L_N,A_N} \defby \sum_{\substack{\gamma: 0 \to A_N \\ \abs{\gamma}=L_N}} \W(\gamma) \exp\Bigl( - \frac gN \sum_{k=1}^{L_N} (\gamma_k)_2 \Bigr).
\]
\begin{remark}
  We do not explicitly introduce an inverse temperature into our notation, as this parameter remains fixed throughout our analysis. This choice does not alter the behavior of either the self-avoiding walk or the simple random walk, since in both cases the self-interaction is unaffected; specifically, \(\Phi = \beta\Phi\) for all \(\beta>0\). In particular, increasing \(\beta\) is precisely equivalent to increasing \(g\). For more general self-interactions, changing \(\beta\) affects the relevant thermodynamic quantities in non-trivial ways. Nevertheless, our results apply to any repulsive self-interaction, and varying \(\beta\) does not alter this property. Consequently, the effect of varying \(\beta\) can be deduced from our results, provided a sufficient understanding of the \(\beta\)-dependence of these quantities is available.
\end{remark}

\subsubsection{Variational problems}
Let \(\alpha>0\), \(g>0\), and \(a\in\bbR\). We denote by \(\mathcal{AC}\) the set of absolutely continuous curves
\[
  \MP:[0,1]\to\bbR^2,
  \qquad
  \MP(0)=(0,0),\quad \MP(1)=(1,a).
\]
We shall consider two variational problems. The first one is written in terms of the inverse correlation length, while the second one is its dual formulation in terms of the rate function \(J\).

The primal functional is
\[
  \calA(\MP,\lambda)
  \defby
  \int_0^1 \icl_{\lambda+g\MP_2(t)}(\dot\MP(t))\,\dd t - \lambda\alpha,
\]
defined on pairs \((\lambda,\MP)\in\bbR\times\mathcal{AC}\) for which \(\mu_{\lambda,\MP}(t)\defby \lambda+g\MP_2(t) > \lambdac\) for a.e.\ \(t\), and with value \(+\infty\) otherwise.

\smallskip\noindent
\textit{\textbf{Primal variational problem:} Minimize \(\calA(\MP,\lambda)\) over all pairs \((\lambda,\MP)\in\bbR\times\mathcal{AC}\) satisfying the constraint
\[
  \int_0^1 \partial_\mu\icl_\mu(\dot\MP(t)) \vert_{\mu=\lambda+g\MP_2(t)} \,\dd t = \alpha.
\]}

\medskip
The dual functional is
\[
  \calA^*(\MP,\rho) \defby
  \int_0^1 \Bigl[ g\rho(t)\MP_2(t) + \rho(t)J\bigl(\dot\MP(t)/\rho(t)\bigr)\Bigr] \dd t,
\]
defined on pairs \(\MP\in\mathcal{AC}\) and \(\rho\in L^1([0,1],\bbR_{\geq 0})\), with the usual convention for the perspective function:
\[
  \rho J(v/\rho)=
  \begin{cases}
    0, & \rho=0,\ v=0,\\
    +\infty, & \rho=0,\ v\ne0.
  \end{cases}
\]

\smallskip\noindent
\textit{\textbf{Dual variational problem:} Minimize \(\calA^*(\MP,\rho)\) under the constraint
\[
  \int_0^1 \rho(t)\,\dd t=\alpha.
\]}

\medskip
A useful way to interpret the dual problem is to introduce the monomer-time variable
\begin{equation}\label{eq:Intro:MonomerTime}
  q(t) \defby \int_0^t \rho(u)\,\dd u.
\end{equation}
Since the dual constraint imposes \(q(1)=\alpha\), this reparameterizes the curve on the interval \([0,\alpha]\). If
\(\eta(q(t))=\MP(t)\), then
\[
  \calA^*(\MP,\rho)
  =
  \int_0^\alpha \bigl[J(\eta'(q))+g\eta_2(q)\bigr]\dd q.
\]
Thus the dual variational problem is equivalently the minimization of the fixed-time functional
\[
  \calB(\eta)
  =
  \int_0^\alpha \bigl[J(\eta'(q))+g\eta_2(q)\bigr]\dd q,
\]
over absolutely continuous curves \(\eta:[0,\alpha]\to\bbR^2\) joining \((0,0)\) to \((1,a)\).

\bigskip
In the sequel, we shall say that \((\lambda,\MP)\) is primal-admissible if \(\MP\in\mathcal{AC}\), \(\calA(\MP,\lambda)<\infty\), and the primal microscopic-length constraint is satisfied.
Similarly, \((\MP,\rho)\) is dual-admissible if \(\MP\in\mathcal{AC}\), \(\rho\in L^1([0,1],\bbR_{\geq0})\), \(\calA^*(\MP,\rho)<\infty\), and \(\int_0^1\rho(t)\,\dd t=\alpha\).
Finally, a curve \(\eta\) is admissible for \(\calB\) if \(\eta\in\mathcal{AC}([0,\alpha];\bbR^2)\), \(\eta(0)=(0,0)\), \(\eta(\alpha)=(1,a)\), and \(\calB(\eta)<\infty\).

\begin{remark}
  The primal functional and the primal constraint are invariant under increasing absolutely continuous reparameterizations of the curve. The dual problem has the same geometric invariance, provided the density \(\rho\) is transformed accordingly. Thus no arbitrary parameterization of the curve is selected by the variational problem.

  The density \(\rho\) should nevertheless be viewed as part of the limiting object: the measure \(\rho(t)\,\dd t\) records the inhomogeneous density of monomers along the macroscopic trace. Equivalently, after passing to monomer time, this information is encoded in the curve \(\eta\).
\end{remark}

The primal and dual variational problems are equivalent in the following sense.
\begin{theorem}\label{thm:Main:EquVP}
  The primal and dual variational problems have the same minimal value.
  Moreover, their minimizers are related as follows.

  If \((\lambda,\MP)\) is a primal minimizer, then
  \[
    \rho(t)
    =
    \partial_\mu\icl_\mu(\dot\MP(t))
    \vert_{\mu=\lambda+g\MP_2(t)}
  \]
  defines a dual minimizer \((\MP,\rho)\).

  Conversely, if \((\MP,\rho)\) is a dual minimizer, then there exists \(\lambda_*\in\bbR\) such that \((\lambda_*,\MP)\) is a primal minimizer.

  Moreover, for corresponding minimizers,
  \[
    \calA(\MP,\lambda)=\calA^*(\MP,\rho).
  \]
\end{theorem}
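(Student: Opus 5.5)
The plan rests on the pointwise duality between \(\icl_\mu\) and \(J\), which I take from the thermodynamic section (Section~\ref{sec:Thermo}):
\[
  \icl_\mu(v)=\inf_{\rho>0}\bigl\{\rho J(v/\rho)+\mu\rho\bigr\},
  \qquad \mu>\lambdac ,
\]
together with the identification of the minimizing \(\rho\) as \(\partial_\mu\icl_\mu(v)\). The latter follows by an envelope argument, since \(\mu\mapsto\icl_\mu(v)\) is concave as an infimum of affine functions of \(\mu\). The duality itself is the statement that \(\Z{\lambda}{x}=\sum_n e^{-\lambda n}\Z{}{n,x}\) is dominated on the exponential scale by its largest term. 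Equivalently, the dual statement is \(\rho J(v/\rho)=\sup_{\mu>\lambdac}\{\icl_\mu(v)-\mu\rho\}\), the perspective function being convex in \((v,\rho)\).

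With this in hand the main step is a Lagrangian weak-duality inequality. Fix a dual-admissible \((\MP,\rho)\) and any \(\lambda\) with \(\lambda+g\MP_2>\lambdac\). Apply the pointwise inequality with \(\mu=\lambda+g\MP_2(t)\):
\[
  \icl_{\lambda+g\MP_2(t)}(\dot\MP(t))\le \rho(t)J(\dot\MP(t)/\rho(t))+g\rho(t)\MP_2(t)+\lambda\rho(t).
\]
Integrating and using \(\int\rho=\alpha\) gives \(\calA(\MP,\lambda)\le\calA^*(\MP,\rho)\) for every such \(\lambda\), hence
\[
  \sup_{\lambda}\calA(\MP,\lambda)\le \calA^*(\MP,\rho).
\]
For fixed \(\MP\), the map \(\lambda\mapsto\calA(\MP,\lambda)\) is concave. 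By the envelope formula, its derivative is \(\int_0^1\partial_\mu\icl_\mu(\dot\MP)\,\dd t-\alpha\). So the primal constraint is exactly the criticality condition, and a primal-admissible \((\lambda,\MP)\) satisfies \(\calA(\MP,\lambda)=\max_{\lambda'}\calA(\MP,\lambda')\).

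The primal problem therefore reads \(\inf_\MP\sup_\lambda\calA\), restricted to those \(\MP\) for which the supremum is attained. It remains to show that for each \(\MP\), \(\sup_\lambda\calA(\MP,\lambda)=\inf_\rho\calA^*(\MP,\rho)\). This is a one-dimensional minimax: it is the convex dual of minimizing \(\int[\rho J(\dot\MP/\rho)+g\rho\MP_2]\) subject to the single linear constraint \(\int\rho=\alpha\), with \(\lambda\) as the multiplier. I would prove it directly. When the supremum is attained at \(\lambda\), set \(\rho=\partial_\mu\icl_\mu(\dot\MP)|_{\mu=\lambda+g\MP_2}\). Then the pointwise inequality is an equality for a.e.\ \(t\), and the constraint gives \(\int\rho=\alpha\). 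Hence \(\calA^*(\MP,\rho)=\calA(\MP,\lambda)\), so that \(\rho\) is optimal for this \(\MP\) and equality of values holds.

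This already gives the first direction. A primal minimizer \((\lambda,\MP)\) yields \((\MP,\rho)\) with equal value, and for any dual-admissible pair, optimizing \(\rho\) for its curve can only lower \(\calA^*\) down to \(\sup_\lambda\calA\), which is at least the primal infimum. For the converse, take a dual minimizer \((\MP,\rho)\); \(\rho\) must then minimize \(\calA^*(\MP,\cdot)\) under \(\int\rho=\alpha\). The Lagrange multiplier theorem for this convex problem with one constraint produces \(\lambda_*\) such that \(\rho(t)\) minimizes \(\rho J(\dot\MP/\rho)+(g\MP_2+\lambda_*)\rho\) pointwise. By the envelope identification this gives \(\rho=\partial_\mu\icl_\mu\) at \(\mu=\lambda_*+g\MP_2\), so \((\lambda_*,\MP)\) satisfies the primal constraint, has equal value, and is a primal minimizer.

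The main obstacle is the boundary behaviour. One issue is the region \(\mu\downarrow\lambdac\), where \(\partial_\mu\icl_\mu\) may blow up. The other is points with \(\rho=0\), where the perspective convention applies; these force \(\dot\MP=0\), and one must check they are consistent with \(\partial_\mu\icl_\mu(0)=0\). Both enter through the existence of the multiplier \(\lambda_*\) with \(\lambda_*+g\MP_2>\lambdac\) a.e. I would handle this with a strict monotonicity and continuity argument for \(\lambda\mapsto\int\partial_\mu\icl_\mu\,\dd t\), using \(\alpha>1+\abs a\) to rule out the degenerate endpoint cases.
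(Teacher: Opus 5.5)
Your weak-duality inequality is correct. So is your observation that primal admissibility is exactly criticality of the concave map \(\lambda\mapsto\calA(\MP,\lambda)\). Together with the saturated choice \(\rho=\partial_\mu\icl_\mu\), this gives \(\inf\calA\geq\inf\calA^*\), as in the paper.

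\textbf{The gap.} The reverse inequality is not proved, and your first direction relies on it. You claim that for every dual-admissible \((\MP',\rho')\), \(\sup_\lambda\calA(\MP',\lambda)\) ``is at least the primal infimum''. That holds only if \(\MP'\) itself admits a multiplier solving the primal constraint, since the primal infimum runs only over such curves. The dual class is strictly larger. Set \(F(\lambda)\defby\int_0^1\partial_\mu\icl_\mu(\dot\MP'(t))\vert_{\mu=\lambda+g\MP'_2(t)}\,\dd t\). Since \(\partial_\mu\icl_\mu(v)=\normII{v}/\normII{\nabla\fe(h_\mu(v))}\), point~\ref{lem:FreeEnergy:Largeh} of Lemma~\ref{lem:FreeEnergy} shows that \(F\) decreases strictly to \(\int_0^1\normI{\dot\MP'}\,\dd t\) as \(\lambda\to\infty\).

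Now take a dual-admissible pair with \(\normI{\dot\MP'}=\rho'\) a.e., for example the polygon \(\bar\eta\) of Proposition~\ref{pro:largeg} run as \(\MP'(t)=\bar\eta(\alpha t)\) with \(\rho'\equiv\alpha\). Then \(F>\alpha\) for every \(\lambda\), so no primal multiplier exists, yet \(\calA^*(\MP',\rho')<\infty\). A priori the same can happen when \(F\) stays below \(\alpha\) all the way down to the critical threshold \(\lambda+g\min\MP'_2=\lambdac\). So even a complete per-curve minimax would only give \(\inf\calA^*=\inf_{\text{all curves}}\sup_\lambda\calA\leq\inf\calA\). Your monotonicity argument for \(F\) cannot repair this for arbitrary competitors. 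The hypothesis \(\alpha>1+\abs{a}\) only guarantees that \emph{some} curves, such as the segment, have slack.

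\textbf{How the paper closes it.} The paper never argues curve by curve. In monomer time \(\inf\calA^*=\inf\calB\). The explicit shooting curve \(\eta_*\) is shown to minimize \(\calB\) over \emph{all} admissible \(\eta\) by a calibration. With \(h_*(q)=(C_*,s_*+gq)=\nabla J(\eta_*'(q))\), convexity of \(J\) and an integration by parts using \(h_*'=(0,g)\) give \(\calB(\eta)\geq\calB(\eta_*)\). Since \(\eta_*\) comes from the primal-admissible pair \((\lambda_*,\MP_*)\), this yields \(\inf\calA\leq\calB(\eta_*)=\inf\calA^*\). For the converse, strict convexity of \(J\) forces any dual minimizer to have monomer-time representation \(\eta_*\). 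The multiplier is then simply \(\lambda_*=\fe(C_*,s_*)\), and \(\lambda_*+g\MP_2(t)=\fe(h_*(q(t)))>\lambdac\) holds automatically, so none of your boundary issues arises.

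\textbf{What your converse would still need.} Your converse presupposes three things:
\begin{enumerate}
  \item the existence of a dual minimizer, which needs a direct-method argument;
  \item a Lagrange multiplier for the inner \(\rho\)-problem, whose qualification condition \(\int\normI{\dot\MP}<\alpha\) is exactly what is unknown for an abstract minimizer;
  \item the exclusion of pieces with \(\lambda_*+g\MP_2=\lambdac\), \(\dot\MP=0\), \(\rho>0\). There the pointwise problem leaves \(\rho\) undetermined while \(\partial_\mu\icl_\mu(0)=0\). Your \(\rho=0\) discussion only covers the opposite case.
\end{enumerate}
All three can be supplied: Arzel\`a--Ascoli on \(\calB\) for existence, essential smoothness of \(J\) at \(\normI{v}=1\) to rule out saturation, and a first-variation argument to rule out stalls. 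But that essentially redoes what the calibration argument gives directly.
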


\begin{remark}
The primal/dual equivalence above is stated at the level of minimizers. This is intentional. Starting from a primal-admissible pair \((\lambda,\MP)\), one can formulate a corresponding dual pair by using the subdifferential form of the perspective identity. Namely, whenever \(\rho(t)\in \partial_\mu \icl_\mu(\dot\MP(t)) \vert_{\mu=\lambda+g\MP_2(t)}\) can be chosen in \(L^1\), the associated pair \((\MP,\rho)\) is dual-admissible and satisfies \(\calA(\MP,\lambda)=\calA^*(\MP,\rho)\).
In the strictly supercritical regime, where \(\lambda+g\MP_2(t)>\lambdac\) uniformly, the subdifferential reduces to the usual derivative and this correspondence is smooth.

The converse direction is more restrictive for arbitrary dual-admissible pairs. A finite value of \(\calA^*(\MP,\rho)\) imposes the kinematic constraint encoded by the perspective function, but it does not by itself ensure that the pair comes from a primal pair. For this, one would need the existence of a single global multiplier \(\lambda\) such that \(\mu(t) = \lambda + g\MP_2(t)\) and such that the perspective representation of \(\icl_{\mu(t)}(\dot\MP(t))\) is saturated for a.e.\ \(t\). This is an additional compatibility condition, not part of dual admissibility.

Thus the dual formulation does not merely allow more density profiles on a fixed trace. Even after projection onto traces, the dual admissible class is in general larger than the class of traces arising from primal-admissible curves. Indeed, dual admissibility only requires the existence of some monomer-density profile satisfying the mass constraint and making the perspective term finite; equivalently, since \(\mathrm{dom}(J)=\setof{v\in\bbR^2}{\normI v\leq 1}\), \(\normI{\dot\MP(t)}\leq \rho(t)\) for a.e.\ \(t\). By contrast, primal admissibility requires the existence of a single global multiplier \(\lambda\) for which the perspective identity is saturated along the curve. The dual formulation is therefore the natural one for arbitrary monomer-density profiles and for the coarse-graining estimates in Section~\ref{sec:ProofMain}, while the primal Finsler formulation describes the saturated pairs and is most useful for identifying the minimizing trajectory.

At the minimizer constructed below, the critical barrier keeps the effective parameter uniformly in the strictly supercritical regime; consequently all quantities are smooth there, and the primal/dual correspondence is unambiguous.
\end{remark}
The two formulations will both be used in the proof: the primal problem is the natural geometric formulation, while the dual problem becomes particularly simple after passing to monomer time. By Theorem~\ref{thm:Main:EquVP}, it is enough to identify the minimizer in one of the two formulations. We state the result in primal variables, which gives the geometric interpretation, but the explicit construction is most naturally expressed in monomer time.
\begin{theorem}\label{thm:SolutionPrimalVP}
  Assume that \(g>0\), \(a\in\bbR\), and \(\alpha>1+\abs a\). The primal variational problem possesses a unique minimizing trace among absolutely continuous curves.

  More precisely, there exist unique constants \(C_*>0\), \(s_*\in\bbR\), and
  \[
    \lambda_*=\fe(C_*,s_*),
  \]
  such that the monomer-time curve
  \[
    \eta_*(q)
    =
    \int_0^q \nabla\fe(C_*,s_*+gu)\,\dd u,
    \qquad q\in[0,\alpha],
  \]
  connects \((0,0)\) to \((1,a)\). Every primal minimizer is of the form
  \[
    \MP(t)=\eta_*(\theta(t))
  \]
  for some nondecreasing absolutely continuous map
  \[
    \theta:[0,1]\to[0,\alpha],
    \qquad
    \theta(0)=0,\quad \theta(1)=\alpha.
  \]
  Conversely, every such reparameterization is a primal minimizer, with
  \(\lambda=\lambda_*\).

  The trace of \(\eta_*\) is the graph of an analytic function \(y_*\).
  In graph parameterization, the minimizer is therefore unique and analytic.
  Finally, there exists \(\delta_*>0\) such that
  \[
    \lambda_*+gy_*(x)\geq \lambdac+\delta_*
    \qquad\text{for all }x\in[0,1].
  \]
\end{theorem}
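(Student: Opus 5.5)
By Theorem~\ref{thm:Main:EquVP}, I will work with the dual problem in monomer time, i.e.\ minimize the fixed-time functional \(\calB(\eta)=\int_0^\alpha [J(\eta'(q))+g\eta_2(q)]\,\dd q\) over admissible \(\eta\). The integrand is convex in \((\eta,\eta')\) (convex in \(\eta'\), linear in \(\eta\)). Hence any curve satisfying the Euler--Lagrange / Hamiltonian system with the right boundary values is a global minimizer, and I only need to produce one. The first-order conditions read \(\frac{\dd}{\dd q}\nabla J(\eta')=g e_2\), so \(p(q)\defby\nabla J(\eta'(q))=(C,s+gq)\) for constants \(C,s\). Inverting by Legendre duality gives \(\eta'(q)=\nabla\fe(C,s+gq)\), which is exactly the form of \(\eta_*\). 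Since \(\fe\) is finite, convex, and (I will use from Section~\ref{sec:Thermo}) analytic and strictly convex on \(\bbR^2\), with \(\nabla\fe\) mapping \(\bbR^2\) diffeomorphically onto the open \(\ell^1\)-ball, the curves \(\eta_{C,s}(q)=\int_0^q\nabla\fe(C,s+gu)\,\dd u\) are well defined and smooth. Minimality is then checked directly: for any admissible \(\eta\), the subgradient inequality \(J(\eta')\ge J(\eta_*')+\spr{p_*}{\eta'-\eta_*'}\), integrated with \(p_*'=g e_2\), gives \(\calB(\eta)\ge\calB(\eta_*)\) after integration by parts, using the shared endpoints.

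\textbf{Shooting step.} I will show that \(\Psi(C,s)\defby\eta_{C,s}(\alpha)=\int_0^\alpha\nabla\fe(C,s+gu)\,\dd u\) hits \((1,a)\) at a unique \((C,s)\). Observe that \(\Psi=\nabla F\) with \(F(C,s)\defby\frac1g\int_s^{s+g\alpha}\fe(C,\sigma)\,\dd\sigma\cdot g/g\), i.e.\ \(F(h)=\int_0^\alpha\fe(h+gue_2)\,\dd u\), which is strictly convex. The point \((1,a)\) is the unique minimizer of \(F(h)-\spr{h}{(1,a)}\). Existence of that minimizer requires coercivity, and this is where \(\alpha>1+\abs a\) enters: \(F(h)/\abs h\to\) the support function of \(\alpha\) times the unit \(\ell^1\) ball in direction \(h/\abs h\), which is \(\alpha\norm{h}_\infty>\spr{h}{(1,a)}\) for \(h\ne0\), since \(\normI{(1,a)}<\alpha\). (I expect to use \(\fe(h)\sim\norm{h}_\infty\) at infinity, which follows from the nearest-neighbor constraint and \(\lambdac\) finite.) Strict convexity gives uniqueness. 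The sign \(C_*>0\) follows because the first component of \(\nabla\fe(C,\cdot)\) has the sign of \(C\) by symmetry of \(\fe\) under reflection, and it must integrate to \(1>0\). Then \(\eta_{*,1}'>0\), so the trace is a graph \(x\mapsto y_*(x)\), analytic by analyticity of \(\nabla\fe\) and the inverse function theorem.

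\textbf{Uniqueness and the primal side.} Uniqueness of the minimizer of \(\calB\) follows from strict convexity of \(J\) on the open ball. Two minimizers would give \(\eta'=\eta_*'\) a.e.\ via equality in the subgradient inequality, since \(p_*\) is a point of differentiability with unique maximizer. One must rule out \(\eta'\) on the boundary of the \(\ell^1\) ball; there, strict convexity of \(J\) along segments still suffices. Translating back through \(q(t)=\int_0^t\rho\), a dual minimizer \((\MP,\rho)\) gives \(\MP=\eta_*\circ\theta\) with \(\theta=q\) nondecreasing, absolutely continuous and onto \([0,\alpha]\); Theorem~\ref{thm:Main:EquVP} transfers this to primal minimizers. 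For \(\lambda_*\), I will identify the multiplier via the perspective/Legendre relation \(\icl_\mu(v)=\sup\{\spr{h}{v}:\fe(h)\le\mu\}\). This makes \(\mu(q)=\fe(p(q))\), and along the curve \(\mu=\lambda+g\eta_2\). Checking \(\frac{\dd}{\dd q}\fe(p(q))=g\,\partial_2\fe(p)=g\eta_2'\) gives \(\lambda_*=\fe(p(0))=\fe(C_*,s_*)\). The barrier then reads \(\lambda_*+gy_*=\fe(C_*,s_*+gq)\ge\fe(C_*,\cdot)\ge\fe(C_*,0)>\fe(0,0)=\lambdac\). Here I use that \(\fe(h)\ge\fe(h_1,0)\), by symmetry and convexity, and that \(\fe(h)>\lambdac\) for \(h\ne0\), by strict convexity and symmetry. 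This gives a uniform \(\delta_*\).

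\textbf{Main obstacle.} The main obstacle is regularity input: strict convexity and analyticity of \(\fe\), and the fact that \(\nabla\fe\) is a bijection onto the open \(\ell^1\) ball, for general \(\phi\). I will take these from Section~\ref{sec:Thermo} (Ornstein--Zernike-type theory as in~\cite{Ioffe+Velenik-2008}). Handling possible boundary velocities \(\normI{\eta'}=1\) in the uniqueness argument also needs care.
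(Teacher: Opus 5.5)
Your proposal is correct. Its variational core matches the paper: minimality of \(\eta_*\) for \(\calB\) via the tangent inequality \(J(\eta')\geq J(\eta_*')+\spr{h_*}{\eta'-\eta_*'}\) and integration by parts with \(h_*'=(0,g)\), uniqueness from the equality case, and passage to the primal problem through the perspective identity.

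The genuinely different part is the shooting step. The paper proceeds as follows:
\begin{itemize}
\item it fixes \(C>0\) and solves the vertical equation for a unique \(s(C)\);
\item it proves that \(C\mapsto\int_0^\alpha\partial_1\fe(C,s(C)+gu)\,\dd u\) is strictly increasing via \(AD-B^2>0\);
\item it obtains existence from the intermediate value theorem after computing the limits \(0\) and \(\alpha-\abs a\) at \(C\downarrow0\) and \(C\to\infty\) (Lemma~\ref{lem:sC-Psi}). This needs point~\ref{lem:FreeEnergy:Largeh} of Lemma~\ref{lem:FreeEnergy} and a case analysis on the sign of \(a\).
\end{itemize}
You instead recognize the endpoint map as \(\nabla F\) with \(F(h)=\int_0^\alpha\fe(h+gue_2)\,\dd u\). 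Strict convexity of \(F\) gives injectivity; the paper's \(AD-B^2>0\) is exactly positivity of the Schur complement of \(\nabla^2F\). Existence follows from the coercivity bound \(F(h)-\spr{h}{(1,a)}\geq(\alpha-1-\abs a)\normsup h-g\alpha^2/2\), which uses only \(\fe\geq\normsup{\cdot}\). Your route is shorter, makes the hypothesis \(\alpha>1+\abs a\) appear precisely as the coercivity condition, and yields \(C_*>0\) a posteriori. Your explicit barrier \(\delta_*=\fe(C_*,0)-\lambdac\), obtained from evenness and convexity in the second variable, likewise replaces the paper's compactness argument.

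A few small corrections.
\begin{itemize}
\item Boundary velocities need no separate treatment. Your own observation that \(\partial\fe(h_*(q))=\{\nabla\fe(h_*(q))\}\) already forces \(\eta'=\eta_*'\) at equality. Equivalently, equality makes \(J\) affine on \([\eta_*'(q),\eta'(q)]\), whose points other than \(\eta'(q)\) are interior. So the remark about strict convexity along boundary segments can be dropped.
\item Section~\ref{sec:Thermo} gives analyticity of \(\fe\) and a positive definite Hessian only on \(\bbR^2\setminus\{0\}\). You should therefore not claim analyticity on all of \(\bbR^2\), nor that \(\nabla\fe\) is a diffeomorphism onto the open ball. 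You do not need either. Analyticity is used only along \(\setof{(C_*,s_*+gq)}{q\in[0,\alpha]}\), which avoids the origin because \(C_*>0\). Strict convexity of \(\fe\) follows from \(C^1\) regularity plus positive definiteness of the Hessian off a single point.
\item In the paper, Theorem~\ref{thm:Main:EquVP} is itself proved from the minimality and uniqueness results for \(\calB\). Citing it is non-circular only because you establish those facts independently. The transfer you actually need is just Lemma~\ref{lem:perspective}.
\end{itemize}
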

\begin{remark}
  The primal constraint acts as a critical barrier. Indeed, by point~\ref{lem:InvCorrLength:dmunumu} of Lemma~\ref{lem:InvCorrLength}, \(\partial_\mu\icl_\mu(v)\longrightarrow+\infty\) as \(\mu\downarrow\lambdac\) and \(v\neq 0\). Thus an admissible curve cannot spend a positive amount of moving time at the critical boundary \(\mu=\lambdac\). For arbitrary competitors, this does not by itself imply a uniform positive distance from criticality. For the minimizing curve, however, the stronger conclusion holds: by Theorem~\ref{thm:SolutionPrimalVP}, there exists \(\delta_*>0\) such that \(\lambda_* + gy_*(x) \geq \lambdac + \delta_*\) for all \(x\in[0,1]\).

  The same phenomenon has a dual interpretation. Although the dual admissible class contains no explicit critical height, any pair which saturates the primal/dual relation inherits the same obstruction. In the perspective identity \(\icl_\mu(v) = \inf_{\rho\geq 0} \rho\bigl(\mu+J(v/\rho)\bigr)\), the minimizing density is \(\rho=\partial_\mu\icl_\mu(v)\). Hence, for \(v\neq 0\), approaching the critical boundary \(\mu=\lambdac\) forces the corresponding monomer density to diverge. Thus the critical singularity of the primal formulation appears in the dual formulation as a loss of compactness of the density variable. The constraint \(\int\rho=\alpha\) prevents such a behavior on a set of positive macroscopic measure for near-minimizing saturated pairs.
\end{remark}

\begin{remark}
  Recall that a Finsler metric is a generalization of a Riemannian metric in which the infinitesimal cost of a displacement \(v\) at a point \(x\) is given by a norm \(F(x,v)\), not necessarily induced by a quadratic form; see~\cite{Bao+Chern+Shen-2000} for background. The length of a curve is then
  \[
    \int F(\MP(t),\dot\MP(t))\,\dd t.
  \]
  In our case, for fixed \(\lambda\), the primal action has this form with \(F_\lambda((x_1,x_2),v)=\icl_{\lambda+gx_2}(v)\). Thus the minimizing curve can be viewed as a geodesic for a height-dependent Finsler metric.

  This Finsler interpretation is useful for understanding the geometry of the limiting curve, but it is not the most natural framework for proving global uniqueness. The primal formulation describes the saturated pairs associated with a global multiplier \(\lambda\), whereas the dual problem is posed on the larger class of trace-density pairs \((\MP,\rho)\). Proving uniqueness in the dual formulation therefore yields uniqueness in the primal formulation, while the converse would not in general control all dual competitors.

  For this reason, in Section~\ref{sec:VP} we prove global uniqueness using the convex-analytic structure of the monomer-time functional. The Finsler viewpoint remains useful for interpreting the minimizing trace, while the dual formulation provides the appropriate variational setting for proving uniqueness and stability.
\end{remark}

\begin{remark}
  It is worth emphasizing that the variational problem does not separate into a purely elastic term plus an independent gravitational potential. One might have expected a functional of the form
  \[
    \int \tau(\dot\MP(t))\,\dd t + \int V(\MP_2(t))\,\dd t,
  \]
  as in many geometric variational problems with an external field. This is not what happens here. In the primal formulation, the field enters directly into the local inverse correlation length through the parameter \(\mu_{\lambda,\MP}(t) = \lambda+g\MP_2(t)\), so that the effective local metric itself depends on height. In the dual formulation, the two effects are also coupled through the monomer density \(\rho\):
  \[
    \int_0^1 \bigl[ g\rho(t)\MP_2(t) + \rho(t)J(\dot\MP(t)/\rho(t)) \bigr] \dd t.
  \]
  Thus the gravitational field affects not only the macroscopic position of the curve, but also the local allocation of microscopic length along it. Although, after passing to monomer time, the dual functional becomes
  \[
    \int_0^\alpha [J(\eta'(q))+g\eta_2(q)]\,\dd q,
  \]
  this separation is a feature of the microscopic monomer-time parameterization: the optimization over the local monomer density has already been absorbed into the change of variables.
\end{remark}

\begin{remark}
  When \(a=0\), symmetry of the functional and uniqueness imply that the minimizer is symmetric under reflection through the line \(x=\tfrac12\). One might further expect this symmetric minimizer to be convex. This is true for the simple random walk, and Fig.~\ref{fig:SimSAW} suggests that it might also be true for the self-avoiding walk. However, establishing whether this holds in the anisotropic setting considered here seems to be a nontrivial problem. While it is easy to exclude the presence of local maxima away from the boundary, full convexity requires more. The first-order equations derived in Section~\ref{sec:VP} reduce the convexity question to an explicit inequality involving first- and second-order derivatives of \(\fe\). The validity of this inequality for general self-repelling polymers does not seem to be obvious, although it can be verified for the simple random walk.
\end{remark}

\subsubsection{Statement of the main results}
Our first main result provides an expression for the limiting free energy density
\[
  \Psi(g,a,\alpha) \defby - \lim_{N\to\infty} \frac{1}{N}\log\Z{g}{L_N,A_N}.
\]
\begin{theorem}\label{thm:FreeEnergy}
  The limiting free energy density exists and is given by
  \[
    \Psi(g,a,\alpha) = \calA(\MP_*,\lambda_*) = \calA^*(\MP_*,\rho_*),
  \]
  where the minimizers are those described in Theorem~\ref{thm:SolutionPrimalVP}.
\end{theorem}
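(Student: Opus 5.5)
We will prove matching upper and lower bounds on \(\frac1N\log\Z{g}{L_N,A_N}\), both phrased in the dual (monomer-time) formulation. By Theorem~\ref{thm:Main:EquVP}, it suffices to show that \(\Psi\) equals \(\min\calB = \calB(\eta_*)\). Theorem~\ref{thm:SolutionPrimalVP} then identifies this value with \(\calA(\MP_*,\lambda_*)\). The microscopic input is the logarithmic asymptotics of Section~\ref{sec:logAsymptotics}: uniformly over \(x\) and \(n\) with \(n\) large and \(\normI{x}\leq n\) of the right parity,
\[
  \Z{}{n,x} = \exp\bigl(-nJ(x/n)+\sfo(n)\bigr).
\]
We also need its consequence that the weight \(\W\) is supermultiplicative under concatenation, \(\W(\gamma\concat\gamma')\geq\W(\gamma)\W(\gamma')\) for the appropriate direction. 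This follows from superadditivity of \(\phi\), which yields \(\Phi(\gamma\concat\gamma')\geq\Phi(\gamma)+\Phi(\gamma')\) up to the shared vertex. So concatenation decreases the weight, and this is the direction needed for the upper bound.

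\textbf{Lower bound.} Fix \(K\) large and split monomer time \([0,\alpha]\) into \(K\) equal pieces. Discretize \(\eta_*\) at the points \(q_j=j\alpha/K\). For each piece, consider paths of length \(\approx N\alpha/K\) from \(\lfloor N\eta_*(q_{j-1})\rfloor\) to \(\lfloor N\eta_*(q_{j})\rfloor\), with parities adjusted by \(O(1)\) steps. To deal with the self-interaction of the concatenation, restrict each piece to paths that stay inside a thin strip or cone around its segment and, for instance, are ``bridges'' in a fixed direction. This is possible since \(\normI{\eta_*'}<1\), which follows from the strict supercriticality \(\lambda_*+gy_*\geq\lambdac+\delta_*\) (so \(\nabla\fe\) lies strictly inside the unit \(\ell^1\)-ball). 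Then the pieces interact only near junctions, and the restricted partition functions retain the exponential rate \(J\). The diamond/bridge restriction costs \(e^{\sfo(n)}\) by standard Ornstein--Zernike/skeleton arguments of \cite{Ioffe+Velenik-2008}. On each piece, heights deviate from \(N\eta_{*,2}\) by at most \(\epsilon N\), so the field contributes
\[
  -\tfrac gN\sum_k(\gamma_k)_2 \geq -N\Bigl(g\sum_j\tfrac\alpha K\eta_{*,2}(q_j)+O(\epsilon)\Bigr).
\]
Letting \(N\to\infty\), then \(\epsilon\to0\) and \(K\to\infty\), and using continuity of \(J\) on the interior of its domain, gives \(\liminf\frac1N\log\Z{g}{L_N,A_N}\geq-\calB(\eta_*)\).

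\textbf{Upper bound.} Here we coarse-grain in monomer time. Fix \(K\) and record the positions \(y_j=\gamma_{\lfloor jL_N/K\rfloor}\). There are at most \((2L_N+1)^{2K}=e^{\sfo(N)}\) choices of the skeleton \((y_j)\). Given the skeleton, drop the self-interaction between pieces, which only increases the weight by supermultiplicativity. Bound the field term below by using that each step moves by one, so within a piece the height is at least \(\min\{(y_{j-1})_2,(y_j)_2\}-L_N/K\). This yields
\[
  \Z{g}{L_N,A_N}\leq e^{\sfo(N)}\max_{(y_j)}\exp\Bigl(-N\sum_j\bigl[\tfrac\alpha K J\bigl(\tfrac{K}{\alpha N}(y_j-y_{j-1})\bigr)+g\tfrac\alpha K\tfrac{(y_j)_2}{N}\bigr]+O(N/K)\Bigr).
\]
The sum in the exponent is \(\calB\) evaluated on the piecewise linear interpolation \(\eta\) of \(y_j/N\), up to \(O(\alpha^2 g/K)\), since \(J\) is convex and a linear \(\eta_2\) integrates exactly up to endpoint error. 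Hence it is at least \(\min\calB-O(1/K)\). Taking \(K\to\infty\) gives the matching bound.

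\textbf{Main obstacle.} The delicate step is the lower bound's control of the self-interaction across pieces while keeping the exact rate \(J\) for the restricted paths, uniformly in the slope. I would first try to invoke the cone-confined/irreducible-path decomposition of \cite{Ioffe+Velenik-2008,Ioffe+Velenik-2012}, which shows that bridge-type paths of given length and displacement have the same exponential rate. The parity and rounding adjustments, and uniformity of the \(\sfo(n)\) over the compact set of slopes \(\{\eta_*'(q)\}\), are routine by comparison.
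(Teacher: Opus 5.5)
Your proposal is correct in outline and follows the paper's two-sided scheme. The lower bound on \(\Z{g}{L_N,A_N}\) is essentially the paper's: concatenate diamond-confined pieces along the minimizer and apply Lemma~\ref{lem:LogAsymptotics:LowerD}. The paper cuts at abscissae \(x_k=k/M\) with lengths \(\approx N\int_{x_{k-1}}^{x_k}\rho_*\); you cut at equal monomer times, which is equivalent. Two small corrections there:
\begin{itemize}
\item What the lemma needs is that \(h_*(q)=(C_*,s_*+gq)\) ranges over a compact subset of \(\bbR^2\setminus\{0\}\), so that \(0<\normI{\eta_*'}<1\) uniformly.
\item Because all increments have positive first component, the diamonds lie in disjoint vertical strips and no endpoint is revisited. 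So the pieces do not interact at all, and \(\W\) is exactly multiplicative since \(\phi(1)=0\). ``Interacting only near junctions'' would not be enough; for the self-avoiding walk any such interaction kills the weight.
\end{itemize}

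For the upper bound you take a genuinely different route. You cut at \(K\) deterministic monomer times \(\lfloor jL_N/K\rfloor\), whereas the paper cuts at successive first-exit times from \(\ell^\infty\)-boxes of mesoscopic size \(R_N=\lfloor N^\zeta\rfloor\). Your version has polynomial skeleton entropy and lands directly on \(\calB\) of a piecewise-linear curve. The price is an \(\sfO(1/K)\) field error and a second limit \(K\to\infty\). The paper's version has errors of order \(R_N/N\to0\) and gives a skeleton-by-skeleton bound (Lemma~\ref{lem:ProofMain:SkeletonUpperBound}) that is reused for Theorem~\ref{thm:Concentration}. Crucially, in the paper each segment visits its endpoint only at its final time, which is exactly the hypothesis of~\eqref{eq:IneqConcat}.

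That last property is what your argument lacks, and the step needs repair.
\begin{itemize}
\item The inequality you need is \(\W(\gamma\concat\gamma')\leq\W(\gamma)\W(\gamma')\); you wrote \(\geq\).
\item More importantly, at a deterministic cut time both pieces may revisit the junction, and then the inequality can fail. Take \(\phi(n)=\lfloor n/2\rfloor\), which satisfies~\eqref{eq:condphi}, with \(\gamma=(0,e_1,2e_1,e_1)\) and \(\gamma'=(0,e_2,0)\). Then \(\Phi(\gamma\concat\gamma')=\phi(3)=1<2=\Phi(\gamma)+\Phi(\gamma')\).
\item The fix is cheap: apply superadditivity to disjoint blocks of indices, \(\Phi(\gamma)\geq\sum_j\Phi(\gamma^j\text{ without its last vertex})\), keeping the last block whole. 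Each resulting piece partition function is a sum of at most four terms \(\Z{}{n-1,x-e}\). These obey Lemma~\ref{lem:logAsymptotics:UpperZ} with the same rate, by uniform continuity of \(J\) on the unit \(\ell^1\)-ball.
\item Keep the exact block lengths in the exponent, \(\sum_j n_j J(\Delta y_j/n_j)\), rather than \(\frac{\alpha N}{K}J\bigl(\frac{K}{\alpha N}\Delta y_j\bigr)\). The latter argument may leave the unit \(\ell^1\)-ball, where \(J=+\infty\), and the displayed bound would then be false.
\item With exact lengths, the interpolating curve lives on \([0,L_N/N]\) and ends at \(A_N/N\). So, as in the paper, you need continuity of \(\min\calB\) in the mass and the endpoint.
\end{itemize}
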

The second result studies the typical conformations of the polymer under the measure \(\p{g}{L_N,A_N}\). Since we are not only interested in the macroscopic shape of the polymer, but also in the height-dependent distribution of monomers along the curve, we need to encode the microscopic polymer geometry in a suitable way. For a path \(\gamma=(\gamma_0,\ldots,\gamma_{L_N})\), consider the empirical monomer measure
\[
  \frm_\gamma^N \defby \frac1N\sum_{k=1}^{L_N}\delta_{\gamma_k/N}.
\]
Note that its total mass is \(L_N/N\), which converges to \(\alpha\) as \(N\to\infty\).

The limiting monomer measure associated with the minimizer of the variational problem is the push-forward \(\frm_*\defby (\eta_*)_\sharp\Leb_{[0,\alpha]}\) of the Lebesgue measure on \([0,\alpha]\) by \(\eta_*\), that is,
\[
  \int \varphi\,\dd\frm_* = \int_0^\alpha \varphi(\eta_*(q))\,\dd q = \int_0^1 \varphi(x,y_*(x))\,\rho_*(x)\,\dd x ,
\]
where the last expression uses the graph parameterization.

We metrize weak convergence of finite measures by the bounded-Lipschitz distance
\[
  d_{\mathrm{BL}}(\mu,\nu) \defby \sup_{\normsup{\varphi}\leq 1,\ \mathrm{Lip}(\varphi)\leq 1}
  \abs[\Big]{\int\varphi\,\dd\mu-\int\varphi\,\dd\nu}.
\]

The next result shows that the empirical monomer measure concentrates near the deterministic measure \(\frm_*\).
\begin{theorem}\label{thm:Concentration}
  For every \(\epsilon>0\), there exists \(c_\epsilon>0\) such that, for all \(N\) sufficiently large,
  \[
    \p{g}{L_N,A_N} ( d_{\mathrm{BL}}(\frm_\gamma^N,\frm_*) > \epsilon) \leq e^{-c_\epsilon N}.
  \]
\end{theorem}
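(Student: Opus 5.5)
We prove Theorem~\ref{thm:Concentration} by a coarse-graining large-deviation upper bound, combined with a quantitative stability estimate for the monomer-time functional \(\calB\) around its unique minimizer \(\eta_*\). By Theorem~\ref{thm:FreeEnergy}, \(\Z{g}{L_N,A_N}=e^{-N\Psi+\sfo(N)}\) with \(\Psi=\calB(\eta_*)\). It therefore suffices to show that
\[
  \sum_{\gamma:\, d_{\mathrm{BL}}(\frm^N_\gamma,\frm_*)>\epsilon} \W(\gamma)\exp\Bigl(-\tfrac gN\sum_k(\gamma_k)_2\Bigr)\leq e^{-N(\Psi+2c_\epsilon)}
\]
for \(N\) large.

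\textbf{Discretization.} Fix a mesoscopic number of pieces \(K\) (large, but independent of \(N\)). Cut \(\gamma\) in monomer time at the indices \(k_j=\lfloor jL_N/K\rfloor\), and record the rescaled skeleton \(\gamma_{k_j}/N\), rounded to a grid of mesh \(\delta\). The number of possible skeletons is \(e^{\sfo(N)}\): there are \(K\) points, each confined to a box of size \(O(\alpha)\). For a fixed skeleton \(\bfv=(v_0,\dots,v_K)\), we bound the weight in three steps.
\begin{enumerate}
  \item Superadditivity of \(\phi\) gives \(\W(\gamma)\leq\prod_j\W(\gamma^{(j)})\).
  \item Since each piece has length \(\alpha N/K\), the heights along piece \(j\) differ from \(N(v_j)_2\) by at most \(\alpha N/K\). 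Hence the field term is \(\exp\bigl(-g\tfrac{\alpha}{K}N\sum_j (v_j)_2+O(N\alpha^2/K)\bigr)\).
  \item The logarithmic asymptotics of \(\Z{}{n,x}\) from Section~\ref{sec:logAsymptotics} give a uniform upper bound \(\Z{}{n,x}\leq e^{-nJ(x/n)+\sfo(n)}\).
\end{enumerate}
Together these yield the bound \(\exp\bigl(-N\,\calB(\eta_\bfv)+N\,O(1/K+\delta)+\sfo(N)\bigr)\), where \(\eta_\bfv\) is the piecewise linear monomer-time curve through the skeleton. The convexity of \(J\) and its continuity on the \(\ell^1\)-ball are used for the rounding errors. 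On the measure side, \(\frm^N_\gamma\) lies within \(d_{\mathrm{BL}}\)-distance \(O(1/K+\delta)+\sfo(1)\) of \((\eta_\bfv)_\sharp\Leb_{[0,\alpha]}\). This holds because each monomer stays within \(\alpha/K\) of its skeleton interpolation, since increments are nearest-neighbor.

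\textbf{Stability.} The key variational input is the following claim: for every \(\epsilon>0\) there is \(c'_\epsilon>0\) such that every admissible \(\eta\) with \(d_{\mathrm{BL}}(\eta_\sharp\Leb,\frm_*)>\epsilon/2\) satisfies \(\calB(\eta)\geq\calB(\eta_*)+c'_\epsilon\). We would prove this by compactness. Admissible curves are \(1\)-Lipschitz in \(\ell^1\), since \(\mathrm{dom}J\) is the unit ball, so by Arzelà--Ascoli the set is compact in sup-norm. The functional \(\calB\) is lower semicontinuous under uniform convergence, because \(J\) is convex and lower semicontinuous and \(\eta_2\) enters linearly. The map \(\eta\mapsto\eta_\sharp\Leb\) is continuous, with \(d_{\mathrm{BL}}\leq\normsup{\eta-\eta'}\cdot\alpha\). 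Suppose a minimizing sequence stayed \(\epsilon/2\)-far from \(\frm_*\). Its limit would then be a minimizer of \(\calB\) different from \(\eta_*\), contradicting the uniqueness statement of Theorem~\ref{thm:SolutionPrimalVP}. Here we use that uniqueness in the dual/monomer-time formulation, given via Theorem~\ref{thm:Main:EquVP}, holds in the fixed monomer-time parameterization, where \(\eta_*\) is a single curve. We note that Theorem~\ref{thm:SolutionPrimalVP} states uniqueness up to reparameterization for the primal problem. In monomer time the parameterization is pinned, so it identifies \(\eta_*\) uniquely, and this is exactly what fixes \(\frm_*\).

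\textbf{Conclusion.} Choose \(K\) large and \(\delta\) small so that the \(O(1/K+\delta)\) errors are below \(\min(\epsilon/4,c'_\epsilon/4)\). Then every skeleton compatible with the event \(\{d_{\mathrm{BL}}(\frm^N_\gamma,\frm_*)>\epsilon\}\) has \(d_{\mathrm{BL}}((\eta_\bfv)_\sharp\Leb,\frm_*)>\epsilon/2\), and therefore carries weight at most \(e^{-N(\Psi+c'_\epsilon/2)+\sfo(N)}\). Summing over the \(e^{\sfo(N)}\) skeletons and dividing by \(\Z{g}{L_N,A_N}\) gives the claim with \(c_\epsilon=c'_\epsilon/4\).

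We expect the main obstacle to be the uniform upper bound \(\Z{}{n,x}\leq e^{-nJ(x/n)+\sfo(n)}\), uniform over \(x\) including near \(\normI{x}=n\), together with correctly handling the parity and integer constraints on the skeleton pieces. If only a non-uniform bound is available, we would use the exponential Chebyshev bound \(\Z{}{n,x}\leq e^{-\spr{h}{x}}\Z{h}{n}\) with \(\Z{h}{n}\leq e^{n\fe(h)+\sfo(n)}\) at a finite net of \(h\)'s. This gives the bound with \(J\) replaced by a uniform approximation from below, which suffices.
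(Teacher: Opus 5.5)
Your strategy is correct in outline and has the same architecture as the paper: a skeleton upper bound from Lemma~\ref{lem:logAsymptotics:UpperZ}, superadditivity and freezing the field; $e^{\sfo(N)}$ skeletons; the stability estimate; and the lower bound on the partition function from Theorem~\ref{thm:FreeEnergy}. Your stability claim is Proposition~\ref{prop:VP:Stability}(1), proved by the same Arzel\`a--Ascoli, lower-semicontinuity and uniqueness argument.

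The coarse-graining is different. The paper reuses the skeleton from its free-energy lower bound. It takes stopping times at which the path has moved sup-distance $R_N=\lfloor N^\zeta\rfloor$, which gives $O(N^{1-\zeta})$ mesoscopic segments of random lengths and $e^{O(N^{1-\zeta}\log N)}$ skeletons. Since $\rho_{\bfv,\bftau}\equiv\alpha_N$, its interpolated pair is also just a monomer-time curve, here through the points $(\tau_k/N,v_k/N)$. Its geometric errors are $O(R_N/N)$ and vanish with $N$, so Lemma~\ref{lem:ProofMain:SkeletonUpperBound} serves both for the free energy and for concentration. You cut instead at deterministic monomer times into $K$ macroscopic pieces. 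This gives only polynomially many skeletons. The price is field and measure errors of order $1/K$ that do not vanish, so $K$ must be fixed in terms of $c'_\epsilon$ before $N\to\infty$, which you do.

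The one step that fails as written is ``superadditivity gives $\W(\gamma)\le\prod_j\W(\gamma^{(j)})$.'' Consecutive pieces share the cut vertex $x=\gamma_{k_j}$, where $\ell_x(\gamma)=\ell_x(\gamma^{(j)})+\ell_x(\gamma^{(j+1)})-1$, so superadditivity does not apply at that site.
\begin{itemize}
\item \emph{Counterexample.} Take $\phi(m)=\lfloor m/2\rfloor$, which is superadditive with $\phi(0)=\phi(1)=0$. If each adjacent piece visits the cut vertex twice, the junction contributes $\phi(3)=1<2=\phi(2)+\phi(2)$, and the inequality fails.
\item \emph{Why the paper avoids it.} A stopping-time segment visits its terminal point only at its final time. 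That is exactly the hypothesis of~\eqref{eq:IneqConcat}, and deterministic cut times do not have it.
\item \emph{Repair.} Attribute each cut vertex to one piece only, e.g.\ drop the initial vertex of every piece $j\ge2$. Local times then add exactly, so superadditivity gives $\W(\gamma)\le\prod_j\W(\hat\gamma^{(j)})$ with $\hat\gamma^{(j)}$ of length $n_j-1$.
\item \emph{Bound on the repaired pieces.} The sum over such pieces with displacement $x_j$ is at most $4e^{\normsup{h}}e^{-\spr{h}{x_j}}\Z{h}{n_j-1}$. The finite-net argument of Lemma~\ref{lem:logAsymptotics:UpperZ} therefore still yields $e^{-n_jJ(x_j/n_j)+\sfo(N)}$.
\end{itemize}

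Two smaller points.
\begin{itemize}
\item \emph{Drop the $\delta$-rounding.} With $K$ fixed there are at most $(2L_N+1)^{2K}$ exact skeletons, so rounding gains nothing. Rounded positions can also produce slopes of $\ell^1$-norm above $1$, where $J=+\infty$, and your claimed bound on the skeleton weight would then be false.
\item \emph{Uniformity in mass and endpoint.} $\eta_\bfv$ lives on $[0,L_N/N]$ and ends at $A_N/N$ rather than $(1,a)$. The stability estimate must therefore hold uniformly for nearby masses and endpoints. Your compactness argument gives this verbatim for sequences with $\alpha_n\to\alpha$ and converging endpoints; the paper handles the mass only by a continuity remark.
\end{itemize}
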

In particular, the trace of the polymer concentrates near the unique minimizing trace \(\Gamma_*\) described in Theorem~\ref{thm:SolutionPrimalVP}, and the local density of monomers along the limiting trace is given by \(\rho_*(x)\). To formulate the trace concentration precisely, set
\[
  \forall\epsilon>0,\qquad
  \calT_\epsilon^* \defby \setof{z\in\bbR^2}{d_2(z,\Gamma_*)<\epsilon}.
\]
\begin{corollary}\label{cor:TubeConcentration}
  For every \(\epsilon>0\), there exists \(c_\epsilon>0\) such that, for all \(N\) sufficiently large,
  \[
    \p{g}{L_N,A_N} ( N^{-1}\gamma \not\subset \calT_\epsilon^* ) \leq e^{-c_\epsilon N}.
  \]
\end{corollary}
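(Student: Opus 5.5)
The plan is to deduce Corollary~\ref{cor:TubeConcentration} from Theorem~\ref{thm:Concentration}, using the nearest-neighbor structure of \(\gamma\) to pass from measure-level closeness to pointwise closeness of the trace.

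\emph{Step 1: reduction to a measure statement.} Suppose \(N^{-1}\gamma\not\subset\calT_\epsilon^*\). Then some \(k\) satisfies \(d_2(\gamma_k/N,\Gamma_*)\geq\epsilon\).

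\emph{Step 2: many monomers near the bad point.} Since the path is nearest-neighbor, \(\normII{\gamma_j/N-\gamma_k/N}\leq \abs{j-k}/N\). Hence, for all indices \(j\in[1,L_N]\) with \(\abs{j-k}\leq \epsilon N/2\), we have \(d_2(\gamma_j/N,\Gamma_*)\geq \epsilon/2\). Since \(L_N\geq N\) and \(\epsilon\) may be assumed small, at least \(\min(\epsilon N/2, L_N-1)\geq \epsilon N/4\) such indices exist, even when \(k\) is near an endpoint. (If \(\epsilon\) is large the event is only harder to realize, so we may restrict to \(\epsilon<1\).) Consequently
\[
  \frm_\gamma^N\bigl(\setof{z}{d_2(z,\Gamma_*)\geq\epsilon/2}\bigr)\geq \epsilon/4 .
\]

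\emph{Step 3: a test function separating this from \(\frm_*\).} Take
\[
  \varphi(z)\defby \min\bigl(1,\,(d_2(z,\Gamma_*)-\epsilon/4)_+\bigr)\cdot \tfrac{\epsilon}{4}.
\]
This function is bounded by \(1\), has Lipschitz constant at most \(\epsilon/4\leq1\), and vanishes on the \(\epsilon/4\)-neighborhood of \(\Gamma_*\). Since \(\frm_*\) is supported on \(\Gamma_*\), we get \(\int\varphi\,\dd\frm_*=0\). On the other hand, \(\varphi\geq (\epsilon/4)^2\) wherever \(d_2(z,\Gamma_*)\geq\epsilon/2\). Therefore
\[
  \int\varphi\,\dd\frm_\gamma^N\geq \epsilon^3/64,
\]
which gives \(d_{\mathrm{BL}}(\frm_\gamma^N,\frm_*)\geq\epsilon^3/64\).

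\emph{Step 4: conclusion.} Apply Theorem~\ref{thm:Concentration} with \(\epsilon'=\epsilon^3/128\) and set \(c_\epsilon\defby c_{\epsilon'}\).

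The only point needing care is the claim that \(\frm_*\) is supported on \(\Gamma_*\). This holds by its definition as the push-forward \((\eta_*)_\sharp\Leb_{[0,\alpha]}\), together with the fact that the trace of \(\eta_*\) is \(\Gamma_*\). No other obstacle is expected, because all the substantial work is contained in Theorem~\ref{thm:Concentration}.
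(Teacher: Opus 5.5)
Your proposal is correct and follows the same route as the paper. Because \(k\mapsto d_2(\gamma_k/N,\Gamma_*)\) moves by at most \(1/N\) per step, order \(\epsilon N\) monomers must lie at distance at least \(\epsilon/2\) from \(\Gamma_*\); a bounded Lipschitz test function vanishing near \(\Gamma_*\supset\operatorname{supp}\frm_*\) then gives a lower bound on \(d_{\mathrm{BL}}(\frm_\gamma^N,\frm_*)\), and Theorem~\ref{thm:Concentration} finishes the proof. Your explicit \(\varphi\), which vanishes on the \(\epsilon/4\)-tube and is bounded below on \(\bbR^2\setminus\calT_{\epsilon/2}^*\), is slightly more careful than the paper's function, which vanishes on \(\calT_{\epsilon/2}^*\) itself, exactly at the edge of where the mass bound is available.
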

Figures~\ref{fig:SimSAW} and~\ref{fig:SimSRW} illustrate this convergence of the trace in the case of the self-avoiding walk and the simple random walk, while Figure~\ref{fig:density} illustrates convergence of the monomer density.
\begin{figure}[t]
	\centering
	\includegraphics[width=\textwidth]{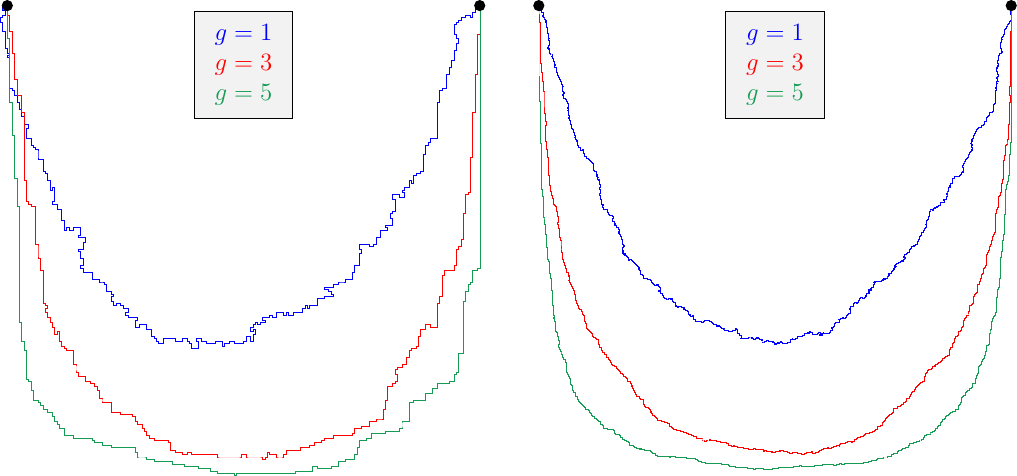}
	\caption{Simulation of the model with \(\alpha=3, a=0\) and \(N=200\) (left) and \(N=1000\) (right) for various values of \(g\), in the case of the self-avoiding walk.}
	\label{fig:SimSAW}
\end{figure}
\begin{figure}[t]
	\centering
	\includegraphics[width=\textwidth]{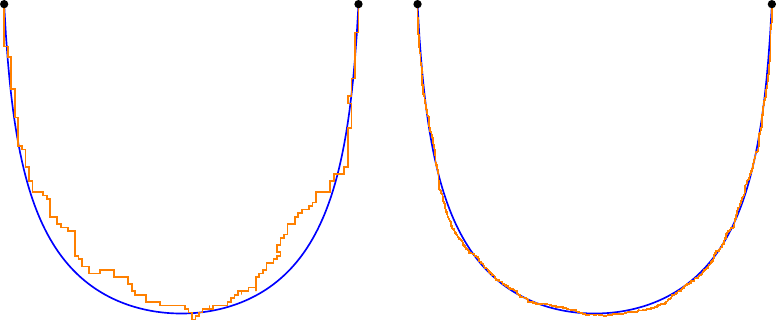}
	\caption{Simulation of the model with \(\alpha=3, a=0, g=3\) and \(N=100\) (left) and \(N=1000\) (right), in the case of the simple random walk. The blue curve is the geodesic of equation~\eqref{eq:geodesicSRW}.}
	\label{fig:SimSRW}
\end{figure}
\begin{figure}[t]
	\centering
	\includegraphics[width=\textwidth]{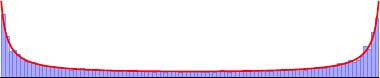}\\
	\includegraphics[width=\textwidth]{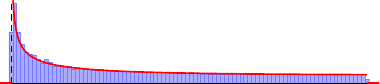}
	\caption{Histogram of the empirical monomer density associated to one trajectory of the simple random walk model with \(\alpha=3\), \(a=0\), \(g=3\) and \(N=10000\). Top: horizontal marginal of the empirical monomer measure (blue) and the corresponding marginal of the variational monomer measure \((\eta_*)_\#\mathrm{Leb}_{[0,\alpha]}\) (red). Bottom: the same for the vertical marginal; the dashed line marks the lower edge of the support of the limiting measure, where the theoretical density has an integrable singularity.}
	\label{fig:density}
\end{figure}

\bigskip
We conclude the introduction by discussing several special regimes in which the variational description becomes more explicit.
\subsubsection{An exactly solvable case}
For general self-interactions, there is little hope of computing the thermodynamic functions \(\fe\), \(\icl_\lambda\) and \(J\) in closed form; in particular, the explicit determination of the geodesics is generally intractable. An exception is the case of the simple random walk, where \(\Phi(\gamma)\equiv 0\). In this case, all relevant thermodynamic quantities and geodesics can be computed explicitly; see Section~\ref{sec:VP:SRW}.

\begin{figure}[t]
	\centering
	\includegraphics[height=6cm, valign=t]{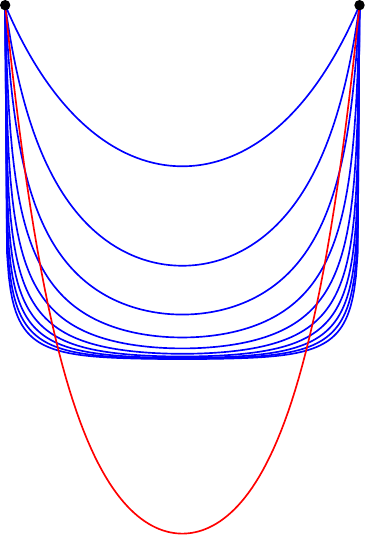}
	\hspace{2cm}
	\includegraphics[width=4cm, valign=t]{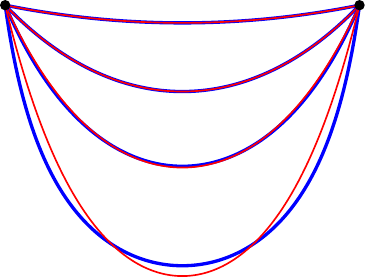}
	\caption{Left: The geodesics of equation~\eqref{eq:geodesicSRW} for \(a=0\), \(\alpha=3\) and values of \(g\) ranging from 1 to 10. The standard catenary for an inextensible string of length 3 is drawn in red for comparison. Right: The geodesics (blue) and the catenaries (red) of same apparent length, for \(a=0\), \(\alpha=3\) and increasing values of \(g\): \(0.1, 0.5, 1, 2\).}
	\label{fig:SimCatenary}
\end{figure}

Since we have an explicit expression for the geodesics (see~\eqref{eq:geodesicSRW}), we can discuss in detail their behavior and compare it to the classical catenary. Figure~\ref{fig:SimCatenary} (left) illustrates the geodesics for the simple random walk with \(a=0\), \(\alpha=3\), and several values of \(g\), together with classical catenaries.

Several qualitative features are immediately visible. First, the apparent macroscopic length of the geodesics is strictly
smaller than \(\alpha\), reflecting the fact that the polymer stores part of its length in microscopic fluctuations rather than in macroscopic extension. Only in the limit \(g\to\infty\) does the polymer reach its maximal macroscopic extension, and the limiting curve becomes piecewise affine, connecting \((0,0)\) to \((0,-1)\), then to \((1,-1)\) and finally to \((1,0)\).

Second, as \(g\) increases, the geodesics deviate significantly from the classical catenary and develop a flat-bottomed profile. This limiting ``boxy'' shape is a direct manifestation of the lattice structure. Indeed, in the limit \(g\to\infty\), the polymer is forced into maximal local extension along the path, corresponding to one monomer per unit microscopic length. In a rotationally invariant setting (for instance a continuous isotropic random walk in \(\mathbb R^2\)), this would lead back to the classical catenary. On the lattice \(\mathbb Z^2\), however, the alignment of increments with the coordinate axes allows the path to switch from a vertical descent to a horizontal plateau without changing its microscopic density, leading to the polygonal profile observed in Figure~\ref{fig:SimCatenary}.

Nevertheless, while the detailed shape at large values of \(g\) depends on the microscopic structure of the model (and in particular on the allowed directions of the increments), a form of universality still holds in this regime, in the sense that all nearest-neighbor models with the same set of admissible directions exhibit the same qualitative limiting geometry. This is made precise in Section~\ref{sec:Intro:Largeg}.

In contrast, the small-field regime exhibits a different, stronger form of universality. As illustrated in Figure~\ref{fig:SimCatenary} (right) for small values of \(g\) the geodesics are remarkably well approximated by classical catenaries, once the length of the latter is chosen to match the apparent length of the polymer. This is discussed in greater generality in the next subsection.

\subsubsection{Comparison with the classical catenary}
As discussed above, in the case of the simple random walk, the geodesics at small values of \(g\) are very closely approximated by the classical catenary, once the length of the catenary is chosen to match the apparent length of the polymer.
The next result shows that this is true of all models considered here, and the agreement actually extends beyond leading order: after matching the apparent length, the two curves remain close to third order in \(g\) in the symmetric case.

\begin{proposition}\label{pro:smallg}
Assume that \(a=0\) and let \(g>0\) be sufficiently small. Let \(y_*:[0,1]\to\bbR\) be the minimizing graph constructed in Section~\ref{sec:VP}, with \(y_*(0)=y_*(1)=0\). Let \(y_{\mathrm{cat}}\) be the classical symmetric catenary connecting \((0,0)\) and \((1,0)\), whose Euclidean length is chosen equal to the apparent length of the curve \(y_*\). Then there exists \(C>0\) such that
\[
  \sup_{x\in[0,1]}
  \abs{y_*(x)-y_{\mathrm{cat}}(x)}
  \leq Cg^3 .
\]
\end{proposition}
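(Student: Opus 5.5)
The plan is to work directly with the explicit monomer-time representation of Theorem~\ref{thm:SolutionPrimalVP}. I will rewrite both $y_*$ and $y_{\mathrm{cat}}$ through the angle variable $w=\arsinh(y')$. Both curves then satisfy an equation of the form $w'(x)=\kappa(x)$. For the catenary $\kappa$ is exactly constant, and for $y_*$ it will turn out to be constant up to $O(g^3)$.

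\emph{Step 1: symmetry and the $g\to0$ limit of the parameters.} Since $a=0$, the reflection $h_2\mapsto-h_2$ leaves $\fe$ invariant, so $\partial_2\fe(C,\cdot)$ is odd. The vertical endpoint condition $\int_0^\alpha\partial_2\fe(C_*,s_*+gu)\,\dd u=0$ is then solved by $s_*=-g\alpha/2$, which is the unique choice by Theorem~\ref{thm:SolutionPrimalVP}. Hence the argument $t\defby s_*+gu$ ranges over $[-g\alpha/2,g\alpha/2]$. The horizontal condition reads $\int_0^\alpha\partial_1\fe(C_*,g(u-\alpha/2))\,\dd u=1$. As $g\to0$, this gives $C_*\to C_0$, where $\alpha\,\partial_1\fe(C_0,0)=1$. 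Here $C_0>0$ exists because $\alpha>1$, $\partial_1\fe(0,0)=0$, and $\partial_1\fe(\cdot,0)$ increases to $1$. I will use the analyticity and strict convexity of $\fe$ from Section~\ref{sec:Thermo}, in particular $\partial_2^2\fe(C,0)>0$. With these, $C_*$ stays in a compact subset of $(0,\infty)$ for small $g$, and all Taylor coefficients below are uniformly bounded.

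\emph{Step 2: curvature of the geodesic.} Along $\eta_*$, the slope is $p(t)=\partial_2\fe(C_*,t)/\partial_1\fe(C_*,t)$ and arclength satisfies $\dd\sigma/\dd t=g^{-1}\abs{\nabla\fe(C_*,t)}$. By oddness/evenness in $t$ we have
\[
  p(t)=A t+O(t^3),\qquad A\defby\partial_2^2\fe(C_*,0)/\partial_1\fe(C_*,0),
  \qquad \abs{\nabla\fe(C_*,t)}=\partial_1\fe(C_*,0)+O(t^2).
\]
Therefore $\dd p/\dd\sigma=g\kappa_0+O(g\,t^2)=g\kappa_0+O(g^3)$, with $\kappa_0\defby A/\partial_1\fe(C_*,0)$, since $\abs t\leq g\alpha/2$. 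Using $\dd\sigma=\sqrt{1+p^2}\,\dd x$, the function $w_*=\arsinh(y_*')$ satisfies $w_*'(x)=g\kappa_0+g^3R(x)$ with $R$ bounded uniformly in $g$, and $w_*(1/2)=0$ by symmetry. The catenary satisfies $w_{\mathrm{cat}}'=1/c$ with $w_{\mathrm{cat}}(1/2)=0$.

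\emph{Step 3: matching lengths.} Write $1/c=g\kappa_0+\varepsilon$. Both $w$'s are $O(g)$, so the equal-length condition $\int_0^1\cosh w_*=\int_0^1\cosh w_{\mathrm{cat}}$ expands as
\[
  \int_0^1(w_*-w_{\mathrm{cat}})(w_*+w_{\mathrm{cat}})\,\dd x=O\bigl(g^2(g^3+\abs\varepsilon)\bigr)\cdot g.
\]
The error comes from the quartic terms, which factor in the same way. Inserting $w_*-w_{\mathrm{cat}}=\int_{1/2}^x(g^3R-\varepsilon)$ and $w_*+w_{\mathrm{cat}}=2g\kappa_0(x-\tfrac12)+O(g^3+\varepsilon)$, the leading term is $-\varepsilon\,g\kappa_0/6+O(g^4)$. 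Solving gives $\varepsilon=O(g^3)$.

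I expect this step to be the main obstacle. The point is that the length is only quadratically sensitive to the curvature, so one must check that $\varepsilon$ is determined to order $g^3$ and not just $g$. The factorization above, together with $\kappa_0$ bounded away from $0$, is what makes this work. It is also exactly where the third-order accuracy comes from: matching fixes $\varepsilon$ at order $g^3$, with no residual $O(g)$ mismatch in the curvature.

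\emph{Step 4: conclusion.} Step 3 gives $\sup_x\abs{w_*-w_{\mathrm{cat}}}=O(g^3)$. Since $\sinh$ is Lipschitz on bounded sets, $\sup\abs{y_*'-y_{\mathrm{cat}}'}=O(g^3)$. Integrating from $y_*(0)=y_{\mathrm{cat}}(0)=0$ yields $\sup_{x\in[0,1]}\abs{y_*(x)-y_{\mathrm{cat}}(x)}\leq Cg^3$.
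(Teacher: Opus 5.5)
Your proof is correct, and it takes a genuinely different route from the paper's.

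\textbf{The paper's route.} It uses the parabola \(\kappa_g(x^2-x)\), with \(\kappa_g=g\frac{\alpha^2}{2}\partial_{22}\fe(C_0,0)\), as an intermediate curve. It first gets \(C_*=C_0+\sfO(g^2)\) from evenness of the horizontal shooting constraint in \(g\). It then Taylor-expands \(x(q)\) and \(y(q)\) and inverts \(q(x)=\alpha x+\sfO(g^2)\), which gives \(y_*=\kappa_g(x^2-x)+\sfO(g^3)\). Next, it expands the catenary for large \(A\) as the same parabola up to \(\sfO(A^{-3})\). Finally, it matches \(L_g=1+\kappa_g^2/6+\sfO(g^4)\) against \(L_{\mathrm{cat}}=1+1/(24A^2)+\sfO(A^{-4})\).

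\textbf{Your route.} You pass to \(w=\arsinh(y')\), in which the catenary is exactly affine. You then read \(w_*'=\dd p/\dd\sigma=g\kappa_0+\sfO(g^3)\) directly off the dual-variable representation. This buys two things. First, you never need the rate of convergence \(C_*\to C_0\) or the inversion of \(x(q)\), because your \(\kappa_0\) is evaluated at \(C_*\) itself. Second, you control \(y'\) and not only \(y\). That is what the length comparison actually requires; the paper's expansion of \(L_g\) tacitly uses such a \(C^1\) estimate. Your conclusion is accordingly a \(C^1\) bound. The paper's route, in exchange, gives the leading coefficient explicitly in terms of the \(g=0\) data.

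\textbf{Step 3 needs tightening in two places.}
\begin{enumerate}
  \item State why \(1/c=\sfO(g)\) a priori. The catenary length \(2c\sinh(1/(2c))\) is strictly increasing in \(1/c>0\), and \(L_g-1=\sfO(g^2)\).
  \item Do not push \(\varepsilon^2\) into the error term. The exact computation is \(\int_0^1(w_*^2-w_{\mathrm{cat}}^2)\,\dd x=-\varepsilon(g\kappa_0+1/c)/12+\sfO(g^4)\), and the \(\varepsilon g^3\) cross terms cancel. If only \(\varepsilon=\sfO(g)\) is known, the quadratic you get also admits the root \(1/c=-g\kappa_0\), which is the inverted catenary of the same length. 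You rule this out by taking \(1/c>0\) for the hanging catenary. Then \(g\kappa_0+1/c\geq g\kappa_0\), and \(\varepsilon=\sfO(g^3)\) follows.
\end{enumerate}
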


\begin{remark}
The restriction to \(a=0\) is not merely technical. In the nonsymmetric case, the first-order correction to the straight segment has the form
\[
  g\,\frac{\alpha^2}{2}
  \left(
    \partial_{22}\fe(h_0)
    -
    a\,\partial_{12}\fe(h_0)
  \right)(x^2-x),
\]
where \(h_0=(C_0,s_0)\) is determined by \(\alpha\nabla\fe(h_0)=(1,a)\).
Thus the sign of the leading bending coefficient is governed by
\[
  \partial_1\fe(h_0)\partial_{22}\fe(h_0)
  -
  \partial_2\fe(h_0)\partial_{12}\fe(h_0),
\]
or equivalently by the monotonicity of
\[
  s\mapsto
  \frac{\partial_2\fe(C_0,s)}
       {\partial_1\fe(C_0,s)}
\]
at \(s=s_0\). This is exactly the local condition that appears when one tries to prove convexity of the geodesic. We do not know whether this condition holds under our general assumptions. Extending the catenary comparison to \(a\neq 0\) therefore remains open
for the same reason as the convexity question discussed above. (For the simple random walk, where the monotonicity can be checked explicitly, the leading coefficient is positive; in that case one does recover the nonsymmetric comparison with the catenary up to order \(g^2\).)
\end{remark}
\begin{remark}
  The error estimate in Proposition~\ref{pro:smallg} is sharp, at least at the level of its order in \(g\). Indeed, in the case of the simple random walk with \(a=0\) and \(\alpha=3\), the height of the geodesic at its minimum can be computed explicitly and compared with the height of the catenary of the same apparent length. One finds
  \[
    \abs{y_{\min}^{\rm SRW}-y_{\min}^{\rm cat}}
    =
    \frac1{160}g^3+\sfo(g^3).
    \qedhere
  \]
\end{remark}

\subsubsection{Large-\texorpdfstring{\(g\)}{g} regime}\label{sec:Intro:Largeg}

At the opposite end of the range of fields, the variational problem also has a universal limiting behavior within the class of nearest-neighbor models considered here. As \(g\to\infty\), the gravitational term dominates, while the microscopic nearest-neighbor constraint remains visible through the effective condition \(\normI{\eta'(q)}\leq 1\). The minimizer therefore converges to the curve which descends as fast as possible, travels horizontally at the lowest height compatible with the endpoint and length constraints, and then ascends to the endpoint.
\begin{proposition}[Large-field limit]\label{pro:largeg}
Fix \(a\in\bbR\) and \(\alpha>1+\abs a\). For each \(g>0\), let \(\eta_g:[0,\alpha]\to\bbR^2\) be the minimizing curve in monomer time given by Theorem~\ref{thm:SolutionPrimalVP}. Set
\[
  b_* \defby \frac{1+a-\alpha}{2},
  \qquad
  q_- \defby -b_*=\frac{\alpha-1-a}{2},
  \qquad
  q_+ \defby q_-+1=\frac{\alpha+1-a}{2}.
\]
Let \(\bar\eta:[0,\alpha]\to\bbR^2\) be the polygonal curve
\[
  \bar\eta(q)
  =
  \begin{cases}
    (0,-q), & 0\le q\leq q_-,\\[2mm]
    (q-q_-,b_*), & q_-\leq q\leq q_+,\\[2mm]
    (1,b_*+q-q_+), & q_+\leq q\leq \alpha.
  \end{cases}
\]
Then
\[
  \lim_{g\to\infty}
  \sup_{q\in[0,\alpha]}
  \normII{\eta_g(q)-\bar\eta(q)}
  =
  0.
\]
In particular, the minimizing traces converge, in Hausdorff distance, to the polygonal trace of \(\bar\eta\).
\end{proposition}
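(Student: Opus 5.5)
We prove Proposition~\ref{pro:largeg} by a $\Gamma$-convergence / compactness argument on the monomer-time functional $\calB$, after rescaling by $g$. Dividing by $g$, the curve $\eta_g$ minimizes
\[
  \calB_g(\eta)\defby \frac1g\calB(\eta)=\int_0^\alpha \Bigl[\tfrac1g J(\eta'(q))+\eta_2(q)\Bigr]\dd q
\]
over admissible curves joining $(0,0)$ to $(1,a)$ on $[0,\alpha]$. The natural limit functional is
\[
  \calB_\infty(\eta)=\int_0^\alpha \eta_2(q)\,\dd q \quad\text{if }\normI{\eta'}\le1\text{ a.e.},
  \qquad \calB_\infty(\eta)=+\infty\text{ otherwise}.
\]
The plan has three steps: identify $\bar\eta$ as the unique minimizer of $\calB_\infty$, establish compactness and a liminf bound, and build a recovery sequence.

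\textbf{Step 1: $\bar\eta$ is the unique minimizer of $\calB_\infty$.} For any $1$-Lipschitz (in $\ell^1$) curve with the given endpoints and any $q\in[0,\alpha]$, we have
\[
  \abs{\eta_1(q)}+\abs{\eta_2(q)}\le q
  \quad\text{and}\quad
  \abs{1-\eta_1(q)}+\abs{a-\eta_2(q)}\le \alpha-q .
\]
Hence $\eta_2(q)\ge \max\{-q+\abs{\eta_1(q)},\, a-(\alpha-q)+\abs{1-\eta_1(q)}\}$. Minimizing the right-hand side over $\eta_1(q)$ gives exactly $\bar\eta_2(q)$:
\begin{itemize}
  \item for $q\le q_-$ it equals $-q$;
  \item on $[q_-,q_+]$ the two constraints force $\eta_2(q)\ge b_*$, using $\abs{\eta_1}+\abs{1-\eta_1}\ge 1$ and adding the two inequalities;
  \item for $q\ge q_+$ it equals $b_*+q-q_+$.
\end{itemize}
Thus $\calB_\infty(\eta)\ge\calB_\infty(\bar\eta)$, with equality iff $\eta_2=\bar\eta_2$ everywhere by continuity. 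Equality in the constraints then forces $\eta_1=\bar\eta_1$: on $[0,q_-]$ we need $\abs{\eta_1}=0$, on $[q_+,\alpha]$ we need $\eta_1=1$, and in between the $\ell^1$ budget is saturated by a horizontal move. Since $\normI{\bar\eta'}=1$ there, this forces $\eta_1'=1$.

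\textbf{Step 2: compactness and liminf.} Every admissible curve satisfies $\normI{\eta_g'}\le1$, because $\mathrm{dom}(J)$ is the unit $\ell^1$-ball. The family $(\eta_g)$ is therefore equi-Lipschitz with fixed starting point, and Arzel\`a--Ascoli gives uniform subsequential limits $\eta_\infty$, which are $1$-Lipschitz and have the right endpoints. We will use $J\ge J(0)\ge\inf J>-\infty$, which follows from $J$ being the Legendre transform of a finite convex $\fe$, together with $J$ bounded above on the unit ball. From these,
\[
  \calB_g(\eta_g)\ge \tfrac{\alpha}{g}\inf J+\int_0^\alpha (\eta_g)_2\,\dd q
  \longrightarrow \calB_\infty(\eta_\infty).
\]

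\textbf{Step 3: recovery sequence.} The upper bound uses a competitor that is $\bar\eta$ slightly shrunk so that $J$ stays finite. Take $\bar\eta^{\varepsilon}$ with $\normI{(\bar\eta^\varepsilon)'}\le 1$, built from $\bar\eta$ with plateau height $b_*+\varepsilon$, so that it has $\normI{(\bar\eta^\varepsilon)'}=1$ piecewise. Boundedness of $J$ on the closed $\ell^1$-ball is the main technical point; I expect it to follow from $J(v)=-\lim\frac1n\log\Z{}{n,x_n}\le \log 4$-type bounds via counting a single straight-ish path, combined with lower semicontinuity/convexity. This gives $\calB_g(\eta_g)\le\calB_g(\bar\eta^\varepsilon)\le \frac{\alpha}{g}\sup_{\normI v\le1}J+\calB_\infty(\bar\eta)+O(\varepsilon)$. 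Should finiteness of $J$ on the boundary fail, one instead uses speed $1-\varepsilon$ and time-slack, which is possible because $\alpha>1+\abs a$.

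\textbf{Conclusion.} Combining the bounds, every limit point minimizes $\calB_\infty$, so by Step~1 it equals $\bar\eta$, and the full family converges uniformly. Hausdorff convergence of traces follows immediately from uniform convergence. The main obstacle is the uniform-in-$v$ control of $J$ on the unit ball needed for the recovery bound.
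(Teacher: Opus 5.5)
Your proposal is correct and follows essentially the paper's route. The pointwise bound \(\eta_2\ge\max\{-q,b_*,a-\alpha+q\}\) identifies \(\bar\eta\) as the unique limiting minimizer, and the sandwich \(gI(\eta_g)-\lambdac\alpha\le\calB(\eta_g)\le\calB(\bar\eta)=gI(\bar\eta)+\sfO(1)\) together with Arzel\`a--Ascoli concludes. The obstacle you flag is not a real gap: from \(\fe(h)\ge\normsup{h}\) one gets \(J(v)\le\sup_h\{\spr{h}{v}-\normsup{h}\}\le 0\) whenever \(\normI{v}\le1\), so \(-\lambdac\le J\le 0\) on the closed ball and \(\bar\eta\) itself is a valid competitor, with no \(\varepsilon\)-modification needed.
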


%%%%
\subsection{Roadmap to the paper}
\label{sec:Intro:Roadmap}
%%%%

The remainder of the paper contains the proofs of our results.
The analysis of the thermodynamic quantities \(\fe\), \(\icl_\lambda\) and \(J\) is done in Section~\ref{sec:Thermo}.
We then discuss logarithmic asymptotics for partition functions in Section~\ref{sec:logAsymptotics}.
Section~\ref{sec:VP} is devoted to a detailed analysis of the variational problems.
Finally, the proofs of our main results are provided in Section~\ref{sec:ProofMain}.

%%%%%%%%%%%%%%%%%%%%%%%
\section{Properties of the thermodynamic quantities}
\label{sec:Thermo}
%%%%%%%%%%%%%%%%%%%%%%%

In this section we state and prove (when not available in the literature) many fundamental properties of the central thermodynamic quantities: the free energy, the inverse correlation length and the rate function. Many of these properties play an essential role in our derivations. The most important output for the variational analysis is Lemma~\ref{lem:LevelSets}: the level sets of \(\fe\) form a smooth strictly convex foliation of \(\bbR^2\setminus\{0\}\), and their Gauss maps are analytic diffeomorphisms. This is the geometric input used in Section~\ref{sec:VP}.

%%%%
\subsection{Free energy}
\label{sec:Thermo:FreeEnergy}
%%%%

A fundamental quantity associated to \(\fe(h)\) is \(\bar v_h\defby\nabla\fe(h)\). A priori, convexity of \(\fe\) only
guarantees the existence of this gradient almost everywhere. The next lemma shows that \(\fe\) is analytic away from \(0\), and that \(\nabla\fe(h)\to0\) as \(h\to0\); in particular, \(\fe\) is differentiable everywhere. The vector \(\bar v_h\) corresponds to the macroscopic extension of the polymer. Indeed, the standard convergence of gradients for convex functions gives
\begin{equation}\label{eq:vAsLimit}
  \bar v_h = \lim_{n\to\infty} \frac{1}{n} \nabla \log\Z{h}{n} = \lim_{n\to\infty} \E{h}{n}[X(\gamma)/n].
\end{equation}
In particular, \(\normI{\bar v_h}\leq 1\).

\begin{lemma}\label{lem:FreeEnergy}
  \begin{enumerate}
    \item For all \(h\neq 0\), \(\fe\) is analytic in a neighborhood of \(h\).
    \item For all \(h\neq 0\), \(\bar v_h \neq 0\).
    \item For all \(h\neq 0\), \(\Hess\fe(h)\) is positive definite, where \(\Hess\fe(h)\) denotes the Hessian of \(\fe\) at \(h\).
    \item\label{lem:FreeEnergy:Limit} \(\lim_{h \to 0} \normI{\bar v_h} = 0\).
    \item\label{lem:FreeEnergy:Largeh} \(\lim_{\normII{h}\to\infty} \normI{\bar v_h} = 1\).
  \end{enumerate}
\end{lemma}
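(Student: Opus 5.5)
The plan is to go through the fixed-displacement ensemble and the Ornstein--Zernike (OZ) theory developed in \cite{Ioffe+Velenik-2008, Ioffe+Velenik-2012} for self-repelling polymers under a pulling force. The starting point is the standard duality between \(\fe\) and the inverse correlation length. For \(\lambda>\lambdac\), let \(\Kl\defby\setof{h}{\spr{h}{x}\leq\icl_\lambda(x)\ \forall x}\) be the polar (Wulff) set. Then \(\setof{h}{\fe(h)<\lambda}\) should coincide with \(\mathrm{int}\,\Kl\), and \(\partial\Kl=\setof{h}{\fe(h)=\lambda}\).

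For \(h\neq0\) we have \(\fe(h)>\lambdac\), since the level set at \(\lambdac\) reduces to \(\{0\}\): the correlation length is finite off criticality. Setting \(\lambda=\fe(h)\), the point \(h\) lies on \(\partial\Kl\), and the generating function \(\sum_n e^{-\lambda n}\Z{h}{n}\) is exactly the critical "tilted" susceptibility. The OZ theory gives a renewal (irreducible-decomposition) representation with exponentially decaying steps under the tilt. From it:
\begin{itemize}
\item \(\log\) of the renewal equation's spectral radius defines \(\fe\) implicitly, via \(\sum_\gamma e^{\spr{h}{X(\gamma)}-\fe\abs\gamma}\W=1\) over irreducible pieces.
\item The analytic implicit function theorem then gives analyticity of \(\fe\) near \(h\) (item~1).
\end{itemize}
The mass gap, meaning exponential tails of irreducible pieces both in \(h\) and in \(\lambda\), is exactly what \cite{Ioffe+Velenik-2008} proves for repulsive \(\phi\). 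I would cite that result rather than reprove it.

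Items~2 and~3 follow from the renewal structure. The gradient is
\[
\bar v_h=\frac{\E{}{}[X]}{\E{}{}[\abs\gamma]}
\]
under the irreducible-piece law. This is nonzero because \(h\neq0\) tilts the displacement strictly: \(\spr{h}{\bar v_h}>0\) follows from convexity, since \(\fe(h)>\fe(0)\) together with convexity gives \(\spr{h}{\nabla\fe(h)}\geq\fe(h)-\fe(0)>0\). This gives item~2 directly.

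For item~3, the Hessian is the renewal covariance matrix of \(X-\bar v_h\abs\gamma\) per unit length. It is positive definite unless \(X-\bar v_h\abs\gamma\) is almost surely constant in some direction \(u\). That possibility is excluded by exhibiting two irreducible pieces, for instance with different ratios of displacement in the direction \(u\) to length. Such pieces exist because nearest-neighbour steps span \(\bbZ^2\) and short self-avoiding irreducible "cones" of several shapes have positive weight.

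Item~4 uses the standard fact that gradients of a convex function differentiable at \(0\) are continuous there. Differentiability at \(0\) comes from symmetry plus the absence of a corner: \(\fe(h)-\lambdac=\sfo(\normII h)\). This holds because \(\icl_\lambda\to0\) as \(\lambda\downarrow\lambdac\), which is the divergence of the correlation length at criticality \cite{Ioffe+Velenik-2008}. Equivalently, \(\Kl\) shrinks to \(\{0\}\), so the subdifferential at \(0\) is \(\{0\}\). Continuity of \(\nabla\fe\) on \(\bbR^2\), given differentiability everywhere, then yields \(\bar v_h\to0\).

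Item~5 follows by a direct bound. For \(h\) large, the straight paths maximizing \(\spr{h}{X}\) dominate. Precisely, \(\fe(h)\geq\normsup{h}\) by taking a straight-line path, and \(\fe(h)\leq\normsup h+\log 4\). Convexity then gives
\[
\spr{\bar v_h}{h}\geq\fe(h)-\fe(0)\geq\normsup h-\lambdac.
\]
Since \(\spr{\bar v_h}{h}\leq\normI{\bar v_h}\normsup h\), it follows that \(\normI{\bar v_h}\geq1-\lambdac/\normsup h\to1\).

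The main obstacle is item~1, together with positivity in item~3. These rest entirely on the OZ mass-gap/renewal machinery for general superadditive \(\phi\). I would import it from \cite{Ioffe+Velenik-2008, Ioffe+Velenik-2012}, checking only that their assumptions match~\eqref{eq:condphi}.
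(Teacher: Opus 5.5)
Your items 1--3 and 5 are fine. As in the paper, you import items 1--3 from the Ornstein--Zernike analysis of \cite{Ioffe+Velenik-2008}, and your item 5 is exactly the paper's argument; the upper bound \(\fe(h)\le\normsup h+\log 4\) is not needed. One small correction in item 2: \(\fe(h)>\lambdac\) for \(h\neq0\) does not follow from the correlation length being finite off criticality. It follows from item 3 together with symmetry and convexity, or from \(\icl_\lambda\to0\) combined with Lemma~\ref{lem:WulffShape}.

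\textbf{The genuine gap is item 4.} Your step ``\(\Kl\) shrinks to \(\{0\}\), so the subdifferential at \(0\) is \(\{0\}\)'' is false as a matter of convex analysis. A function equal to \(\lambdac+\normII h\) near the origin has \(\Kl\) equal to the disk of radius \(\lambda-\lambdac\), so \(\icl_\lambda\to0\), yet its subdifferential at \(0\) is the whole unit disk. Divergence of the correlation length only says that \(0\) is a strict minimizer.

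Absence of a corner needs the quantitative statement \(\icl_\lambda(\hat x)/(\lambda-\lambdac)\to\infty\), uniformly in \(\hat x\in\bbS^1\). To see why, recall that \(\spr{h}{\hat x}\le\icl_\lambda(\hat x)\) on \(\Kl\). A bound \(\icl_\lambda(\hat x)\le D(\lambda-\lambdac)\) along a sequence \(\lambda\downarrow\lambdac\) then forces the directional derivative of \(\fe\) at \(0\) in direction \(\hat x\) to be at least \(1/(2D)>0\). That quantitative statement is essentially point~\ref{lem:InvCorrLength:dmunumu} of Lemma~\ref{lem:InvCorrLength}, which the paper deduces \emph{from} item 4, so you cannot use it here.

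Its content is that the free polymer is sub-ballistic, which is a genuinely nontrivial fact; for the self-avoiding walk it is the theorem of Duminil-Copin and Hammond. The paper imports it from \cite{Smirnova-2018} in the form \(\p{0}{n}(\normI{X(\gamma)}>\epsilon n)\le Ce^{-c_\epsilon n}\). It then transfers this to \(\p{h}{n}\) for \(\normII h\le c_\epsilon/4\), using Cauchy--Schwarz and the Jensen bound \(\E{0}{n}[e^{\spr{h}{X(\gamma)}}]\ge1\). Finally it concludes \(\normI{\bar v_h}\le\epsilon\) via~\eqref{eq:vAsLimit}.

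Your final step (a convex function differentiable everywhere is \(C^1\)) is correct. It only applies, however, once differentiability at \(0\) has been obtained from such an input.
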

\begin{proof}
 	The first three claims are proved in~\cite[Section~3.4.1]{Ioffe+Velenik-2008}.

    \smallskip
    \emph{Proof of 4.} It is proved in~\cite[Chapter~3]{Smirnova-2018} that, for any \(\epsilon>0\), there exist \(C\) and \(c_\epsilon>0\) such that
	\[
      \forall n\geq 1,\qquad
      \p{0}{n}(\normI{X(\gamma)} > \epsilon n) \leq C e^{-c_\epsilon n}.
	\]
	Consider the event \(A_\epsilon \defby \{ \normI{X(\gamma)} > \epsilon n \}\). On the one hand, using the Cauchy--Schwarz inequality,
	\[
      \E{0}{n} [ e^{\spr{h}{X(\gamma)}} \boldsymbol{1}_{A_\epsilon} ]
      \leq \E{0}{n} [ e^{2\spr{h}{X(\gamma)}}]^{1/2}\, \p{0}{n}(A_\epsilon)^{1/2}
      \leq C^{1/2} \, e^{(\normII{h} - c_\epsilon/2) n},
    \]
    since \(\normII{X(\gamma)} \leq n\) when \(\abs{\gamma}=n\).
    On the other hand, by Jensen's inequality,
    \[
      \E{0}{n} [ e^{\spr{h}{X(\gamma)}} ] \geq e^{\E{0}{n} [\spr{h}{X(\gamma)}]} = 1,
    \]
    since \(\E{0}{n}[X(\gamma)]=0\) by symmetry.
    Combining these two bounds, we get
    \[
      \p{h}{n}(\normI{X(\gamma)} > \epsilon n)
      \leq C^{1/2}\, e^{-n (c_\epsilon/2 - \normII{h})}
      \leq C'\, e^{-c'_\epsilon n},
    \]
    for some \(C', c'>0\) and all \(h\in\bbR^2\) such that \(\normII{h}\leq c_\epsilon/4\). The desired claim now follows from~\eqref{eq:vAsLimit}, since for all such \(h\neq 0\) we have
    \[
      \normI{\bar{v}_h} = \normI{\nabla \fe(h)} \leq \lim_{n\to\infty} \E{h}{n} [ \normI{X(\gamma)} / n ] \leq \lim_{n\to\infty} \bigl(\epsilon + \p{h}{n}(\normI{X(\gamma)} > \epsilon n)\bigr) = \epsilon.
    \]
    Since \(\epsilon>0\) is arbitrary, this proves the claim.

    \smallskip
	\emph{Proof of 5.}
	Note that, for any \(h\in\bbR^2\), \(\max_{x:\,\normI{x}\leq 1} \spr{h}{x} = \normsup{h}\).
    In particular,
    \[
      \fe_n(h) = \frac{1}{n}\log\sum_{\gamma:\,\abs{\gamma}=n} \exp(\spr{h}{X(\gamma)}) \W(\gamma) \geq \spr{h}{\tfrac1n X(\gamma^*)} = \normsup{h},
    \]
    where we have restricted the sum to a single path \(\gamma^*\) satisfying \(\normI{X(\gamma^*)}=n\) and \(\spr{h}{X(\gamma^*)}=n\normsup{h}\); note that \(\W(\gamma^*)=1\), since this path visits each site only once.
    Therefore,
    \begin{equation}\label{eq:Boundsf}
      \forall h\in\bbR^2,\qquad
      \fe(h) \geq \normsup{h}.
    \end{equation}

    Since \(\normI{\bar v_h}\leq 1\), we only need to prove a lower bound. Let \(w \defby h/\normsup{h}\). Since \(\fe\) is convex,
    \[
      \spr{\nabla \fe(h)}{h} \geq \fe(h) - \fe(0).
    \]
    Using~\eqref{eq:Boundsf} and dividing by \(\normsup{h}\), this yields
    \[
      \spr{\nabla \fe(h)}{w} \geq 1 - \frac{\fe(0)}{\normsup{h}}.
    \]
    Since \(\spr{\nabla \fe(h)}{w} \leq \normI{\nabla \fe(h)} \normsup{w} = \normI{\nabla \fe(h)}\),
    \[
      \normI{\nabla \fe(h)} \geq 1 - \frac{\fe(0)}{\normsup{h}}.
    \]
    Consequently,
    \[
      \liminf_{\normII{h} \to \infty} \normI{\nabla \fe(h)} \geq 1.\qedhere
    \]
\end{proof}

%%%%
\subsection{Inverse correlation length}
\label{sec:Thermo:ICL}
%%%%

We first recall a well-known bound (e.g., \cite{Ioffe+Velenik-2008}): for all \(x\in\bbZ^2\), \(\Z{\lambda}{x} \leq \Bl \Z{\lambda}{0} e^{-\icl_\lambda(x)}\), where we have introduced the \emph{bubble diagram} \(\Bl \defby \sum_{x\in\bbZ^2} (\Z{\lambda}{x})^2\). Since \(\sum_x \Z{\lambda}{x}<\infty\) for \(\lambda>\lambdac\), the bubble diagram is finite as well:
\[
  \sum_x (\Z{\lambda}{x})^2
  \leq
  \bigl(\sup_x \Z{\lambda}{x}\bigr) \sum_x \Z{\lambda}{x}
  < \infty.
\]
Let us now turn to various important properties of \(\icl_\lambda\).
\begin{lemma}\label{lem:InvCorrLength}
	The following properties hold for all \(\lambda>\lambdac\):
	\begin{enumerate}
        \item There exists a universal constant \(c>0\) such that, for all \(\lambda>\lambdac\),
        \[
          \max_{\hat x\in\bbS^1}\icl_\lambda(\hat x) \leq c \min_{\hat x\in\bbS^1}\icl_\lambda(\hat x).
        \]
		\item\label{lem:InvCorrLength:SharpTriangleIneq} There exists \(c_\lambda > 0\) such that
		\[
          \forall x,y\in\bbR^2,\qquad
          \icl_\lambda(x) + \icl_\lambda(y) - \icl_\lambda(x+y)
          \geq c_\lambda (\normII{x} + \normII{y} - \normII{x+y}).
		\]
		\item \(x\mapsto \icl_\lambda(x)\) is real analytic on \(\bbR^2\setminus\{0\}\).
		\item\label{lem:InvCorrLength:PropAsFctOfLambda} For all \(x\in\bbS^1\), \(\lambda \mapsto \icl_\lambda(x)\) is real analytic, (strictly) increasing and (strictly) concave.
		\item For all \(x\in\bbS^1\), \(\lim_{\lambda\downarrow\lambdac} \icl_\lambda(x) = 0\).
		\item\label{lem:InvCorrLength:dmunumu} For all \(x\in\bbS^1\), \(\lim_{\lambda\downarrow\lambdac} \partial_\lambda\icl_\lambda(x) = +\infty\).
  \end{enumerate}
\end{lemma}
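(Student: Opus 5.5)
The plan is to derive everything from the duality between \(\icl_\lambda\) and \(\fe\). This is the standard identity, from Ioffe--Velenik, that for \(\lambda>\lambdac\) the unit ball of \(\icl_\lambda\) is polar to the level set of \(\fe\):
\[
  \icl_\lambda(x)=\sup\setof{\spr{h}{x}}{\fe(h)\leq\lambda}.
\]
This identity follows from the Laplace-transform relation \(\sum_x e^{\spr hx}\Z{\lambda}{x}=\sum_n e^{-\lambda n}\Z{h}{n}\). That sum is finite iff \(\fe(h)<\lambda\), and by the sharp bound \(\Z{\lambda}{x}\le \Bl\Z{\lambda}{0}e^{-\icl_\lambda(x)}\) it is also finite iff \(\spr hx<\icl_\lambda(x)\) for all \(x\). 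Lemma~\ref{lem:FreeEnergy} then gives the needed geometry. The set \(\calV_\lambda\defby\{\fe\le\lambda\}\) contains \(0\) in its interior, since \(\fe(0)=\lambdac<\lambda\). It is compact by~\eqref{eq:Boundsf}, and its boundary is analytic and strictly convex because \(\nabla\fe\neq0\) and the Hessian is positive definite away from \(0\).

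Items 1 and 2 come from this picture. For item 1, \(\normsup h\le\fe(h)\le \lambda\) on \(\calV_\lambda\), so \(\icl_\lambda(x)\le\lambda\normI x\). Conversely, \(\calV_\lambda\) contains a ball of radius comparable to \(\lambda-\lambdac\) when \(\lambda\) is near \(\lambdac\), and comparable to \(\lambda\) for large \(\lambda\). If a uniform constant proves delicate, I would use instead the direct lattice argument: \(\Z{\lambda}{x}\) dominates the product of the partition functions along the coordinate directions by concatenation, so \(\icl_\lambda(x)\le \icl_\lambda(e_1)\abs{x_1}+\icl_\lambda(e_2)\abs{x_2}\). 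Combined with the symmetries of \(\bbZ^2\), this gives the ratio bound with \(c=\sqrt2\) or so. Item 2 is the standard consequence of strict convexity with bounded-below curvature of \(\partial\calV_\lambda\), equivalently of the sharp triangle inequality. I would cite \cite[Section~3]{Ioffe+Velenik-2008}, where it is proved via the Ornstein--Zernike theory. Item 3 follows because \(\icl_\lambda(x)=\spr{h(x)}{x}\), where \(h(x)\) is the point of \(\partial\calV_\lambda\) whose normal is parallel to \(x\). This point depends analytically on \(x\) by the implicit function theorem applied to \(\nabla\fe(h)\parallel x\), \(\fe(h)=\lambda\); the nondegeneracy needed there is exactly the positive-definite Hessian.

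For items 4--6, fix \(x\in\bbS^1\) and consider \(\lambda\mapsto\icl_\lambda(x)=\spr{h_\lambda}{x}\), with \(h_\lambda\) analytic in \((\lambda,x)\) by the same implicit function argument. This gives analyticity. The envelope formula gives
\[
  \partial_\lambda\icl_\lambda(x)=\frac{\normII x}{\normII{\nabla\fe(h_\lambda)}}\cdot\frac{1}{\cos\angle}>0 .
\]
More intrinsically, \(\partial_\lambda\icl_\lambda(x)=1/\spr{\nabla\fe(h_\lambda)}{\hat n}\cdot\ldots\), which reduces to \(\icl_\lambda(x)/\spr{\nabla\fe(h_\lambda)}{h_\lambda}\). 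This proves strict monotonicity. Concavity follows because the map \(\lambda\mapsto\calV_\lambda\) satisfies \(\theta\calV_{\lambda_1}+(1-\theta)\calV_{\lambda_2}\subset\calV_{\theta\lambda_1+(1-\theta)\lambda_2}\) by convexity of \(\fe\), and the support function is additive under Minkowski sums. Strictness follows from the strict convexity of \(\fe\) along the relevant directions. For item 5, as \(\lambda\downarrow\lambdac\) the set \(\calV_\lambda\) shrinks to \(\{0\}\), because \(0\) is the unique minimizer of \(\fe\): \(\fe(0)=\lambdac\) and \(\nabla\fe\ne0\) elsewhere. So \(\icl_\lambda(x)\to0\).

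Item 6 is the main obstacle. From the formula above, \(\partial_\lambda\icl_\lambda(x)=\icl_\lambda(x)/\spr{\nabla\fe(h_\lambda)}{h_\lambda}\). As \(h\to0\), Lemma~\ref{lem:FreeEnergy}\ref{lem:FreeEnergy:Limit} gives \(\nabla\fe(h)\to0\), so the denominator is \(\sfo(\normII{h_\lambda})\). The numerator is of order \(\normII{h_\lambda}\), since \(\icl_\lambda(x)\ge\spr{h_\lambda}{x}\) and \(h_\lambda\) has the same direction-scale as the inscribed ball. Thus the ratio diverges. The delicate point is the comparison between numerator and denominator, and I would handle it by writing \(h_\lambda=r_\lambda u_\lambda\) with \(u_\lambda\in\bbS^1\). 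The numerator is at least \(r_\lambda\spr{u_\lambda}{x}\), and \(\spr{u_\lambda}{x}\) is bounded below: by item 1 the sets \(\calV_\lambda\) are uniformly round, and the normal at \(h_\lambda\) is \(x\). The denominator satisfies \(\spr{\nabla\fe(h_\lambda)}{h_\lambda}\le r_\lambda\normII{\nabla\fe(h_\lambda)}=\sfo(r_\lambda)\), which completes the argument.
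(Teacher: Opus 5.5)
Your proposal is correct in substance. For items~3--5 it takes a genuinely different route from the paper; items~1, 2 and~6 essentially follow the paper.

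\textbf{Items 3--5: how your route differs.} The paper quotes item~3 from \cite[Lemma~3.4]{Ioffe+Velenik-2008} and item~5 from \cite{Ioffe+Velenik-2010}. In item~4 only analyticity comes from the path \(\lambda\mapsto h_\lambda(x)\). Strict monotonicity comes from the probabilistic bound \(\frac1k\E{\lambda}{[kx]}[\abs{\gamma}]\geq\normI{[kx]}/k\), and strict concavity from an Ornstein--Zernike lower bound on \(\frac1k\V{\lambda}{[kx]}[\abs{\gamma}]\).

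You instead derive all of this from the fact that \(\icl_\lambda\) is the support function of \(\Kl\), together with the structure of \(\fe\) from Lemmas~\ref{lem:FreeEnergy} and~\ref{lem:LevelSets}. This fact is Lemma~\ref{lem:WulffShape}; it appears later in the paper, but its proof does not use the present lemma, so there is no circularity.
\begin{itemize}
\item The Minkowski-sum inclusion \(\theta\K_{\lambda_1}+(1-\theta)\K_{\lambda_2}\subset\K_{\theta\lambda_1+(1-\theta)\lambda_2}\) is a clean substitute for the variance bound.
\item Your strictness remark is right. Since \(h_{\lambda_1}(x)\neq h_{\lambda_2}(x)\) and \(\fe\) is strictly convex along the segment joining them (its Hessian is positive definite off the origin), their convex combination lies in the interior of \(\K_{\theta\lambda_1+(1-\theta)\lambda_2}\). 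So it cannot realize that set's support function in direction \(x\).
\item Item~5 via \(\Kl\downarrow\K_{\lambdac}=\{0\}\) is exactly the observation the paper itself uses when proving item~6.
\end{itemize}
Your route buys self-containedness: no Ornstein--Zernike input and no external citations for items~3 and~5. The cost is relying throughout on the analytic level-set geometry; the paper's monotonicity argument is more elementary and independent of it.

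\textbf{Three local statements need repair.}

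First, in items~4 and~6 there is no angle factor. Since \(\nabla\fe(h_\lambda(x))=\kappa x\) with \(\kappa=\normII{\nabla\fe(h_\lambda(x))}\), for \(x\in\bbS^1\) one has \(\spr{\nabla\fe(h_\lambda(x))}{h_\lambda(x)}=\kappa\,\icl_\lambda(x)\). Your ratio is therefore exactly \(\partial_\lambda\icl_\lambda(x)=1/\normII{\nabla\fe(h_\lambda(x))}\). Item~6 then follows at once from \(h_\lambda(x)\to0\) and point~\ref{lem:FreeEnergy:Limit} of Lemma~\ref{lem:FreeEnergy}, which is precisely the paper's proof. The detour through the roundness of \(\Kl\) is unnecessary.

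Second, in item~1 your first attempt cannot give a uniform constant: the bound \(\lambda\normI{x}\) stays of order \(\lambdac\) while \(\icl_\lambda\to0\) as \(\lambda\downarrow\lambdac\). The concatenation justification of your fallback also fails for repulsive weights. Intersections created by concatenating are penalized, so you get no lower bound of \(\Z{\lambda}{x+y}\) by a product; two self-avoiding walks typically do not concatenate into a self-avoiding walk. You do not need it. Being a support function, \(\icl_\lambda\) is automatically a norm, and the square symmetry of \(\fe\) gives \(\icl_\lambda(e_1)\normsup{x}\leq\icl_\lambda(x)\leq\icl_\lambda(e_1)\normI{x}\). The lower bound here is the half you left implicit. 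This yields the ratio bound with \(c=2\), as in the paper.

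Third, in item~2 you invoke the wrong curvature bound. Set \(p(\theta)\defby\icl_\lambda(\cos\theta,\sin\theta)\). The inequality with constant \(c\) is equivalent to convexity of \(\icl_\lambda-c\normII{\cdot}\), i.e.\ to \(p+p''\geq c\). But \(p+p''\) is the \emph{radius} of curvature of \(\partial\Kl\) at the point with normal direction \(\theta\). What is needed is therefore an upper bound on the curvature of \(\partial\Kl\), not a lower bound. For instance, a strictly convex lens with corners violates the inequality, whereas a stadium with flat sides satisfies it. Here the needed bound holds because \(\partial\Kl\) is a compact analytic curve whose Gauss map is a diffeomorphism (Lemma~\ref{lem:LevelSets}). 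Hence \(p+p''\) is continuous and positive, and item~2 follows directly with \(c_\lambda=\min_\theta(p+p'')>0\), without any Ornstein--Zernike input.
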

\begin{proof}
  \begin{enumerate}
  	\item This follows from the invariance of \(\icl_\lambda\) under the symmetries of the square. Indeed, \(\icl_\lambda(x)\leq \icl_\lambda(e_1)\normI{x}\) and \(\icl_\lambda(x)\geq \icl_\lambda(e_1)\normsup{x}\), so that the claim follows by comparing \(\normI{\cdot}\) and \(\normsup{\cdot}\) on \(\bbS^1\).
  	\item The proof is postponed to Section~\ref{sec:Thermo:DelayedProofs}, as it relies on a result established in Lemma~\ref{lem:LevelSets}.
  	\item This follows from Lemma~3.4 in~\cite{Ioffe+Velenik-2008}.
  	\item The proof of real analyticity is postponed to Section~\ref{sec:Thermo:DelayedProofs}, since it relies on a result established in Lemma~\ref{lem:LevelSets}.

  	\smallskip
  	It remains to prove the strict monotonicity and strict concavity. For the former, it suffices to observe that there exists \(c>0\) such that, for any \(x\in\bbS^1\) and any \(k\in\bbZ_{>0}\) large enough,
  	\[
      -\frac{\partial}{\partial\lambda} \frac{1}{k} \log \Z{\lambda}{[kx]} = \frac{1}{k}\E{\lambda}{[kx]} [\abs{\gamma}] \geq \frac{\normI{[kx]}}{k} \geq c > 0.
  	\]
  	Therefore, for any \(\lambda_1<\lambda_2\) and any \(x\in\bbS^1\),
  	\[
      -\frac{1}{k} \log \Z{\lambda_2}{[kx]} + \frac{1}{k} \log \Z{\lambda_1}{[kx]} \geq c(\lambda_2-\lambda_1),
  	\]
  	which yields, for any \(x\in\bbS^1\), \(\icl_{\lambda_2}(x) \geq \icl_{\lambda_1}(x) + c(\lambda_2-\lambda_1)\).

  	The argument for strict concavity is similar. For fixed \(\lambda>\lambdac\) and \(x\in\bbS^1\), the Ornstein--Zernike analysis in~\cite{Ioffe+Velenik-2008} implies that, for \(k\) large enough,
    \[
      \frac1k \V{\lambda}{[kx]}[\abs{\gamma}] \geq c_{\lambda,x} > 0,
    \]
    uniformly in \(\lambda\) in a compact subset of \((\lambdac,\infty)\).
    Therefore
    \[
      \frac{\partial^2}{\partial\lambda^2}
      \left[-\frac1k\log\Z{\lambda}{[kx]}\right]
      =
      -\frac1k\V{\lambda}{[kx]}[\abs{\gamma}]
      \leq -c_{\lambda,x} < 0,
    \]
    and strict concavity follows in the limit.
  	\item This is proved in~\cite[Proposition~A.1]{Ioffe+Velenik-2010}.
  	\item We postpone the proof to Section~\ref{sec:Thermo:DelayedProofs}, as it relies on some results established in Lemma~\ref{lem:LevelSets} below.
  	\qedhere
  \end{enumerate}
\end{proof}

%%%%
\subsection{Sublevel sets}
\label{sec:Thermo:LevelSets}
%%%%

The sublevel sets of the free energy
\[
  \Kl \defby \setof{h\in\bbR^2}{\fe(h) \leq \lambda}
\]
play an essential role in the analysis of these systems.
They are clearly compact convex sets with nonempty interior for all \(\lambda>\lambdac\) (see Fig.~\ref{fig:LevelLines}).
These sets can also be expressed in terms of the inverse correlation length.
\begin{lemma}\label{lem:WulffShape}
  For all \(\lambda>\lambdac\), \(\Kl = \setof{h\in\bbR^2}{\forall x\in\bbR^2,\, \spr{h}{x}\leq \icl_\lambda(x)}\).
\end{lemma}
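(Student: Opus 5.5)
We prove the two inclusions separately, working directly with the generating functions. The bridge between \(\fe\) and \(\icl_\lambda\) is the identity
\[
  \sum_{x\in\bbZ^2} e^{\spr{h}{x}}\Z{\lambda}{x} = \sum_{n\geq0} e^{-\lambda n}\Z{h}{n},
\]
obtained by summing over paths according to their length. Since \(\Z{h}{n}=e^{n\fe(h)+\sfo(n)}\) and \(\Z{h}{n}\geq e^{n\fe(h)}\), the right-hand side is finite when \(\fe(h)<\lambda\) and infinite when \(\fe(h)>\lambda\).

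\emph{Inclusion \(\subseteq\).} Take \(h\) with \(\fe(h)<\lambda\). Then, for every \(x\), \(e^{\spr{h}{[kx]}}\Z{\lambda}{[kx]}\leq \sum_{n} e^{-(\lambda-\fe(h))n+\sfo(n)} <\infty\), uniformly in \(k\). Taking \(\tfrac1k\log\) and letting \(k\to\infty\) gives \(\spr{h}{x}-\icl_\lambda(x)\leq 0\), since \([kx]/k\to x\).

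For \(h\) with \(\fe(h)=\lambda\), we approximate. We use \(h_t=th\) with \(t\uparrow1\); because \(\fe\) is convex and \(\fe(0)=\lambdac<\lambda\), we have \(\fe(th)<\lambda\). The inequality \(\spr{th}{x}\leq\icl_\lambda(x)\) then passes to the limit. This shows \(\Kl\) is contained in the right-hand side.

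\emph{Inclusion \(\supseteq\).} Here the main obstacle is to turn a large total sum into a single large term along some direction. Suppose \(\fe(h)>\lambda\). Choose \(\lambda'\in(\lambda,\fe(h))\). Then \(e^{-\lambda' n}\Z{h}{n}\geq e^{(\fe(h)-\lambda')n}\) grows exponentially in \(n\).

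For a path of length \(n\), the displacement \(x\) satisfies \(\normI{x}\leq n\), so \(\Z{h}{n}\) is a sum over at most \((2n+1)^2\) values of \(x\). Hence there exists \(x_n\) with \(\normI{x_n}\leq n\) and
\[
  e^{\spr{h}{x_n}}\Z{\lambda}{x_n}\geq e^{\spr{h}{x_n}-\lambda n}\Z{}{n,x_n}\geq (2n+1)^{-2}e^{(\fe(h)-\lambda)n}.
\]
Next we use the bound \(\Z{\lambda}{x}\leq \Bl\Z{\lambda}{0}e^{-\icl_\lambda(x)}\) recalled above. It gives
\[
  \spr{h}{x_n}-\icl_\lambda(x_n)\geq (\fe(h)-\lambda)n-2\log(2n+1)-C_\lambda .
\]
The right-hand side is strictly positive for \(n\) large. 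Therefore \(\spr{h}{x}>\icl_\lambda(x)\) for \(x=x_n\), which is a lattice point, hence in \(\bbR^2\). So \(h\) fails the condition.

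We extend \(\icl_\lambda\) to \(\bbR^2\) as a norm through its homogeneous definition, and it agrees with the lattice values by the subadditivity construction. Combining the two inclusions gives \(\Kl=\setof{h}{\forall x,\ \spr{h}{x}\leq\icl_\lambda(x)}\), i.e. \(\Kl\) is the polar (Wulff) set of the norm \(\icl_\lambda\). The only delicate input is the uniform prefactor \(\Bl\Z{\lambda}{0}\) in the bound on \(\Z{\lambda}{x}\), together with the polynomial count of displacements; both are available.
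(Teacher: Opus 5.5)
Your proof is correct. It uses the same three ingredients as the paper: the identity \(\sum_{x}e^{\spr{h}{x}}\Z{\lambda}{x}=\sum_{n}e^{-\lambda n}\Z{h}{n}\), the definition of \(\icl_\lambda\) as a limit, and the bound \(\Z{\lambda}{x}\leq\Bl\Z{\lambda}{0}e^{-\icl_\lambda(x)}\). Writing \(\Kl'\) for the right-hand side, you also pair them with the same inclusions as the paper: the definition of \(\icl_\lambda\) gives \(\Kl\subset\Kl'\), and the bubble bound gives \(\Kl'\subset\Kl\).

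The difference is that you run each inclusion from the \(\fe\) side, whereas the paper runs it from the \(\icl_\lambda\) side.

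\emph{The paper's route.} It takes \(h\) in the interior of \(\Kl'\), uses the margin \(\epsilon>0\) to sum the bubble bound over all \(x\), and concludes \(\fe(h)<\lambda\). It then needs the fact that \(\Kl'\) is the closure of its interior. For \(h\notin\Kl'\), it picks a lattice direction \(y\) and a neighborhood on which the series diverges. This gives only \(\fe\geq\lambda\) there, which is upgraded to \(\fe(h)>\lambda\) by convexity.

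\emph{Your route.} When \(\fe(h)<\lambda\), you bound each term of the convergent series and read off \(\spr{h}{x}\leq\icl_\lambda(x)\) directly from the definition. You handle the boundary case \(\fe(h)=\lambda\) by the radial scaling \(th\), \(t\uparrow1\). When \(\fe(h)>\lambda\), you pigeonhole over the at most \((2n+1)^2\) displacements to isolate one exponentially large term. The \(x\)-independent prefactor \(\Bl\Z{\lambda}{0}\) then turns this into an explicit lattice point \(x_n\) violating the inequality.

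Your version avoids the interior/closure and neighborhood arguments entirely. The only price is the polynomial count of displacements, which relies on the nearest-neighbor constraint \(\normI{X(\gamma)}\leq\abs{\gamma}\) and costs essentially nothing. The auxiliary \(\lambda'\) you introduce is never used and can be dropped.
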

\begin{proof}
  Let \(\Kl' \defby \setof{h\in\bbR^2}{\forall x\in\bbR^2,\, \spr{h}{x}\leq \icl_\lambda(x)}\).
  Let us first assume that \(h\in\mathring{\K}_\lambda'\). This entails that the number \(\epsilon \defby \min \setof{\icl_\lambda(x) - \spr{h}{x}}{\normII{x}=1}\) is strictly positive. Therefore,
  \[
    \sum_{n\geq 1} e^{-\lambda n} \Z{h}{n}
    = \sum_{x\in\bbZ^2} e^{\spr{h}{x}} \Z{\lambda}{x}
    \leq b_\lambda \Z{\lambda}{0} \sum_{x\in\bbZ^2} e^{-\epsilon \normII{x}}
    < \infty .
  \]
  Since \(\Z{h}{n} \geq  e^{\fe(h)n}\), we conclude that \(\fe(h) < \lambda\). This proves that \(\mathring{\K}_\lambda'\subset \mathring{\K}_\lambda^{\phantom\prime}\). The set \(\Kl'\) is a compact convex body with nonempty interior, hence it is the closure of its interior. Since both sets are closed, we obtain \(\Kl'\subset \Kl^{\phantom\prime}\).

  Let us now assume that \(h\notin\Kl'\). Then there exist a neighborhood \(V\) of \(h\) and \(y\in\bbZ^2\) such that \(\spr{h'}{y} > \icl_\lambda(y)\) for all \(h'\in V\) (we can choose \(y\in\bbZ^2\) by homogeneity and density of rational directions). Therefore, since \(\Z{\lambda}{ry} = \exp\{-r\icl_\lambda(y)+\sfo(r)\}\) by definition of \(\icl_\lambda\),
  \[
    \sum_{n\geq 1} e^{-\lambda n} \Z{h'}{n}
    = \sum_{x\in\bbZ^2} e^{\spr{h'}{x}} \Z{\lambda}{x}
    \geq \sum_{r\geq 1} e^{r(\spr{h'}{y} - \icl_\lambda(y)) + \sfo(r)}
    = +\infty ,
  \]
  which means that \(\fe(h') \geq \lambda\) for all \(h'\in V\).
  This implies that \(\fe(h) > \lambda\). Indeed, suppose that \(\fe(h)\leq \lambda\). Using \(\fe(0)=\lambdac<\lambda\) and convexity of \(\fe\), this would imply that \(\fe((1-\theta)h)<\lambda\) for all sufficiently small \(\theta>0\), contradicting the fact that \(\fe\geq\lambda\) in a neighborhood of \(h\). We conclude that \(\Kl \subset \Kl'\).
\end{proof}

\begin{figure}[t]
	\centering
	\includegraphics{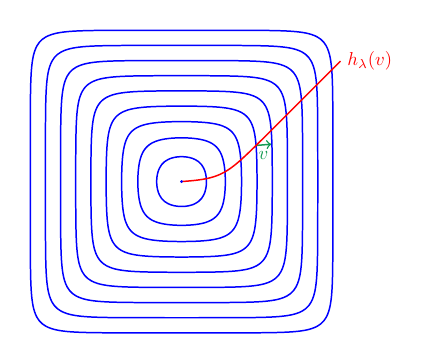}
	\caption{Some level lines of \(\fe\) and the path \(\lambda\mapsto h_\lambda(v)\) of Lemma~\ref{lem:LevelSets} in the case of the simple random walk (that is, \(\Phi(\gamma)\equiv 0\)) and \(v=(\cos(\pi/40),\sin(\pi/40))\).}
	\label{fig:LevelLines}
\end{figure}

Let us state some properties of the sublevel sets.
\begin{lemma}\label{lem:LevelSets}
  \begin{enumerate}
    \item \(\K_{\lambdac} = \{0\}\).
    \item For all \(\lambda_1>\lambda_2\geq\lambdac\), \(\partial\K_{\lambda_1} \cap \partial\K_{\lambda_2}=\emptyset\).
    \item\label{lem:LevelSets:KAnalyticStrictlyConvex} For all \(\lambda>\lambdac\), \(\partial\Kl\) is locally analytic and has strictly positive curvature.
    \item\label{lem:LevelSets:Bijection} The mapping \(h\mapsto \Phi(h) \defby \bigl( \fe(h), \frac{\nabla \fe(h)}{\normII{\nabla \fe(h)}} \bigr)\) defines a bijection between \(\bbR^2\setminus \{0\}\) and \((\lambdac, \infty) \times \bbS^1\).
    \item\label{lem:LevelSets:AnalyticityPaths} Let \(v\in\bbS^1\). For each \(\lambda>\lambdac\), let \(h_\lambda(v)\in\partial\Kl\) be the unique point such that \(\Phi(h_\lambda(v)) = (\lambda, v)\). Then the map \(\lambda \mapsto h_\lambda(v)\) defines an analytic path in \(\bbR^2\). (See Fig.~\ref{fig:LevelLines}.)
    \item\label{lem:LevelSets:Monotonicity} Let \(v\in\bbS^1\). The function \(\lambda \mapsto \normII{\nabla \fe(h_\lambda(v))}\) is strictly increasing on \((\lambdac,\infty)\).
  \end{enumerate}
\end{lemma}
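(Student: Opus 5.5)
Each item will be derived from Lemma~\ref{lem:FreeEnergy} (analyticity, positive definite Hessian, nonvanishing gradient away from \(0\), and the two limits of \(\normI{\bar v_h}\)), together with convexity and Lemma~\ref{lem:WulffShape}.

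\emph{Items 1 and 2.} By~\eqref{eq:Boundsf}, \(\fe(h)\geq\normsup{h}\), and \(\fe(0)=\lambdac\). We have \(\nabla\fe(0)=0\) by Lemma~\ref{lem:FreeEnergy}(\ref{lem:FreeEnergy:Limit}) and differentiability, so \(0\) is the minimum. Strict convexity will show it is the unique minimizer: positive definite Hessian on \(\bbR^2\setminus\{0\}\) makes \(\fe\) strictly convex along every segment, because such a segment meets \(0\) in at most one point. Hence \(\K_{\lambdac}=\{0\}\).

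For item 2, \(\partial\K_\lambda=\{\fe=\lambda\}\) for \(\lambda>\lambdac\). This holds by continuity, together with the fact that \(\fe\) has no local minimum on its level set because \(\nabla\fe\neq0\) there. The boundaries therefore lie in distinct level sets and are disjoint.

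\emph{Item 3.} The implicit function theorem, using \(\nabla\fe\neq 0\) and analyticity of \(\fe\) off the origin, makes \(\partial\Kl\) a locally analytic curve. Its curvature equals \(\spr{\Hess\fe\, t}{t}/\normII{\nabla\fe}\) with \(t\) the unit tangent, which is positive by positive definiteness.

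\emph{Item 4.} On each \(\partial\Kl\), a compact strictly convex analytic closed curve, the Gauss map \(h\mapsto\nabla\fe/\normII{\nabla\fe}\) is a bijection onto \(\bbS^1\). Strict convexity gives injectivity. Surjectivity follows because every direction is the outer normal at a maximizer of \(\spr{\cdot}{v}\) over \(\Kl\). Every \(h\neq0\) has \(\fe(h)>\lambdac\). Every value \(\lambda>\lambdac\) is attained, since \(\fe\) is continuous and unbounded by~\eqref{eq:Boundsf}. Hence \(\Phi\) is a bijection onto \((\lambdac,\infty)\times\bbS^1\).

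\emph{Item 5.} I will apply the analytic inverse function theorem to \(\Phi\), written in the local coordinates \((\lambda,\theta)\) with \(\theta\) the angle of the gradient. The Jacobian of \(h\mapsto(\fe(h),\theta(h))\) is nondegenerate. In a frame with normal \(n\) and tangent \(t\), \(\dd\fe\) sees only \(n\), and \(\dd\theta(t)=\spr{\Hess\fe\,t}{t}/\normII{\nabla\fe}>0\) is the curvature. The resulting triangular determinant is \(\normII{\nabla\fe}\cdot\)curvature \(>0\). Thus \(\Phi^{-1}\) is analytic, and \(\lambda\mapsto h_\lambda(v)=\Phi^{-1}(\lambda,v)\) is an analytic path.

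\emph{Item 6.} This is the main obstacle. Set \(G(\lambda)=\normII{\nabla\fe(h_\lambda(v))}\), and let \(n=v\) be the common normal direction along the path. By construction \(\nabla\fe(h_\lambda)=G(\lambda)n\). Differentiating \(\fe(h_\lambda)=\lambda\) gives \(G\spr{n}{\dot h}=1\). Differentiating \(\nabla\fe(h_\lambda)=Gn\) gives \(\Hess\fe\,\dot h=G'n\). Then
\[
G'=\spr{n}{\Hess\fe\,\dot h},\qquad \spr{\dot h}{\Hess\fe\,\dot h}=G'\spr{\dot h}{n}=G'/G .
\]
The left side is positive because \(\dot h\neq0\) and \(\Hess\fe\) is positive definite, so \(G'>0\). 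The step needing care is \(\dot h\neq 0\); it follows from \(G\spr{n}{\dot h}=1\). This yields strict monotonicity directly, without needing the limits as \(\lambda\downarrow\lambdac\) or as \(\lambda\to\infty\).
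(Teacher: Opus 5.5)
Your proposal is correct and follows essentially the same route as the paper: strict convexity from the positive definite Hessian gives items 1--3, the Gauss map of the strictly convex level curves gives item 4, and item 5 rests on a nondegenerate Jacobian whose determinant reduces to the quadratic form of \(\Hess\fe\) on the tangent direction, while item 6 differentiates \(\fe(h_\lambda)=\lambda\) and \(\nabla\fe(h_\lambda)=G(\lambda)v\). The differences are cosmetic, and each of your variants is valid: strict convexity along arbitrary segments replaces the symmetric segment \([-h,h]\), you prove Gauss-map bijectivity directly instead of citing Schneider, you use the inverse rather than the implicit function theorem, and you pair with \(\dot h\) instead of inverting \(\Hess\fe\).
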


In the sequel, when \(v\in\bbR^2\setminus\{0\}\) is not necessarily a unit vector, we shall use the same notation and write \(h_\lambda(v) \defby h_\lambda(v/\normII{v})\). Thus \(h_\lambda(v)\) denotes the unique point of \(\partial K_\lambda\) whose outward normal is in the direction of \(v\).
\begin{proof}
   \begin{enumerate}
    \item By symmetry, \(0\) is a minimizer of \(\fe\). If \(h\neq 0\) also satisfied \(\fe(h)=\fe(0)\), then by symmetry \(\fe(-h)=\fe(h)\), and convexity would force \(\fe\) to be constant on the segment \([-h,h]\). This would contradict the positive definiteness of the Hessian away from the origin stated in Lemma~\ref{lem:FreeEnergy}. Hence \(0\) is the unique minimizer.
    \item When \(\lambda>\lambdac\), continuity and convexity imply \(\partial \Kl = \setof{h\in\bbR^2}{\fe(h)=\lambda}\). The claim is therefore immediate.
    \item Since \(\fe\) is analytic and \(\nabla\fe(h)\neq 0\) for \(h\neq 0\), the level set \(\{\fe=\lambda\}\) is locally analytic. Its curvature is strictly positive by the positive definiteness of \(\Hess\fe(h)\) on \(\partial\Kl\).
    \item Given \(h\in\bbR^2\setminus\{0\}\), there is a unique \(\lambda>\lambdac\) such that \(h\in\partial\Kl\). Since the curvature of \(\partial \Kl\) is strictly positive, its Gauss map is an analytic diffeomorphism from \(\partial \Kl\) to \(\bbS^1\)~\cite[Section~2.5]{Schneider-1993}.
    \item Observe that \(h_\lambda(v)\) is the unique \(h\in\bbR^2\setminus\{0\}\) satisfying \(\fe(h)=\lambda\) and \(\nabla \fe(h) = \kappa v\) for some \(\kappa>0\). Of course, \(\nabla \fe(h)\) is colinear with \(v\) if and only if \(-v_2 \partial_1 \fe(h) + v_1 \partial_2 \fe(h) = 0\). Therefore, the function \(F:(\bbR^2\setminus\{0\})\times(\lambdac,\infty)\to\bbR^2\) defined by
    \[
      F(h,\lambda) \defby
      \begin{pmatrix}
        \fe(h) - \lambda \\
        -v_2 \partial_1 \fe(h) + v_1 \partial_2 \fe(h)
      \end{pmatrix}
    \]
    is real-analytic and satisfies \(F(h_\lambda(v),\lambda) = 0\). By the analytic implicit function theorem, to ensure that \(\lambda \mapsto h_\lambda(v)\) is analytic in a neighborhood of \(\lambda\), it suffices to check that the Jacobian of \(F\) with respect to \(h\),
    \[
      J_h F =
      \begin{pmatrix}
        \partial_1 \fe & \partial_2 \fe \\
        -v_2 \partial_{11} \fe + v_1 \partial_{21} \fe & -v_2 \partial_{12} \fe + v_1 \partial_{22} \fe
      \end{pmatrix},
    \]
    is non-singular at \(h_\lambda(v)\).
    Substituting the identity \(\nabla \fe(h_\lambda(v)) = \kappa v\) into the first row of the Jacobian determinant, we obtain
    \begin{align*}
      \det(J_h F)
      &= \kappa v_1(-v_2 \partial_{12} \fe + v_1 \partial_{22} \fe) - \kappa v_2(-v_2 \partial_{11} \fe + v_1 \partial_{21} \fe) \\
      &= \kappa \bigl( v_1^2 \partial_{22} \fe - 2v_1 v_2 \partial_{12} \fe + v_2^2 \partial_{11} \fe \bigr).
    \end{align*}
    The second factor in the right-hand side is exactly the quadratic form \((v^\perp)^T \Hess\fe(h_\lambda(v)) v^\perp\), where we have set \(v^\perp\defby (-v_2, v_1)\). Since \(h_\lambda(v)\neq 0\), Lemma~\ref{lem:FreeEnergy} implies that \(\Hess\fe(h_\lambda(v))\) is positive definite, which entails \(\det(J_h F)>0\).
    \item Fix \(v\in\bbS^1\) and set \(\kappa(\lambda) \defby \normII{\nabla \fe(h_\lambda(v))}\). Our goal is to establish that \(\kappa'(\lambda) > 0\) for all \(\lambda>\lambdac\).
    It follows from the definition of the path in the previous point that
    \begin{equation}\label{eq:2Identities}
      \forall \lambda>\lambdac,\qquad
      \fe(h_\lambda(v)) = \lambda
      \qquad\text{and}\qquad
      \nabla \fe(h_\lambda(v)) = \kappa(\lambda)v.
    \end{equation}
    Differentiating the first identity with respect to \(\lambda\) and using the second identity, yields
    \[
      1 = \spr[\big]{\nabla \fe(h_\lambda(v))}{\partial_\lambda h_\lambda(v)}
      = \kappa(\lambda)\, \spr[\big]{v}{\partial_\lambda h_\lambda(v)},
    \]
    and thus
    \begin{equation}\label{eq:condition}
      \spr[\big]{v}{\partial_\lambda h_\lambda(v)} = \frac{1}{\kappa(\lambda)}.
    \end{equation}
    Differentiating the second identity in~\eqref{eq:2Identities} with respect to \(\lambda\), we obtain
    \[
      \Hess\fe(h_\lambda(v))\, \partial_\lambda h_\lambda(v) = \kappa'(\lambda) v.
    \]
    As shown above, \(\Hess\fe(h_\lambda(v))\) is positive definite, which implies that it has a well-defined positive definite inverse \(\Hess\fe^{-1} = \bigl(\Hess\fe(h_\lambda(v))\bigr)^{-1}\).
    Therefore,
    \[
      \partial_\lambda h_\lambda(v) = \kappa'(\lambda) \Hess\fe^{-1} v.
    \]
    Now, substituting this into~\eqref{eq:condition}, we get
    \[
      \kappa'(\lambda) \spr{v}{\Hess\fe^{-1} v} = \frac{1}{\kappa(\lambda)},
    \]
    from which we conclude that
    \[
      \kappa'(\lambda) = \frac{1}{\kappa(\lambda)\, \spr{v}{\Hess\fe^{-1} v}} > 0,
    \]
    since \(\kappa(\lambda) > 0\), \(v\neq 0\), and \(\Hess\fe^{-1}\) is positive definite.
  \end{enumerate}
\end{proof}

%%%%
\subsection{Delayed proofs}
\label{sec:Thermo:DelayedProofs}
%%%%

\begin{proof}[Proof of point~\ref{lem:InvCorrLength:SharpTriangleIneq} of Lemma~\ref{lem:InvCorrLength}]
  By~\cite[Proposition~B.3.1]{Velenik-1997} or~\cite[(208)]{Ioffe-2015}, the inequality follows from the strict convexity of \(\partial\Kl\) (point~\ref{lem:LevelSets:KAnalyticStrictlyConvex} of Lemma~\ref{lem:LevelSets}).
\end{proof}
\begin{proof}[End of proof of point~\ref{lem:InvCorrLength:PropAsFctOfLambda} of Lemma~\ref{lem:InvCorrLength}]
  Fix \(v\in\bbS^1\). It follows from point~\ref{lem:LevelSets:AnalyticityPaths} of Lemma~\ref{lem:LevelSets} that the function \(\lambda\mapsto h_\lambda(v)\) is analytic for \(\lambda>\lambdac\). Since \(\icl_\lambda(v) = \spr{v}{h_\lambda(v)}\), it follows that the function \(\lambda\mapsto \icl_\lambda(v)\) is also analytic for \(\lambda>\lambdac\).
\end{proof}

\begin{proof}[Proof of point~\ref{lem:InvCorrLength:dmunumu} of Lemma~\ref{lem:InvCorrLength}]
  As \(\lambda\downarrow\lambdac\), the points \(h_\lambda(x)\in\partial\Kl\) converge to \(0\), since the sets \(\Kl\) decrease to \(\K_{\lambdac}=\{0\}\). The claim thus follows from point~\ref{lem:FreeEnergy:Limit} of Lemma~\ref{lem:FreeEnergy}, since, by~\eqref{eq:condition},
  \[
    \partial_\lambda\icl_\lambda(x) = \spr{\partial_\lambda h_\lambda(x)}{x} = \frac{1}{\normII{\nabla \fe(h_\lambda(x))}}.
    \qedhere
  \]
\end{proof}

%%%%
\subsection{The rate function}
\label{sec:Thermo:RateFunction}
%%%%

The rate function \(J\) is expressed as the Legendre--Fenchel transform
\begin{equation}\label{eq:JdeV}
  J(v) = \sup_{h\in\bbR^2} \bigl\{\spr{h}{v}-\fe(h)\bigr\} = \sup_{\lambda>\lambdac} \bigl\{\icl_\lambda(v)-\lambda\bigr\},
\end{equation}
where the last identity follows from
\[
  \sup_{h\in\bbR^2}
  \bigl\{\spr{h}{v}-\fe(h)\bigr\}
  =
  \sup_{\lambda>\lambdac}
  \sup_{h\in\partial\K_\lambda}
  \bigl\{\spr{h}{v}-\lambda\bigr\}
  =
  \sup_{\lambda>\lambdac}
  \bigl\{\icl_\lambda(v)-\lambda\bigr\}.
\]

The effective domain of \(J\) is the closed unit \(\ell^1\)-ball. Indeed, the lower bound \(\fe(h)\ge \normsup h\) implies that \(J(v)<\infty\) whenever \(\normI v\leq 1\). Conversely, since \(\W(\gamma)\leq 1\), one has \(\fe(h)\le \log 4+\normsup h\); if \(\normI v>1\), choosing \(h\) in a dual direction to \(v\) then gives \(J(v)=+\infty\).

When \(\normI v<1\), the supremum in~\eqref{eq:JdeV} is attained at the unique \(h\in\bbR^2\) such that \(\nabla\fe(h)=v\). If \(v\neq 0\), then \(h\neq 0\), and
\begin{equation}\label{eq:JdeV_nosup}
  J(v)=\spr{h}{v}-\fe(h).
\end{equation}
For \(v=0\), one has \(J(0)=-\lambdac\). On the boundary \(\normI v=1\), the supremum in~\eqref{eq:JdeV} may fail to be attained at a finite value of \(h\), but its value is still determined by~\eqref{eq:JdeV}; equivalently, it is the lower semicontinuous extension of the interior formula.

Our analysis of the variational problem in Section~\ref{sec:VP} relies on the following elementary identity expressing \(\icl_\mu\) in terms of the perspective function associated with \(J\); see, for instance,~\cite{Rockafellar-1970,Boyd+Vandenberghe-2004}.
\begin{lemma}\label{lem:perspective}
Let \(\mu>\lambdac\). For every \(v\in\bbR^2\),
\[
  \icl_\mu(v) = \min_{\rho\geq 0} \rho\bigl(\mu+J(v/\rho)\bigr),
\]
where the term \(\rho J(v/\rho)\) is interpreted as the perspective of \(J\), namely with the convention
\[
  \rho J(v/\rho)=
  \begin{cases}
    0, & \rho=0,\ v=0,\\
    +\infty, & \rho=0,\ v\ne0.
  \end{cases}
\]
If \(v\neq 0\), the minimum is uniquely achieved at \(\rho^* = \partial_\mu\icl_\mu(v)\).\\
If \(v=0\), the minimum is uniquely achieved at \(\rho=0\), and \(\partial_\mu\icl_\mu(0)=0\).
\end{lemma}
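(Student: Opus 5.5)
The plan is to establish the identity by convex duality between \(\icl_\mu\) and \(J\), using the support-function description of \(\Kl\) from Lemma~\ref{lem:WulffShape}. First treat \(v\neq0\). For \(\rho>0\), the definition of \(J\) as Legendre transform gives, for every \(h\in\partial\K_\mu\),
\[
  \rho\bigl(\mu+J(v/\rho)\bigr)\geq \rho\mu+\spr{h}{v}-\rho\fe(h)=\spr{h}{v}.
\]
Optimizing over \(h\in\K_\mu\) and using Lemma~\ref{lem:WulffShape}, we have \(\icl_\mu(v)=\max_{h\in\K_\mu}\spr{h}{v}\), since \(\icl_\mu\) is a norm and hence equals the support function of \(\K_\mu\). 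This yields \(\rho(\mu+J(v/\rho))\geq\icl_\mu(v)\). The case \(\rho=0\) gives \(+\infty\) by convention, so the lower bound holds for all \(\rho\geq0\).

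For equality, take \(h=h_\mu(v)\) from Lemma~\ref{lem:LevelSets}, so that \(\fe(h)=\mu\) and \(\nabla\fe(h)=\kappa(\mu)\,v/\normII v\). Set
\[
  \rho^*\defby\normII v/\kappa(\mu),
\]
so that \(\nabla\fe(h)=v/\rho^*\). Then \(\normI{v/\rho^*}<1\), because \(\nabla \fe(h)\) is the limit of a mean displacement; strictness holds since \(h\) is finite (if needed this follows from strict convexity, or one can argue directly via \eqref{eq:JdeV_nosup}). By \eqref{eq:JdeV_nosup}, \(J(v/\rho^*)=\spr{h}{v/\rho^*}-\mu\), hence \(\rho^*(\mu+J(v/\rho^*))=\spr{h}{v}=\icl_\mu(v)\), the latter because \(h\) maximizes \(\spr{\cdot}{v}\) on \(\K_\mu\) (its normal is parallel to \(v\)). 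Moreover, the proof of point~\ref{lem:InvCorrLength:dmunumu} of Lemma~\ref{lem:InvCorrLength} shows \(\partial_\mu\icl_\mu(v)=\normII v/\kappa(\mu)\), by homogeneity from the unit-vector case, so \(\rho^*=\partial_\mu\icl_\mu(v)\).

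Uniqueness is the step needing most care. Suppose some \(\rho>0\) also attains the minimum. Let \(h'\) be such that \(\nabla\fe(h')=v/\rho\), which exists when \(\normI{v/\rho}<1\). Equality in the chain above forces \(\fe(h')=\mu\), since the supremum in \(J\) is attained only at \(h'\) and we need \(\rho\mu-\rho\fe(h')+\spr{h'}{v}=\icl_\mu(v)\ge\spr{h'}{v}\) with \(\fe(h')\le\mu\). It also forces \(h'\) to maximize \(\spr{\cdot}{v}\) on \(\K_\mu\), so \(h'=h_\mu(v)\) by strict convexity (Lemma~\ref{lem:LevelSets}), and then \(\rho=\rho^*\). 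The boundary case \(\normI{v/\rho}=1\) must be excluded separately. One option is to use the sharp bound \(\fe(h)\le\log4+\normsup h\), with \(\fe(h)-\normsup h\to\) a limit, to show that any maximizing sequence \(h_n\) with \(\normII{h_n}\to\infty\) leaves \(\K_\mu\). The alternative I would try first is to note that \(\rho\mapsto\rho(\mu+J(v/\rho))\) is convex (as a perspective), and that strict convexity of \(\rho\mapsto\icl\)-type expressions near \(\rho^*\) follows from the strict convexity of \(J\) on the open ball (Hessian of \(\fe\) positive definite). A convex function with a unique interior critical point, strictly convex nearby, has a unique minimizer; this covers the boundary \(\rho\) as well.

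Finally, for \(v=0\): \(\rho(\mu+J(0))=\rho(\mu-\lambdac)\), which is uniquely minimized at \(\rho=0\) with value \(0=\icl_\mu(0)\). Also \(\icl_\mu(0)\equiv0\) gives \(\partial_\mu\icl_\mu(0)=0\).
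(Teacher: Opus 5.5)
Your proposal is correct and follows essentially the paper's route: a Fenchel--Young lower bound through the support-function description of \(\K_\mu\), equality at the \(\rho^*\) defined by \(v=\rho^*\nabla\fe(h_\mu(v))\), and the identification of \(\rho^*\) with \(\partial_\mu\icl_\mu(v)=\normII{v}/\normII{\nabla\fe(h_\mu(v))}\).

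The only real difference is the uniqueness step. The paper avoids your detour through the dual point \(h'\) of \(v/\rho\), and with it the separate boundary case \(\normI{v/\rho}=1\). It observes that \(\rho(\mu+J(v/\rho))=\spr{h_\mu(v)}{v}\) is exactly equality in Fenchel--Young at the fixed point \(h_\mu(v)\), and differentiability of \(\fe\) then forces \(v/\rho=\nabla\fe(h_\mu(v))\) directly. Your parenthetical ``with \(\fe(h')\le\mu\)'' is neither justified nor needed; what actually carries that step is the uniqueness of the Legendre maximizer.
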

\begin{proof}
  Let first \(\rho>0\). By~\eqref{eq:JdeV},
  \[
    \rho\bigl(\mu+J(v/\rho)\bigr)
    =
    \sup_{\lambda'>\lambdac}
    \bigl\{
      \icl_{\lambda'}(v)
      +
      \rho(\mu-\lambda')
    \bigr\}.
  \]
  Taking \(\lambda'=\mu\) gives \(\rho\bigl(\mu+J(v/\rho)\bigr) \geq \icl_\mu(v)\).
  For \(\rho=0\), the same inequality is immediate from the convention above.

  We now prove that equality can be achieved. If \(v=0\), then \(\icl_\mu(0)=0\), and the value \(\rho=0\) gives equality.

  Assume now that \(v\neq 0\). Let \(h_\mu(v)\in\partial \K_\mu\) be the unique supporting point associated with the direction \(v\). Thus \(\fe(h_\mu(v))=\mu\) and \(\icl_\mu(v)=\spr{h_\mu(v)}{v}\). Since \(\nabla\fe(h_\mu(v))\) is a positive multiple of \(v\), there exists a unique \(\bar\rho>0\) such that \(v=\bar\rho\,\nabla\fe(h_\mu(v))\). By~\eqref{eq:JdeV_nosup}, \(J(v/\bar\rho) = \spr{h_\mu(v)}{v/\bar\rho} - \mu\), and thus
  \(\bar\rho\bigl(\mu+J(v/\bar\rho)\bigr) = \spr{h_\mu(v)}{v} = \icl_\mu(v)\).
  Hence the minimum is achieved at \(\bar\rho\). \(\bar\rho\) can now be identified using~\eqref{eq:condition}: \(\bar\rho = \spr{\partial_\mu h_\mu(v)}{v} = \partial_\mu\icl_\mu(v)\).

  Finally, the minimizer is unique when \(v\neq 0\). Indeed, equality in the Legendre representation of \(J(v/\rho)\) with the dual point \(h_\mu(v)\) requires \(v/\rho=\nabla\fe(h_\mu(v))\). Since \(v\neq 0\), this identity determines \(\rho\) uniquely.
\end{proof}

%%%%%%%%%%%%%%%%%%%%%%%
\section{Two logarithmic estimates}
\label{sec:logAsymptotics}
%%%%%%%%%%%%%%%%%%%%%%%

In this section, we present rather rough logarithmic asymptotics for the fixed-length, fixed-displacement ensemble, both for the standard partition function and for its variant restricted to paths satisfying a suitable confinement condition. These results will be needed in the proof of our main theorems in Section~\ref{sec:ProofMain}.

\subsection{Upper bound on \texorpdfstring{\(\Z{}{n,x}\)}{Znx}}
The first result we need is a uniform upper bound on \(\Z{}{n,x}\), valid for all compatible pairs \((n,x)\), that is, for all \(n\geq 0\) and \(x\in\bbZ^2\) such that \(n-\normI{x}\) is nonnegative and even.
\begin{lemma}\label{lem:logAsymptotics:UpperZ}
For every \(\delta>0\), there exists \(n_\delta\) such that, for all compatible pairs \((n,x)\) with \(n \geq n_\delta\),
\[
  \Z{}{n,x} \leq \exp\{-nJ(x/n)+\delta n\}.
\]
\end{lemma}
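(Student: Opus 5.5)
The plan is an exponential Chebyshev bound, made uniform in \(x\) by a compactness argument. For any \(h\in\bbR^2\),
\[
  \Z{}{n,x} \leq e^{-\spr{h}{x}} \Z{h}{n}.
\]
Since \(\fe_n\) is subadditive in \(n\), namely \(\log\Z{h}{n+m}\leq \log\Z{h}{n}+\log\Z{h}{m}\) (concatenation and \(\W(\gamma\concat\gamma')\leq\W(\gamma)\W(\gamma')\) by superadditivity of \(\phi\)), we have \(\fe_n(h)\to\fe(h)\). This convergence is only from above, so we need uniformity. Each \(\fe_n\) is convex and \(\fe_n(h)\leq \log 4+\normsup h\), and \(\fe\) is continuous. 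Hence \(\fe_n\to\fe\) uniformly on compact subsets of \(\bbR^2\), by the standard fact that pointwise convergence of convex functions to a continuous limit is locally uniform. Thus for each \(R\) there is \(n_{R,\delta}\) with \(\fe_n(h)\leq\fe(h)+\delta/2\) for \(\normII h\leq R\) and \(n\geq n_{R,\delta}\). This gives
\[
  \Z{}{n,x}\leq \exp\Bigl\{-n\sup_{\normII h\leq R}\bigl(\spr{h}{x/n}-\fe(h)\bigr)+\delta n/2\Bigr\}.
\]

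It remains to show that, for a suitable \(R=R(\delta)\), the restricted supremum \(J_R(v)\defby\sup_{\normII h\leq R}\{\spr hv-\fe(h)\}\) satisfies \(J_R(v)\geq J(v)-\delta/2\) uniformly over \(\normI v\leq 1\). The functions \(J_R\) are continuous (Lipschitz with constant \(R\)) and increase pointwise to \(J\) as \(R\to\infty\).

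\begin{itemize}
  \item On the open ball \(\normI v<1\) the supremum defining \(J\) is attained at finite \(h\).
  \item On the boundary \(J\) is still finite, and \(J\) is continuous on the whole closed \(\ell^1\)-ball. This holds because a convex function on a polytope that is lower semicontinuous is upper semicontinuous there (a Gale--Klee--Rockafellar type fact for polyhedral domains), hence continuous.
\end{itemize}

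Dini's theorem on the compact set \(\{\normI v\leq1\}\) then yields uniform convergence \(J_R\uparrow J\), giving the needed \(R\). Combining the two steps, \(\Z{}{n,x}\leq \exp\{-nJ(x/n)+\delta n\}\) for \(n\geq n_\delta\defby n_{R(\delta),\delta}\). For compatible pairs \(x/n\) always lies in the ball, so no other case arises.

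The main obstacle is this uniformity near the boundary \(\normI v=1\), where the optimal \(h\) escapes to infinity. The continuity of \(J\) up to the boundary must be checked. I would verify it directly if the abstract polytope argument feels shaky, using the bounds \(\normsup h\leq\fe(h)\leq\log4+\normsup h\). These show that \(J\) is bounded on the ball, with values between \(-\log 4\) and \(-\lambdac\) or so. Convexity plus boundedness gives upper semicontinuity along segments to vertices and faces, and lower semicontinuity is automatic for a Legendre transform. An alternative fallback, avoiding Dini, is to bound \(J(v)\) by \(J(v')+\delta\) for a slightly contracted \(v'=(1-\epsilon)v\) using convexity of \(J\) and boundedness. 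With such a \(v'\), one picks \(h\) optimal for \(v'\) in a compact set and uses \(\spr h{v}\geq\spr h{v'}-\epsilon\normsup h\cdot\ldots\), which works once \(R\) is fixed first, then \(\epsilon\).
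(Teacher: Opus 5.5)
Your proof is correct, but it obtains the uniformity in \(v=x/n\) differently from the paper.

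\emph{The paper's route.} It uses compactness of the closed \(\ell^1\)-ball \(B_1\) and continuity of \(J\) to extract finitely many dual vectors \(h^{(1)},\dots,h^{(m)}\) with \(J(v)\leq\max_i\{\spr{h^{(i)}}{v}-\fe(h^{(i)})\}+\delta/2\) on \(B_1\). It then needs only the pointwise limits \(\fe_n(h^{(i)})\to\fe(h^{(i)})\) at these finitely many points. It finishes with the same exponential Chebyshev bound \(\Z{}{n,x}\leq e^{-\spr{h}{x}}\Z{h}{n}\) that you use.

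\emph{Your route.} You truncate the dual variable to \(\normII h\leq R\). This requires locally uniform convergence of the convex functions \(\fe_n\) (Rockafellar, Theorem~10.8). You then get uniformity in \(v\) from Dini's theorem applied to \(J_R\uparrow J\).

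\emph{Comparison.} The paper's finite-net argument is in effect a hands-on version of your Dini step. It is a bit more economical, since it never needs uniform control of \(\fe_n\) on a compact set of \(h\)'s; your version is more modular. Both routes rest on the same input: upper semicontinuity of \(J\) relative to \(B_1\) at the boundary \(\normI v=1\). The paper simply asserts that \(J\) is continuous on \(B_1\). Your appeal to the Gale--Klee--Rockafellar theorem is the correct justification: a closed convex function is continuous relative to any polytope contained in its domain (Rockafellar, Theorem~10.2).

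Two side remarks, neither of which affects your argument.

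First, the pathwise inequality \(\W(\gamma\concat\gamma')\leq\W(\gamma)\W(\gamma')\) does not follow from superadditivity alone. At the junction site the local times combine as \(a+b-1\), not \(a+b\). For instance, \(\phi(\ell)=\lfloor \ell/2\rfloor\) satisfies \eqref{eq:condphi} but has \(\phi(3)<2\phi(2)\). This is why \eqref{eq:IneqConcat} requires the first piece to visit the junction only at its last step. You do not need the inequality, though: existence of the limit \(\fe\) and convexity of \(\fe_n\) are all that Theorem~10.8 uses, whatever the direction of convergence.

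Second, your fallback does not bypass the boundary issue. Since \(J\) attains its minimum \(-\lambdac\) at \(0\), convexity only yields \(J((1-\epsilon)v)\leq J(v)\). What you would need is \(J(v)\leq J((1-\epsilon)v)+\delta\) uniformly in \(v\), and that is again the upper semicontinuity supplied by Gale--Klee--Rockafellar.
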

\begin{proof}
Let \(B_1\defby\{v\in\bbR^2:\normI v\leq 1\}\). Since \(J(v)=\sup_{h\in\bbR^2}\{\spr{h}{v}-\fe(h)\}\) and \(J\) is finite and continuous on \(B_1\), compactness implies that, for every \(\delta>0\), there exist finitely many vectors \(h^{(1)},\ldots,h^{(m)}\in\bbR^2\) such that
\[
  \forall v\in B_1,\qquad
  J(v) \leq \max_{1\leq i\leq m} \{\spr{h^{(i)}}{v}-\fe(h^{(i)})\} + \frac{\delta}{2} .
\]
For each \(i\), \(\lim_{n\to\infty} \frac1n\log\Z{h^{(i)}}{n} = \fe(h^{(i)})\).
Since there are only finitely many \(i\)'s, there exists \(n_\delta\) such that, for all \(n\geq n_\delta\) and all \(1\leq i\leq m\),
\[
  \Z{h^{(i)}}{n} \leq \exp\{n\fe(h^{(i)})+\tfrac{\delta}{2}n\}.
\]
Let now \((n,x)\) be compatible, \(n\geq n_\delta\), and set
\(v=x/n\). Choose \(i\) such that
\[
  \spr{h^{(i)}}{v}-\fe(h^{(i)})
  \geq J(v)-\frac{\delta}{2}.
\]
Then
\begin{align*}
  \Z{}{n,x}
  &=
  e^{-\spr{h^{(i)}}{x}} \sum_\gamma \IF{\abs{\gamma}=n,\ X(\gamma)=x} e^{\spr{h^{(i)}}{X(\gamma)}}\W(\gamma) \\
  &\leq e^{-\spr{h^{(i)}}{x}}\Z{h^{(i)}}{n} \\
  &\leq \exp\left\{ -n\spr{h^{(i)}}{v} + n\fe(h^{(i)}) + \frac{\delta}{2}n \right\} \\
  &\leq \exp\{-nJ(v)+\delta n\}.
  \qedhere
\end{align*}
\end{proof}

\subsection{Lower bound on a diamond-confined version}

\begin{figure}[t]
  \centering
  \begin{tikzpicture}[scale=0.9,baseline={([yshift=-.5ex]current bounding box.center)}]
      \tikzmath{
          \thet = 35;
          \Rds = 15;
          \lagl = \thet / 2 - 45;
          \uagl = \thet / 2 + 45;
      }
      \useasboundingbox (-.5,-1) rectangle (5,3);
      \clip  (-.5,-1) rectangle (5,3);
      \fill[yellow!30!white] (\lagl:\Rds) -- (0,0) -- (\uagl:\Rds) -- (\Rds,7) -- (\Rds,-3);
      \draw[lightgray] (-1,0) -- (10,0);
      \draw[lightgray] (0,-3) -- (0,7);
      \draw[->, thick] (0,0) -- (\thet:2) node[above] {\(\hat x\)};
      \draw (\lagl:\Rds) -- (0,0) -- (\uagl:\Rds);

      \draw (.7,0) arc[start angle=0, end angle=\thet, x radius=.7, y radius=.7] node[pos=0.5, right] {\(\theta_x\)};
      \draw (4.5,-.75) node {\(\fcone_{\hat x}\)};
    \end{tikzpicture}
    \hspace{1.5cm}
    \begin{tikzpicture}[scale=0.9,baseline={([yshift=-.5ex]current bounding box.center)}]
      \tikzmath{
          \thet = 14;
          \lagl = \thet / 2 - 45;
          \uagl = \thet / 2 + 45;
          \Rds = 15;
      }

      \def\eps{0.1cm}
      \def\mypath{(0,0)}
      \def\parseword#1{%
        \ifx#1\empty
        \else
          \ifx#1U\xdef\mypath{\mypath -- ++(0,\eps)}\fi
          \ifx#1D\xdef\mypath{\mypath -- ++(0,-\eps)}\fi
          \ifx#1R\xdef\mypath{\mypath -- ++(\eps,0)}\fi
          \ifx#1L\xdef\mypath{\mypath -- ++(-\eps,0)}\fi
          \expandafter\parseword
        \fi
      }
      \parseword RRURURDRDRURDDLLDRRRDRRDDRUURRULURRURRRDRRUULURRRRRDLDRRRURRDRRULURRURRUULURURRRUURRURRR\empty

      %       \useasboundingbox (-.5,-1) rectangle (5,3);
      \fill[yellow!30!white] (0,0) -- ++(\lagl:2.54) -- ++(\uagl:3.25) -- ++(180+\lagl:2.54) -- ++(180+\uagl:3.25);
      \draw (0,0) -- ++(\lagl:2.54) -- ++(\uagl:3.25) -- ++(180+\lagl:2.54) -- ++(180+\uagl:3.25);

      \draw[red] \mypath coordinate (x);

      \fill (0,0) circle(.05) node[left] {\(0\)};
      \fill (x) circle(.05) node[right] {\(x\)};
  \end{tikzpicture}
  \caption{Left: The forward cone \(\fcone_{\hat x}\) associated with a direction \(\hat x\) with \(\theta_x\in [0,\pi/2)\). Right: A path contributing to the partition function \(\D{}{n,x}\) (with \(n=88\)).}
  \label{fig:cone}
\end{figure}

Let \(x\in\bbZ^2\setminus\{0\}\) be such that \(\hat x \defby x/\normII{x} = (\cos\theta_x,\sin\theta_x)\) with \(\theta_x \in [0,\pi/2)\). We define the cones (see Fig.~\ref{fig:cone}, left)
\begin{gather*}
	\fcone_{\hat x} \defby \setof[\big]{(t\cos\theta,t\sin\theta)}{t > 0,\, \tfrac12\theta_x-\tfrac14\pi \leq \theta \leq \tfrac12 \theta_x +\tfrac14\pi}
	\quad\text{and}\quad
	\fbone_{\hat x} \defby -\fcone_{\hat x}.
\end{gather*}
Note that, by construction, both \(x\) and \((1,0)\) belong to the interior of \(\fcone_{\hat x}\) and \(\fcone_{\hat x} \subset \setof{y\in\bbR^2}{y_1>0}\).
We say that a polymer \(\gamma:0 \to x\) is diamond-confined if (see Fig.~\ref{fig:cone}, right)
\[
  \gamma\setminus\{\gamma_0,\gamma_n\} \subset \fcone_{\hat x}\cap(\gamma_n+\fbone_{\hat x}).
\]
In particular, a diamond-confined path cannot revisit either of its endpoints.
The remaining directions \(\hat x\) are treated by the symmetries of the lattice, with the obvious modification of the cones.

The object of interest is the partition function restricted to diamond-confined paths,
\[
  \D{}{n,x} \defby \sum_{\gamma:\,0\to x} \IF{\abs{\gamma}=n} \IF{\gamma \text{ diamond-confined}} \W(\gamma).
\]
The reason for introducing \(\D{}{n,x}\) is that, in the lower bound of Section~\ref{sec:ProofMain}, we will need to concatenate many polymer pieces.
The diamond constraint guarantees that these pieces can be concatenated without creating unwanted intersections, while the next lemma shows that, on the exponential scale relevant here, this restriction has no cost.
\begin{lemma}\label{lem:LogAsymptotics:LowerD}
Let \(K\) be a compact subset of \(\setof{v\in\bbR^2}{0<\normI v<1}\).
Then, for every \(\delta>0\), there exists \(n_\delta\) such that, for all compatible pairs \((n,x)\) with \(n\geq n_\delta\) and \(x/n\in K\),
\[
  \D{}{n,x} \geq \exp\left\{-nJ(x/n)-\delta n\right\}.
\]
\end{lemma}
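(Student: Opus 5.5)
The plan is to reduce the diamond-confined fixed-length partition function to a fixed-displacement, diamond-confined partition function in the ensemble $\p{h}{n}$. The exponential cost of confinement is then controlled by the Ornstein--Zernike theory of~\cite{Ioffe+Velenik-2008}. Fix $v=x/n\in K$. Since $0<\normI{v}<1$, there is a unique $h=h(v)\neq0$ with $\nabla\fe(h)=v$, and by~\eqref{eq:JdeV_nosup} $J(v)=\spr{h}{v}-\fe(h)$. By Lemma~\ref{lem:FreeEnergy}, $h(\cdot)$ is continuous on $K$ (inverse of the analytic diffeomorphism $\nabla\fe$), so $\{h(v):v\in K\}$ is a compact subset of $\bbR^2\setminus\{0\}$. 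We write
\[
  \D{}{n,x}=e^{-\spr{h}{x}}\Z{h}{n}\,\p{h}{n}\bigl(X(\gamma)=x,\ \gamma\text{ diamond-confined}\bigr),
\]
and use $\Z{h}{n}\geq e^{n\fe(h)}$. It then suffices to show that the last probability is at least $e^{-\delta n}$, uniformly for $h$ in a compact set away from $0$ and for compatible $x$ with $x/n$ close to $\nabla\fe(h)$.

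The key input is the renewal (Ornstein--Zernike) structure of $\p{h}{n}$ for $h\neq 0$ established in~\cite{Ioffe+Velenik-2008}. Under $\p{h}{n}$, a typical path decomposes into a concatenation of irreducible pieces. These pieces are i.i.d. up to boundary pieces with exponential tails, and each lies in the diamond spanned by its own endpoints, with cones adapted to the drift direction $\bar v_h$. The cones $\fcone_{\hat x}$ have half-aperture $\pi/4$, so each contains the drift direction with positive margin. Consequently, the concatenation of irreducible pieces whose increments lie in a slightly narrower cone stays in $\fcone_{\hat x}\cap(\gamma_n+\fbone_{\hat x})$, provided the partial sums of the effective random walk stay inside a cone/diamond around the straight segment from $0$ to $x$.

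We then proceed in three steps.
\begin{enumerate}
\item Discard the boundary pieces. Force the first and last pieces to be single steps, at cost $e^{-O(1)}$, so that the path leaves $0$ and enters $x$ inside the cones; this is possible because $(1,0)$ and $x$ are interior to the cone.
\item Lower-bound the probability that an effective directed random walk with i.i.d. steps of exponential tails and mean $\propto\bar v_h$ hits exactly $(n,x)$ (length and displacement) while staying in the diamond. A local limit theorem gives polynomial order $n^{-3/2}$ for the endpoint event. Staying in the diamond costs at most a polynomial factor, by a standard ballot/cone argument, or alternatively by the cruder bound obtained by restricting all steps to a bounded set of cone-compatible increments with positive probability and using large deviations.
\item Conclude. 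Polynomial factors are $e^{-\delta n}$ for large $n$, and all constants are uniform over the compact parameter set.
\end{enumerate}

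If invoking the full local limit theorem uniformly is awkward, a simpler fallback is available. Pick a mesoscopic length $m$, concatenate $n/m$ blocks of length $m$ and displacement close to $mv$, each taken from the unconfined partition function $\Z{}{m,y}\geq e^{-mJ(y/m)-o(m)}$. Then correct by inserting a short straight piece inside each diamond. However, unconfined blocks need not stay in diamonds, which is exactly why OZ is needed.

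The main obstacle is the uniformity in step~2 together with the exact-length constraint. We must control simultaneously the displacement and the number of monomers, which is a two-dimensional plus length local limit theorem with confinement, uniformly over $h$ in a compact set. I would handle it by tilting additionally in $\lambda$, i.e. working in the joint ensemble with weights $e^{\spr{h}{X}-\lambda\abs{\gamma}}$ where the irreducible pieces give a genuine three-dimensional aperiodic random walk (length, displacement). On this walk the uniform local CLT is applied, with parity handled by the compatibility assumption.
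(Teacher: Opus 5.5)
Your proposal is correct and follows essentially the same route as the paper. Both arguments tilt by $h=h(v)$ and $\lambda=\fe(h)$, use the cone-renewal decomposition of \cite{Ioffe+Velenik-2008}, and restrict the boundary pieces to be irreducible at both endpoints; your single steps are a concrete instance of the paper's restriction $\gamma^{\mathrm L},\gamma^{\mathrm R}\in\Omega$. Both then apply the local limit estimate for the renewal in (displacement, length) and conclude with $\spr{h}{x}-\lambda n=nJ(x/n)$.

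The separate ballot/cone argument in your step 2 is unnecessary. Once all pieces are irreducible, every junction is a cone-point of the whole path, in particular $0$ and $x$, and that is exactly diamond confinement. Also, summing the local limit estimate over the number of pieces gives $c_K/n$ rather than $n^{-3/2}$, though either polynomial bound suffices.
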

\begin{proof}
We explain how the estimate follows from the Ornstein--Zernike analysis of~\cite{Ioffe+Velenik-2008}. In fact, the results of that paper imply sharp asymptotics for the diamond-confined partition function; we only extract the logarithmic lower bound needed here.

Given a path \(\gamma=(\gamma_0,\dots,\gamma_\ell)\), we say that \(\gamma_k\) is a cone-point of \(\gamma\) if
\[
  \{\gamma_0,\dots,\gamma_{k-1}\} \subset \gamma_k+\fbone_{\hat x},
  \qquad
  \{\gamma_{k+1},\dots,\gamma_\ell\} \subset \gamma_k+\fcone_{\hat x}.
\]
The path is called backward-irreducible, forward-irreducible or irreducible according to whether the only cone-points are, respectively, the terminal point, the initial point, or the two endpoints. We denote the corresponding classes of paths started at \(0\) by \(\Omega_{\mathrm L}\), \(\Omega_{\mathrm R}\) and \(\Omega\).

For a set \(A\) of paths started at \(0\), write
\[
  \Z{}{n,x}[\gamma\in A] \defby \sum_\gamma \IF{\abs{\gamma}=n,\ X(\gamma)=x,\ \gamma\in A} \W(\gamma).
\]
In particular, \(\D{}{n,x}=\Z{}{n,x}[\gamma\in\calD]\), where \(\calD\) denotes the set of diamond-confined paths.

Let \(v=x/n\in K\), and let \(h=h(v)\) be the unique dual parameter such that \(\nabla\fe(h)=v\). Set \(\lambda=\fe(h)>\lambda_c\). Since \(K\) is compact, the corresponding parameters remain in a compact subset of the supercritical regime.

The cone-renewal decomposition of \cite[Section~3.3.6]{Ioffe+Velenik-2008} gives finite positive boundary measures \(\QL\) and \(\QR\), and a probability measure \(\QD\) on irreducible pieces, all with exponential tails, such that the following representation holds uniformly for \(v=x/n\in K\). For every set \(A\) of paths,
\begin{multline}
  e^{\spr{h}{x}-\lambda n}\Z{}{n,x}[\gamma\in A] \\
  = \sum_{m\geq c_2\normII{x}}
  \QL \times (\QD)^{\times m} \times \QR \Bigl( X(\gamma)=x,\, \abs{\gamma}=n,\, \gamma\in A \Bigr) \\
  + \sfO(e^{-c_1\normII{x}}),
  \label{eq:OZdecompD}
\end{multline}
where \(c_1,c_2>0\) depend only on \(K\), and where \(\gamma = \gamma^{\mathrm L} \concat \gamma^1 \concat \cdots \concat \gamma^m \concat \gamma^{\mathrm R}\) is the concatenation of the sampled pieces.

We apply this representation to diamond-confined paths. It is enough to restrict the right-hand side of~\eqref{eq:OZdecompD} to the event that both boundary pieces are themselves irreducible at both endpoints: \(\gamma^{\mathrm L}\in\Omega\) and \(\gamma^{\mathrm R}\in\Omega\). On this event, the cone structure of the renewal decomposition implies that the full concatenated path is diamond-confined. Hence the contribution of this event gives a lower bound on \(\D{}{n,x}\).

The boundary measures assign strictly positive mass to this event, uniformly for \(v\in K\). This follows from their explicit construction in~\cite{Ioffe+Velenik-2008}: admissible finite irreducible pieces have positive weight, and the set of pieces irreducible at both endpoints is nonempty. Since the corresponding parameters range over a compact subset of the supercritical regime, this positivity is uniform on \(K\).

After this restriction of the boundary pieces, the bulk pieces are still governed by the same i.i.d.\ renewal process with law \(\QD\). The local limit estimate for this renewal process, as in the proof of \cite[(3.36)]{Ioffe+Velenik-2008}, gives, uniformly for \(x/n\in K\),
\[
  e^{\spr{h}{x}-\lambda n}\D{}{n,x} \geq \frac{c_K}{n}
\]
for all sufficiently large \(n\), with \(c_K>0\).

Finally, our choices of \(h\) and \(\lambda\) imply that \(\spr{h}{x}-\lambda n = nJ(x/n)\), by~\eqref{eq:JdeV_nosup}. Therefore
\[
  \D{}{n,x} \geq \frac{c_K}{n}\exp\{-nJ(x/n)\}.
\]
Absorbing the subexponential factor \(c_K/n\) into \(e^{-\delta n}\), for all sufficiently large \(n\), yields the desired estimate.
\end{proof}

%%%%%%%%%%%%%%%%%%%%%%%
\section{Identification of the minimizing curve}
\label{sec:VP}
%%%%%%%%%%%%%%%%%%%%%%%

In this section, we prove the variational statements announced in Section~\ref{sec:Intro:MainResults}.

\medskip
The proof has two complementary parts. The first one is constructive. We begin by deriving, under the temporary assumption that a smooth stationary curve exists, the graph representation and the associated first-order relations. After passing to dual variables and then to monomer time, these relations reduce to a two-dimensional shooting problem. The unique solution of this shooting problem constructs an analytic curve satisfying the endpoint and microscopic-length constraints and hence removes the conditional nature of the preceding derivation. The second part is variational: the convexity of the monomer-time functional shows that this curve is a global minimizer, and the strict convexity of \(J\) identifies the equality case, yielding uniqueness of the minimizing trace among absolutely continuous competitors, up to increasing absolutely continuous reparameterization.

\medskip
The section is organized as follows.
We first show that smooth solutions of the Euler--Lagrange equation may be represented as graphs (Section~\ref{sec:VP:graph}). We then derive the first-order relations satisfied by smooth stationary graphs, expressed in terms of the dual variables associated with the level sets of the free energy (Section~\ref{sec:VP:DualVarFirstOrderRels}). The key step is the introduction of the monomer-time parameterization, in which the dual variable evolves affinely (Section~\ref{sec:VP:MonomerTime}). This reduces the construction of stationary curves to a two-dimensional shooting problem (Section~\ref{sec:VP:shooting}), whose solution is shown to exist and be unique (Section~\ref{sec:VP:Existence+Uniqueness}).
We then prove that the stationary curve is a global minimizer of the variational problem (Section~\ref{sec:VP:Minimizer}), and analyze the equality case to obtain uniqueness of minimizing traces up to increasing absolutely continuous reparameterization (Section~\ref{sec:VP:GeneralUniqueness}).
At this stage, one can show that the effective inverse correlation length parameter along the stationary curve remains bounded away from \(\lambdac\) (Section~\ref{sec:VP:CriticalBarrier}). All these results are then collected to prove Theorem~\ref{thm:SolutionPrimalVP} (Section~\ref{sec:VP:ProofSolutionPrimalVP}).
We then establish equivalence between the primal and dual variational problems (Section~\ref{sec:VP:PrimalDual}), as well as stability of the minimizer (Section~\ref{sec:VP:Stability}).
Finally, we record the explicit solution in the case of the simple random walk (Section~\ref{sec:VP:SRW}) and prove the small-field comparison with the classical catenary (Section~\ref{sec:VP:Smallg}), and study the large-field behavior (Section~\ref{sec:VP:Largeg}).

\subsection{Heuristic motivation for the shooting approach}

Although natural in hindsight, the setup required for the shooting approach used in the first part of this section might not seem obvious at first sight. In this brief subsection, we describe the heuristic argument, based on an analogy with classical mechanics, that led to this formulation.

\medskip
In the case of a particle in a field, Newton's second law yields
\[
\dot{p_1}=0, \qquad \dot{p_2}=g,
\]
and the equations of motion are derived by integrating these identities and expressing the position in terms of the momentum.

The momentum can also be defined through the Lagrangian, by the equation \(p=\nabla_{\dot{q}}L(q,\dot{q})\).
The analog of the momentum in our context is therefore
\[
  p(t) = \nabla_v\icl_{\mu(t)}(\dot\MP(t)) = h_{\mu(t)}(\dot\MP(t)).
\]
We shall show below that under the monomer-time parametrization \(t\mapsto q(t)\) (see~\eqref{eq:Intro:MonomerTime}), the Euler--Lagrange equations recover Newton's second law:
\[
  p'_1(q) = 0, \qquad p'_2(q) = g.
\]
Therefore \(p(q)=(C,s_0+gq)\), for some unknown constants \(C\) and \(s_0\).
Fixing the latter constants determines the entire trajectory \(q\mapsto \eta(q)\).
Since the velocity is given by \(\dot\eta(q) = \nabla \fe(C,s_0+gq)\), one concludes that
\[
  \eta(q) = \int_0^q \nabla \fe(C,s_0+gu)\,\dd u,
\]
with initial conditions \(\eta(0)=(0,0)\) and \(p(0)=(C,s_0)\).

The constants \(C\) and \(s_0\) must of course be chosen in such a way that the trajectory hits the prescribed endpoint at time \(\alpha\): \(\eta(\alpha) = (1,a)\). This means that \(C\) and \(s_0\) must solve the two shooting equations
\[
  \int_0^\alpha \partial_1 \fe(C,s_0+gu)\,\dd u = 1,
  \qquad
  \int_0^\alpha \partial_2 \fe(C,s_0+gu)\,\dd u = a.
\]
We shall see that this system possesses a unique solution.

%%%%
\subsection{Graph representation of stationary curves}
\label{sec:VP:graph}
%%%%

\begin{lemma}\label{lem:graph}
Let \(\lambda\in\bbR\), and let \(\MP:[0,1]\to\bbR^2\) be a nonconstant stationary curve for the Lagrangian
\[
  L(\MP,\dot\MP) = \icl_{\lambda+g\MP_2}(\dot\MP),
\]
with endpoints \(\MP(0)=(0,0)\) and \(\MP(1)=(1,a)\). Assume that \(\lambda+g\MP_2(t)>\lambdac\) on the moving part of the curve and that, after deleting constant pieces, \(\MP\) is smooth and satisfies the Euler--Lagrange equation. Then the trace of \(\MP\) is the graph of a function. More precisely, up to reparameterization, one may write
\[
  \MP(x)=(x,y(x)),\qquad x\in[0,1],
\]
for some function \(y:[0,1]\to\bbR\) with \(y(0)=0\) and \(y(1)=a\).
\end{lemma}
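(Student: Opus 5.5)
The plan is to use the conservation law coming from the absence of explicit dependence of the Lagrangian on the horizontal coordinate. Since \(L(\MP,\dot\MP)=\icl_{\lambda+g\MP_2}(\dot\MP)\) does not depend on \(\MP_1\), the first component of the Euler--Lagrange equation reads
\[
  \frac{\dd}{\dd t}\,\partial_{v_1}\icl_{\mu(t)}(\dot\MP(t)) = 0,
  \qquad \mu(t)=\lambda+g\MP_2(t),
\]
on each maximal interval where \(\dot\MP\neq0\) (after deleting constant pieces, which does not change the trace). By the Wulff-shape description of Lemma~\ref{lem:WulffShape} and the strict convexity of \(\partial\K_\mu\) (Lemma~\ref{lem:LevelSets}), the gradient \(\nabla_v\icl_\mu(v)\) at \(v\neq0\) is exactly the support point \(h_\mu(v)\in\partial\K_\mu\) whose outward normal points in the direction of \(v\). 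Hence the first coordinate \(C\defby (h_{\mu(t)}(\dot\MP(t)))_1\) is constant along the moving part of the curve. Note that this requires the curve to be smooth across the deleted constant pieces, or at least that the momentum is continuous there (the usual Weierstrass--Erdmann corner condition for stationary curves); I would state this as part of the stationarity hypothesis, since the reparameterized curve is smooth.

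The second step is to show \(C>0\). If \(C\le 0\) then, since \(h_\mu(v)\) has outward normal \(\nabla\fe(h_\mu(v))\) parallel to \(v\), I would use the symmetry of \(\fe\) under reflection \((h_1,h_2)\mapsto(-h_1,h_2)\): the point \(h\in\partial\K_\mu\) with \(h_1=0\) has normal \(\pm e_2\), and by the strict convexity and symmetry, \(\partial_1\fe(h)\) has the same sign as \(h_1\) (indeed \(h_1\mapsto \fe(h_1,h_2)\) is even and strictly convex, so \(\partial_1\fe\) vanishes exactly at \(h_1=0\) and has the sign of \(h_1\)). Therefore \(\dot\MP_1(t)\) has the sign of \(C\) wherever \(\dot\MP\neq0\). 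If \(C\le0\), then \(\MP_1\) would be nonincreasing, contradicting \(\MP_1(1)-\MP_1(0)=1>0\). So \(C>0\) and \(\dot\MP_1(t)>0\) on the whole moving part.

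Finally, \(\MP_1\) is then strictly increasing in the sense that it is nondecreasing and constant only on the deleted constant pieces, where \(\MP\) itself is constant. Consequently \(t\mapsto\MP_1(t)\) induces a homeomorphism from the quotient of \([0,1]\) (collapsing constant pieces) onto \([0,1]\), smooth with positive derivative on the moving part, and we can reparameterize by \(x=\MP_1\), writing \(\MP(x)=(x,y(x))\) with \(y(0)=0\), \(y(1)=a\). The main obstacle is the second step: establishing the sign identity \(\operatorname{sign}\partial_1\fe(h)=\operatorname{sign}h_1\) rigorously, which rests on the reflection symmetry of the model (so that \(\fe\) is even in \(h_1\)) together with the positive definiteness of \(\Hess\fe\) away from \(0\) from Lemma~\ref{lem:FreeEnergy}. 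I would also handle carefully the degenerate possibility that \(h_\mu(\dot\MP)\) hits \(h_1=0\) only at isolated times, which is excluded once \(C\) is shown to be a single constant.
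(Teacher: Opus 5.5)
Your proposal is correct and follows essentially the same route as the paper: the conserved horizontal momentum \(C=\partial_{v_1}\icl_{\mu}(\dot\MP)\), the sign identity \(\operatorname{sign}\partial_1\fe(h)=\operatorname{sign}h_1\) (from evenness of \(\fe\) in \(h_1\) and \(\partial_{11}\fe>0\) away from the origin), and the unit horizontal displacement, which forces \(C>0\) and \(\dot\MP_1>0\) on the moving part. Your concern about continuity of the momentum across deleted constant pieces is resolved in the paper exactly as you suggest: it is part of the hypothesis that the curve, with constant pieces removed and reparameterized, is smooth and satisfies the Euler--Lagrange equation.
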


\begin{proof}
After deleting intervals on which \(\dot\MP(t)=0\) a.e.\ and reparameterizing the remaining curve, we may assume that \(\MP\) is parameterized at constant speed, with \(\abs{\dot\MP(t)}=1\) for a.e.\ \(t\). This operation does not change the trace, and it preserves the stationarity condition on the moving part of the curve.

On this representative, the Lagrangian is smooth and the classical Euler--Lagrange equation applies. Let
\[
  h(t) \defby \nabla_v\icl_{\lambda+g\MP_2(t)}(\dot\MP(t)).
\]
Since the Lagrangian does not depend explicitly on the first coordinate, the first Euler--Lagrange equation gives
\[
  \frac{\dd}{\dd t}h_1(t)=0,
\]
so that \(h_1(t)\equiv C\) for some constant \(C\).

On the set where \(\dot\MP(t)\ne0\),
\[
  \operatorname{sign} \bigl(\partial_{v_1}\icl_\mu(v)\bigr)
  =
  \operatorname{sign}(v_1).
\]
Indeed, let \(\tilde h=\nabla_v\icl_\mu(v)\). By convex duality, \(v\) is
a positive multiple of \(\nabla\fe(\tilde h)\). Since \(\fe\) is even in
its first coordinate and \(\partial_{11}\fe>0\) away from the origin,
\[
  \operatorname{sign}(\partial_{v_1}\icl_\mu(v))
  =
  \operatorname{sign}(\tilde h_1)
  =
  \operatorname{sign}(\partial_1\fe(\tilde h))
  =
  \operatorname{sign}(v_1).
\]

Since \(h_1(t)\equiv C\), it follows that \(\dot\MP_1(t)\) has a fixed sign on the moving part. The total horizontal displacement is equal to \(1\), hence \(C>0\) and \(\dot\MP_1(t)>0\) a.e.\ on the moving part. Therefore \(\MP_1\) is strictly increasing there. In particular, it cannot take the same value at two distinct times unless \(\dot\MP=0\) a.e.\ on the corresponding interval, which has been removed. Hence the trace of \(\MP\) is the graph of a function.
\end{proof}
\begin{remark}
The removal of constant pieces allows one to avoid the lack of differentiability of the Lagrangian at \(v=0\). An alternative approach would consist in working directly with the corresponding subgradient formulation of the Euler--Lagrange equation.
\end{remark}

%%%%
\subsection{Dual variables and first-order relations}
\label{sec:VP:DualVarFirstOrderRels}
%%%%

We now write the Euler--Lagrange equation in a form adapted to the geometry of the level sets of \(\fe\). Let \(\MP(x)=(x,y(x))\) be a smooth graph satisfying the Euler--Lagrange equation, and set \(\mu(x) = \lambda+gy(x)\) and \(v(x)=(1,y'(x))\).
Introduce the dual variable \(h(x)\defby \nabla_v\icl_{\mu(x)}(v(x))=(C,s(x))\).

We shall repeatedly use the following elementary consequence of convex duality, which relates the slope \(v_2\) to the dual variables.
\begin{lemma}\label{lem:duality}
  Let \(\mu>\lambdac\) and \(v=(1,v_2)\). Let \(h_\mu(v)=(C,s)\in\partial\K_\mu\) be the dual point associated with
the direction \(v\). Then
  \[
    v_2=\frac{\partial_2 \fe(C,s)}{\partial_1 \fe(C,s)},
    \qquad\text{and}\qquad
    \partial_\mu\icl_\mu(1,v_2)=\frac{1}{\partial_1 \fe(C,s)}.
  \]
\end{lemma}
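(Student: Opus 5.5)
The plan is to use the description of \(h_\mu(v)\) from Lemma~\ref{lem:LevelSets}. By definition, \(h=h_\mu(v)=(C,s)\) is the unique point of \(\partial\K_\mu\) whose outward normal points in the direction of \(v\). Since \(\fe(h)=\mu\) on \(\partial\K_\mu\) and the outward normal there is \(\nabla\fe(h)\), we get \(\nabla\fe(C,s)=\kappa v\) for some \(\kappa>0\), as in~\eqref{eq:2Identities} after normalising \(v\). Writing \(v=(1,v_2)\) and comparing components gives
\[
  \partial_1\fe(C,s)=\kappa
  \qquad\text{and}\qquad
  \partial_2\fe(C,s)=\kappa v_2 .
\]
The first equation shows \(\partial_1\fe(C,s)>0\). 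Dividing the second by the first yields the slope identity \(v_2=\partial_2\fe(C,s)/\partial_1\fe(C,s)\).

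For the second identity, I will apply Lemma~\ref{lem:perspective} with \(v\neq0\). It says that \(\partial_\mu\icl_\mu(v)\) equals the unique \(\bar\rho>0\) with \(v=\bar\rho\,\nabla\fe(h_\mu(v))\). Comparing first coordinates gives \(1=\bar\rho\,\partial_1\fe(C,s)\), so \(\bar\rho=1/\partial_1\fe(C,s)\). This is the claim.

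As a cross-check, one can also argue through the derivative computation in the proof of point~\ref{lem:InvCorrLength:dmunumu}. Because \(\icl_\mu\) is one-homogeneous, \(\icl_\mu(v)=\normII{v}\,\icl_\mu(\hat v)\). Relation~\eqref{eq:condition} then gives
\[
  \partial_\mu\icl_\mu(v)=\frac{\normII{v}}{\normII{\nabla\fe(h)}}=\frac{\normII v}{\kappa\normII v}=\frac1\kappa,
\]
which again equals \(1/\partial_1\fe(C,s)\).

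The only delicate point is bookkeeping. The convention \(h_\mu(v)\defby h_\mu(v/\normII v)\) has to be used consistently, so that the dual point is unchanged under rescaling \(v\) while \(\partial_\mu\icl_\mu\) scales linearly in \(v\). This is what makes the first-coordinate normalisation \(v_1=1\) produce exactly \(1/\partial_1\fe\). Positivity of \(\partial_1\fe(C,s)\) comes directly from \(v_1=1>0\) and \(\kappa>0\), so the division is legitimate.
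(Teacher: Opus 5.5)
Your proof is correct and follows the paper's: the slope identity comes from \(\nabla\fe(C,s)=\kappa v\) exactly as there, and for the second identity the paper differentiates \(\icl_\mu(v)=\spr{h_\mu(v)}{v}\) in \(\mu\) and uses \(\spr{\nabla\fe(h_\mu(v))}{\partial_\mu h_\mu(v)}=1\), which is your cross-check via~\eqref{eq:condition}. Your main route through Lemma~\ref{lem:perspective} is the same computation repackaged: that lemma's identification \(\bar\rho=\partial_\mu\icl_\mu(v)\), which appears in its proof rather than its statement, is itself obtained from~\eqref{eq:condition}.
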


\begin{proof}
  By the definition of \(h_\mu(v)\), we have \(\icl_\mu(v)=\spr{h_\mu(v)}{v}\) and \(\fe(h_\mu(v)) = \mu\).
  Let \(b > 0\) be such that \(v = b\nabla \fe(h)\). Since the first component of \(v\) is equal to \(1\), we deduce that
  \[
    b = \frac{1}{\partial_1 \fe(C,s)},
  \]
  and the first claim follows. Differentiating \(\icl_\mu(v)=\spr{h_\mu(v)}{v}\) with respect to \(\mu\) gives
  \[
    \partial_\mu\icl_\mu(v) = \spr{\partial_\mu h_\mu(v)}{v} = b\, \spr{\partial_\mu h_\mu(v)}{\nabla \fe(h_\mu(v))} = b,
  \]
  since differentiating \(\fe(h_\mu(v)) = \mu\) yields \(\spr{\nabla \fe(h_\mu(v))}{\partial_\mu h_\mu(v)} = 1\). This proves the second claim.
\end{proof}

Since \(h(x)\in\partial \K_{\mu(x)}\), we have
\begin{equation}\label{eq:level-set-constraint}
  \fe(C,s(x)) = \lambda+gy(x).
\end{equation}
Moreover, Lemma~\ref{lem:duality} gives
\begin{equation}\label{eq:yprime}
  y'(x)=\frac{\partial_2\fe(C,s(x))}{\partial_1\fe(C,s(x))}
\end{equation}
and
\begin{equation}\label{eq:dmunumu}
  \partial_\mu\icl_\mu(1,y'(x))
  \vert_{\mu=\lambda+gy(x)}
  =
  \frac1{\partial_1\fe(C,s(x))}.
\end{equation}
The second Euler--Lagrange equation reads
\[
  \frac{\dd}{\dd x}s(x)
  =
  g\,
  \partial_\mu\icl_\mu(1,y'(x))
  \vert_{\mu=\lambda+gy(x)} .
\]
Using~\eqref{eq:dmunumu}, we obtain
\begin{equation}\label{eq:sprime}
  s'(x)=\frac{g}{\partial_1\fe(C,s(x))}.
\end{equation}

These identities reduce the construction of a stationary graph satisfying the endpoint and microscopic-length constraints to the determination of the scalar dual variable \(s(x)\). In the next subsection, we introduce a convenient parameterization in which the evolution equation~\eqref{eq:sprime} becomes affine.

%%%%
\subsection{Monomer-time parameterization}
\label{sec:VP:MonomerTime}
%%%%

Recall that, in graph parameterization, the microscopic-length constraint reads
\[
  \int_0^1 \partial_\mu\icl_\mu(1,y'(x)) \vert_{\mu=\lambda+gy(x)} \,\dd x = \alpha .
\]
By~\eqref{eq:dmunumu},
\[
  \partial_\mu\icl_\mu(1,y'(x)) \vert_{\mu=\lambda+gy(x)} = \frac1{\partial_1\fe(C,s(x))}.
\]
This suggests introducing the monomer-time parameter \(q\) via
\[
  \frac{\dd q}{\dd x} = \frac1{\partial_1\fe(C,s(x))}, \qquad q(0)=0,
\]
Since \(C>0\), this is an increasing change of variables. Of course,
\[
  \frac{\dd x}{\dd q}
  =
  \partial_1\fe(C,s(q)).
\]
The microscopic-length constraint gives \(q(1)=\alpha\).
Using~\eqref{eq:yprime}, we obtain
\[
  \frac{\dd y}{\dd q} = \partial_2\fe(C,s(q)).
\]
We now regard \(s\) as a function of \(q\), writing \(s=s(q)\), and compute the evolution of \(s\) in this parameterization. From~\eqref{eq:sprime} and the chain rule, we obtain
\[
  \frac{ds}{dq} = g.
\]
Therefore, \(s(q)=s_* + gq\) for some constant \(s_*\). Thus, in monomer time, the dual variable evolves affinely.

The physical curve is then recovered by integrating \(\nabla\fe\) along this affine path in dual space:
\begin{equation}\label{eq:parameteric-q}
  x(q) = \int_0^q \partial_1 \fe(C,s_*+g u)\,\dd u,
  \qquad
  y(q) = \int_0^q \partial_2 \fe(C,s_*+g u)\,\dd u.
\end{equation}
Equivalently, using \(s=s_*+gq\), this representation can be written as
\begin{equation}\label{eq:parameteric-solution}
  x(s)
  =
  \frac1g\int_{s_*}^s \partial_1\fe(C,u)\,\dd u,
  \qquad
  y(s)
  =
  \frac{\fe(C,s)-\fe(C,s_*)}{g},
\end{equation}
with \(s\in[s_*,s_*+g\alpha]\).

%%%%
\subsection{The shooting problem}
\label{sec:VP:shooting}
%%%%

The representation~\eqref{eq:parameteric-q} reduces the problem to determining the constants \((C,s_*)\) so that the curve connects \((0,0)\)
to \((1,a)\).
Since the curve is parameterized by \(q\in[0,\alpha]\), the boundary conditions become \(x(\alpha) = 1\) and \(y(\alpha) = a\).
Using~\eqref{eq:parameteric-q}, this yields the system
\begin{equation}\label{eq:shooting}
  \int_0^\alpha \partial_1 \fe(C,s_*+g u)\,\dd u = 1,
  \qquad
  \int_0^\alpha \partial_2 \fe(C,s_*+g u)\,\dd u = a.
\end{equation}
We now analyze this system.

\subsubsection*{Uniqueness}
We first prove uniqueness by reducing the system to a single equation in \(C\), using the monotonicity of the corresponding one-dimensional maps.
Fix \(C>0\). We first consider the second equation in~\eqref{eq:shooting}. Define
\[
  \Phi(C,s) \defby \int_0^\alpha \partial_2 \fe(C,s+gu)\,\dd u .
\]
Since \(\Hess\fe\) is positive definite on \(\mathbb R^2\setminus\{0\}\) and \(C>0\), we have
\[
  \partial_s \Phi(C,s) = \int_0^\alpha \partial_{22} \fe(C,s+gu)\,\dd u > 0.
\]
Thus, for each fixed \(C>0\), the equation \(\Phi(C,s)=a\) has at most one solution, which we denote by \(s=s(C)\).
By the implicit function theorem, \(s(C)\) is \(C^1\) on its domain of definition. Differentiating the identity \(\Phi(C,s(C)) = a\) with respect to \(C\) leads to
\begin{equation}\label{eq:sprimeC}
  s'(C)
  =
  - \frac{
  \displaystyle\int_0^\alpha \partial_{12} \fe(C,s(C)+gu)\,\dd u
  }{
  \displaystyle\int_0^\alpha \partial_{22} \fe(C,s(C)+gu)\,\dd u
  }.
\end{equation}
We now substitute \(s=s(C)\) into the first equation and define
\[
  \Psi(C) \defby \int_0^\alpha \partial_1 \fe(C,s(C)+gu)\,\dd u .
\]
We claim that \(\Psi\) is strictly increasing. Differentiating gives
\[
  \Psi'(C)
  =
  \int_0^\alpha \partial_{11} \fe(C,s(C)+gu)\,\dd u
  +
  s'(C) \int_0^\alpha \partial_{12} \fe(C,s(C)+gu)\,\dd u .
\]
Using \eqref{eq:sprimeC}, this becomes
\[
  \Psi'(C) = A-\frac{B^2}{D} = \frac{AD-B^2}{D},
\]
where
\[
  A \defby \int_0^\alpha \partial_{11} \fe(C,s(C)+gu)\,\dd u,\qquad
  B \defby \int_0^\alpha \partial_{12} \fe(C,s(C)+gu)\,\dd u,
\]
and
\[
  D \defby \int_0^\alpha \partial_{22} \fe(C,s(C)+gu)\,\dd u .
\]
Since \(D>0\), it remains to check that \(AD-B^2>0\). But
\[
  \begin{pmatrix}
    A & B\\
    B & D
  \end{pmatrix}
  =
  \int_0^\alpha
  \Hess\fe(C,s(C)+gu)\, \dd u .
\]
Since \(\Hess\fe\) is positive definite along the whole segment \(\setof{(C,s(C)+gu)}{{0\leq u\leq\alpha}}\), its integral is also positive definite. Therefore \(AD-B^2>0\), and \(\Psi'(C) > 0\).

We conclude that the equation \(\Psi(C)=1\) can have at most one solution. Since, for each such \(C\), the second equation in~\eqref{eq:shooting} determines \(s(C)\) uniquely, the shooting system has at most one solution.

\subsubsection*{Existence}
We now turn to the proof of existence. The argument relies on the following lemma.
\begin{lemma}\label{lem:sC-Psi}
For every \(C>0\), there exists a unique \(s(C)\in\bbR\) such that
\[
  \int_0^\alpha \partial_2 \fe(C,s(C)+gu) \, \dd u = a .
\]
Moreover, \(C\mapsto s(C)\) is analytic. If
\[
  \Psi(C) \defby \int_0^\alpha \partial_1 \fe(C,s(C)+gu) \, \dd u,
\]
then
\[
  \lim_{C\downarrow 0} \Psi(C) = 0
  \quad\text{and}\quad
  \lim_{C\to\infty} \Psi(C) = \alpha - \abs{a}.
\]
\end{lemma}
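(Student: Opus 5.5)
The plan is to treat the three claims separately: existence and analyticity of \(s(C)\), then the limit \(C\downarrow0\), then the limit \(C\to\infty\). Throughout, I would use three facts from Section~\ref{sec:Thermo}: \(\normsup h\le\fe(h)\le\log4+\normsup h\); \(\normI{\nabla\fe}\le1\); and \(\fe\) is convex and \(C^1\) on \(\bbR^2\) (differentiable everywhere by Lemma~\ref{lem:FreeEnergy}), analytic with positive definite Hessian away from \(0\). I would also use lattice symmetry: \(\fe\) is even in each coordinate separately. Hence \(\partial_1\fe(0,t)=0\), and \(\operatorname{sign}\partial_2\fe(C,t)=\operatorname{sign}t\), because \(t\mapsto\fe(C,t)\) is even and strictly convex.

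\emph{Existence and analyticity of \(s(C)\).} Fix \(C>0\). I would first show that \(\partial_2\fe(C,t)\to\pm1\) as \(t\to\pm\infty\). For \(t>0\), convexity in \(t\) gives
\[
\partial_2\fe(C,t)\ge\frac{\fe(C,t)-\fe(C,0)}{t}\ge\frac{t-\log4-C}{t}\longrightarrow 1,
\]
while \(\partial_2\fe\le1\) always; the case \(t\to-\infty\) is symmetric. By dominated convergence, \(s\mapsto\Phi(C,s)\) then has limits \(\pm\alpha\) at \(\pm\infty\). It is continuous and strictly increasing, as shown in the uniqueness part. Since \(\abs a<\alpha\), there is a unique \(s(C)\). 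The segment \(\{(C,s+gu)\}\) avoids the origin, so \(\Phi\) is analytic near \((C,s(C))\), and \(\partial_s\Phi>0\). The analytic implicit function theorem therefore makes \(C\mapsto s(C)\) analytic.

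\emph{Limit \(C\downarrow0\).} The same bounds show that \(\Phi\) extends continuously to \(C=0\), and \(\Phi(0,\cdot)\) is still strictly increasing with limits \(\pm\alpha\). Strict monotonicity holds because \(\partial_{22}\fe(0,t)>0\) for \(t\neq0\). Continuity of \(\Phi\) in \(C\) together with monotonicity in \(s\) then keeps \(s(C)\) in a compact set as \(C\downarrow0\); I would argue this by comparing \(\Phi(C,\pm M)\) with \(\Phi(0,\pm M)\gtrless a\). Since \(\nabla\fe\) is continuous and \(\partial_1\fe(0,t)=0\), we get \(\partial_1\fe(C,t)\to0\) uniformly for \(t\) in a compact set. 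Hence \(\Psi(C)\to0\).

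\emph{Limit \(C\to\infty\); this is the main obstacle.} Along the segment \(\normII{(C,t)}\ge C\to\infty\), so Lemma~\ref{lem:FreeEnergy}, point~\ref{lem:FreeEnergy:Largeh}, gives \(\partial_1\fe+\abs{\partial_2\fe}\to1\) uniformly, where \(\partial_1\fe\ge0\) since \(C>0\). Therefore
\[
\Psi(C)=\alpha-\int_0^\alpha\abs{\partial_2\fe(C,s(C)+gu)}\,\dd u+\sfo(1),
\]
and it remains to show that \(\int_0^\alpha\abs{\partial_2\fe}\to\abs a=\abs{\int_0^\alpha\partial_2\fe}\). The difference is at most twice the integral over the part where \(\partial_2\fe\) has the minority sign. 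By the sign property, that part of the segment lies in \(\abs t\le g\alpha\). There, convexity in \(t\) with step \(M=C-\abs t\) gives
\[
\abs{\partial_2\fe(C,t)}\le\frac{\fe(C,\pm(\abs t+M))-\fe(C,t)}{M}\le\frac{\log4}{C-g\alpha}\longrightarrow0.
\]
This uses \(\fe(C,t)\ge C\) and \(\fe(C,\pm C)\le\log4+C\). The two errors vanish uniformly, which gives \(\Psi(C)\to\alpha-\abs a\).
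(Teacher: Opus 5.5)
Your proof is correct and has the same skeleton as the paper's. The shared ingredients are the limits \(\partial_2\fe(C,t)\to\pm1\) from convexity and \(\fe\geq\normsup{\cdot}\), the intermediate value theorem with strict monotonicity, the analytic implicit function theorem, \(\partial_1\fe(0,\cdot)=0\) for \(C\downarrow0\), and point~\ref{lem:FreeEnergy:Largeh} of Lemma~\ref{lem:FreeEnergy} for \(C\to\infty\). The two limits are carried out differently, however, and in both cases your version is the more complete one.

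For \(C\downarrow0\), the paper simply invokes dominated convergence. That step tacitly needs \(\partial_1\fe(C,s(C)+gu)\to0\) pointwise, hence some control of \(s(C)\). Your comparison of \(\Phi(C,\pm M)\) with \(\Phi(0,\pm M)\) supplies exactly the boundedness of \(s(C)\) that makes it legitimate.

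For \(C\to\infty\), the paper splits according to the sign of \(a\). It shows by contradiction that \(s(C)\geq0\) (resp.\ \(s(C)+g\alpha\leq0\)) for large \(C\), uses symmetry when \(a=0\), and asserts without justification that \(\partial_2\fe(C_n,t)\to0\) when \(t\) stays bounded. You instead set \(w(u)=\partial_2\fe(C,s(C)+gu)\) and use
\(\int_0^\alpha\abs{w}\,\dd u-\abs{a}=2\min\bigl\{\int_0^\alpha w^+\,\dd u,\int_0^\alpha w^-\,\dd u\bigr\}\).
You then note that both signs occur only if the segment straddles \(0\), so that it lies in \(\abs{t}\leq g\alpha\). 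There you prove the explicit bound \(\abs{\partial_2\fe(C,t)}\leq\log4/(C-\abs{t})\) from \(\normsup{h}\leq\fe(h)\leq\log4+\normsup{h}\).

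Your route treats every \(a\) at once and gives an \(\sfO(1/C)\) rate for that error term. It also proves the decay the paper takes for granted. The paper's case split additionally shows that, for \(a\neq0\), the dual segment eventually lies on one side of \(s=0\). Your bound recovers this too: if both signs occurred, then \(\abs{a}\leq\alpha\log4/(C-g\alpha)\), which fails for large \(C\).
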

\begin{proof}
Fix \(C>0\) and define
\[
  \Phi_C(s) \defby \int_0^\alpha \partial_2 \fe(C,s+gu) \, \dd u .
\]
We first identify the range of the function \(s\mapsto\Phi_C(s)\). We claim that, for every fixed \(C>0\),
\[
  \lim_{s\to+\infty} \partial_2 \fe(C,s) = 1
  \quad\text{and}\quad
  \lim_{s\to-\infty} \partial_2 \fe(C,s) = -1.
\]
Indeed, by convexity of the function \(s\mapsto \fe(C,s)\),
\[
  \forall s>0,\qquad
  \partial_2 \fe(C,s) \geq \frac{\fe(C,s)-\fe(C,0)}{s}.
\]
By the lower bound~\eqref{eq:Boundsf}, \(\fe(C,s)\geq s\), so that the right-hand side tends to \(1\) as
\(s\to+\infty\). From this and the elementary bound \(\partial_2 \fe(C,s)\leq 1\), we obtain \(\lim_{s\to+\infty} \partial_2 \fe(C,s) = 1\).
The second limit follows from symmetry in the second coordinate.

We infer, by dominated convergence, that
\[
  \lim_{s\to+\infty}\Phi_C(s)=\alpha,
  \quad\text{and}\quad
  \lim_{s\to-\infty}\Phi_C(s)=-\alpha.
\]
Since \(\abs{a}<\alpha\), the intermediate value theorem gives existence of a (necessarily unique) solution to the equation \(\Phi_C(s)=a\), which we denote by \(s(C)\). Since
\[
  \partial_s \Phi_C(s)
  =
  \int_0^\alpha \partial_{22} \fe(C,s+gu) \, \dd u>0,
\]
and since \(\fe\) is analytic on \(\bbR^2\setminus\{0\}\), the analytic implicit function theorem shows that \(C\mapsto s(C)\) is analytic.

We now establish the limiting behavior of \(\Psi\). First, as \(C\downarrow0\), the symmetry of \(\fe\) in the first coordinate gives \(\partial_1 \fe(0,s) = 0\) for all \(s\in\bbR\). Together with \(\abs{\partial_1 \fe}\leq 1\), dominated convergence yields
\[
  \lim_{C\downarrow 0} \Psi(C) = 0.
\]

We now consider the limit \(C\to\infty\). Observe that \(\normII{(C,s(C)+gu)} \geq C\), and thus \(\normII{(C,s(C)+gu)}\xrightarrow{C\to\infty}\infty\) uniformly in \(u\in[0,\alpha]\).
Hence, by Point~\ref{lem:FreeEnergy:Largeh} of Lemma~\ref{lem:FreeEnergy},
\[
  \normI{\nabla \fe(C,s(C)+gu)} \xlongrightarrow{C\to\infty} 1
\]
uniformly in \(u\in [0,\alpha]\). Since \(C>0\), it follows from symmetry and convexity that \(\partial_1 \fe(C,s)\geq 0\), and therefore \(\partial_1 \fe(C,s(C)+gu) + \abs{\partial_2 \fe(C,s(C)+gu)} = \normI{\nabla \fe(C,s(C)+gu)} \xrightarrow{C\to\infty} 1\) uniformly in \(u\in[0,\alpha]\). Hence
\[
  \Psi(C) + \int_0^\alpha \abs{\partial_2 \fe(C,s(C)+gu)} \, \dd u \xlongrightarrow{C\to\infty} \alpha .
\]
The desired conclusion will thus follow once we show that
\[
  \int_0^\alpha \abs[\big]{\partial_2 \fe(C,s(C)+gu)} \, \dd u \xlongrightarrow{C\to\infty} \abs{a}.
\]
Set \(w_C(u) \defby \partial_2 \fe(C,s(C)+gu)\).
Since \(s\mapsto \partial_2 \fe(C,s)\) is strictly increasing and \(\partial_2 \fe(C,0)=0\), the sign of \(w_C(u)\) is the sign of \(s(C)+gu\).

Assume first that \(a>0\). We claim that, for all sufficiently large \(C\), \(s(C)\geq 0\).
Indeed, if this were not the case, we could find a sequence \(C_n\xrightarrow{n\to\infty}\infty\) with \(s(C_n)<0\) for all \(n\). Since
\begin{equation}\label{eq:intweqa}
  \forall n,\qquad
  \int_0^\alpha w_{C_n}(u) \, \dd u = a > 0,
\end{equation}
the interval \([s(C_n),s(C_n)+g\alpha]\) cannot be contained in \((-\infty,0]\). Hence it must contain \(0\). Therefore
\(\abs{s(C_n)+gu} \leq g\alpha\) for all \(u\in [0,\alpha]\).
Along this sequence, the second coordinate stays bounded while the first coordinate tends to \(+\infty\). Hence
\(\partial_2 \fe(C_n,s(C_n)+gu)\xlongrightarrow{n\to\infty} 0\) uniformly in \(u\in[0,\alpha]\). Consequently,
\[
  \int_0^\alpha w_{C_n}(u) \, \dd u\xlongrightarrow{n\to\infty} 0,
\]
contradicting the identity~\eqref{eq:intweqa}.
Thus \(s(C)\geq 0\) for all large \(C\), and therefore \(w_C(u)\geq 0\) for all \(u\in [0,\alpha]\). It follows that, for all large \(C\),
\[
  \int_0^\alpha \abs{w_C(u)} \, \dd u = \int_0^\alpha w_C(u) \, \dd u = a.
\]

The case \(a<0\) is identical: one shows that, for all large \(C\), \(s(C)+g\alpha\leq 0\), so that \(w_C(u)\leq 0\) on \([0,\alpha]\), and hence
\[
  \int_0^\alpha \abs{w_C(u)} \, \dd u = -\int_0^\alpha w_C(u) \, \dd u = \abs{a}.
\]

Finally, if \(a=0\), uniqueness and symmetry give \(s(C) = -g\alpha/2\).
Thus \(s(C)+gu\) remains bounded uniformly in \(u\), and the same large-\(C\) argument gives \(\sup_{u\in [0,\alpha]} \abs[\big]{\partial_2 \fe(C,s(C)+gu)} \xrightarrow{C\to\infty} 0\).
Therefore
\[
  \int_0^\alpha \abs{w_C(u)} \, \dd u \xrightarrow{C\to\infty} 0 = \abs{a}.
\]
This proves the desired convergence in all cases.
\end{proof}

Since \(\alpha>1+\abs{a}\), it follows from Lemma~\ref{lem:sC-Psi} that
\[
  \lim_{C\downarrow0}\Psi(C)=0<1<\alpha-\abs{a} = \lim_{C\to\infty}\Psi(C).
\]
By continuity, there exists \(C>0\) such that \(\Psi(C)=1\).
Together with the corresponding \(s(C)\), this gives a solution of the shooting system.

%%%%
\subsection{Existence and uniqueness of the stationary solution}
\label{sec:VP:Existence+Uniqueness}
%%%%

Although global uniqueness of minimizers will ultimately follow from strict convexity, the uniqueness of the shooting solution gives a direct and useful identification of the analytic stationary curve.

\begin{proposition}\label{prop:shooting-solution}
There exists a unique pair
\((C_*,s_*)\in(0,\infty)\times\bbR\) solving the shooting system
\[
  \int_0^\alpha \partial_1 \fe(C_*,s_*+gu) \, \dd u = 1,
  \qquad
  \int_0^\alpha \partial_2 \fe(C_*,s_*+gu) \, \dd u = a.
\]
The curve defined by
\[
  x(q) = \int_0^q \partial_1 \fe(C_*,s_*+gu) \, \dd u,
  \qquad
  y(q) = \int_0^q \partial_2 \fe(C_*,s_*+gu) \, \dd u,
\]
with \(q\in [0,\alpha]\), connects \((0,0)\) to \((1,a)\). Moreover,
\[
  \forall q\in [0,\alpha],\qquad
  \partial_1\fe(C_*,s_*+gq) > 0,
\]
so \(q\mapsto x(q)\) is strictly increasing and the curve is the graph of an analytic function \(y_*:[0,1]\to\bbR\).
Finally, if \(\lambda_* \defby \fe(C_*,s_*)\), then the curve \(\MP_*(x)=(x,y_*(x))\) satisfies the Euler--Lagrange equation for the original Lagrangian and the primal microscopic-length constraint. It is the unique smooth graph solution of the Euler--Lagrange equation satisfying the endpoint and mi\-cro\-scop\-ic-length constraints.
\end{proposition}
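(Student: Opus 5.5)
Existence and uniqueness of \((C_*,s_*)\) come directly from the shooting analysis above. The uniqueness argument shows that, for each \(C>0\), the second equation has at most one solution \(s(C)\), and that \(\Psi\) is strictly increasing, so \(\Psi(C)=1\) has at most one root. Lemma~\ref{lem:sC-Psi} provides \(s(C)\) for every \(C>0\) and the limits \(\Psi(0^+)=0<1<\alpha-\abs a=\Psi(\infty)\). By continuity this yields a root \(C_*\), and we set \(s_*=s(C_*)\). The endpoint conditions \(x(\alpha)=1\), \(y(\alpha)=a\) are then just the shooting equations restated.

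For strict positivity of \(\partial_1\fe(C_*,s_*+gq)\), use that \(\fe\) is even in its first coordinate, so \(\partial_1\fe(0,s)=0\). Since \(\partial_{11}\fe>0\) on \(\bbR^2\setminus\{0\}\), the segment \(\{(t,s):0<t\leq C_*\}\) avoids the origin for every \(s\). Integrating in \(t\) gives \(\partial_1\fe(C_*,s)>0\), with the case \(s=0\) handled by approaching from \(t>0\). Hence \(x'(q)>0\), so \(q\mapsto x(q)\) is an analytic diffeomorphism of \([0,\alpha]\) onto \([0,1]\); analyticity holds because \(\fe\) is analytic away from \(0\) and \(C_*>0\). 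Its inverse is therefore analytic, and \(y_*\defby y\circ x^{-1}\) is analytic.

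Next I would verify stationarity by reversing the computations of Sections~\ref{sec:VP:DualVarFirstOrderRels}--\ref{sec:VP:MonomerTime}. Set \(s(x)\defby s_*+g\,q(x)\) and \(\mu(x)\defby\fe(C_*,s(x))\). By \eqref{eq:parameteric-solution}, \(y_*(x)=(\mu(x)-\lambda_*)/g\), so \(\mu(x)=\lambda_*+gy_*(x)\) and \(\mu(x)>\lambdac\), because \((C_*,s(x))\neq0\) and \(\K_{\lambdac}=\{0\}\). The slope \(y_*'=\partial_2\fe/\partial_1\fe\) shows that \(\nabla\fe(C_*,s(x))\) is a positive multiple of \((1,y_*')\). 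By Lemma~\ref{lem:LevelSets}, \((C_*,s(x))=h_{\mu(x)}((1,y_*'))=\nabla_v\icl_{\mu(x)}(1,y_*')\). So the first momentum component is the constant \(C_*\), which is the first Euler--Lagrange equation. Lemma~\ref{lem:duality} gives \(\partial_\mu\icl_\mu=1/\partial_1\fe(C_*,s(x))=\dd q/\dd x\). Therefore \(\dd s/\dd x=g\,\dd q/\dd x=g\,\partial_\mu\icl_\mu|_{\mu=\lambda_*+gy_*}\), which is the second Euler--Lagrange equation, since \(\partial_{y}L=g\,\partial_\mu\icl_\mu\). Integrating \(\dd q/\dd x\) over \([0,1]\) gives \(q(1)=\alpha\), which is exactly the microscopic-length constraint.

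For uniqueness among smooth graph solutions, take any such solution with multiplier \(\lambda\). Sections~\ref{sec:VP:DualVarFirstOrderRels}--\ref{sec:VP:MonomerTime} show that its dual variables have the form \((C,s_0+gq)\) in monomer time with \(C>0\) (Lemma~\ref{lem:graph}), that its monomer-time interval is \([0,\alpha]\) by the length constraint, and that it is given by \eqref{eq:parameteric-q}. The endpoint conditions then force \((C,s_0)\) to solve \eqref{eq:shooting}, so \((C,s_0)=(C_*,s_*)\) by uniqueness. The multiplier is also determined: \eqref{eq:level-set-constraint} at \(x=0\) gives \(\lambda=\fe(C_*,s_*)=\lambda_*\). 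The main obstacle I expect is the rigorous identification \(\nabla_v\icl_\mu(v)=h_\mu(v)\), together with the computation of \(\partial_y L\) used in the reversal. I would derive both from \(\icl_\mu(v)=\spr{h_\mu(v)}{v}\), the envelope theorem (using analyticity of \(h_\mu\) from Lemma~\ref{lem:LevelSets}), and \(\partial_\mu\icl_\mu\) computed as in Lemma~\ref{lem:duality}.
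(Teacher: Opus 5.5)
Your proposal is correct and follows essentially the same route as the paper: existence and uniqueness of \((C_*,s_*)\) from the shooting analysis, analyticity of \(y_*\) via the analytic inverse function theorem, verification of the Euler--Lagrange equations and the length constraint through the identities \(\lambda_*+gy_*(x(q))=\fe(C_*,s_*+gq)\), \(y_*'=\partial_2\fe/\partial_1\fe\) and Lemma~\ref{lem:duality}, and uniqueness by reducing any smooth graph solution to the shooting system. You also give explicit arguments for three points the paper only asserts: strict positivity of \(\partial_1\fe(C_*,\cdot)\), the identification \(\nabla_v\icl_\mu(v)=h_\mu(v)\) via Lemma~\ref{lem:LevelSets}, and \(\lambda=\lambda_*\) in the uniqueness step.
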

\begin{proof}
  Existence and uniqueness of \((C_*,s_*)\) follow from the preceding shooting argument. The endpoint conditions follow immediately from the shooting system: \(x(\alpha) = 1\) and \(y(\alpha) = a\).

  Since \(x(q)\) is strictly increasing, the curve can be written as a graph \(y_*(x)\).
  The free energy \(\fe\) is analytic on \(\bbR^2\setminus\{0\}\). Since \(\setof{(C_*,s_*+gq)}{q\in[0,\alpha]}\) does not meet the origin, both \(x(q)\) and \(y(q)\) are analytic functions of \(q\). The analytic inverse function theorem therefore implies that \(y_*\) is analytic as a function of \(x\).

  Finally,
  \begin{equation}\label{eq:stationary-identities}
    y_*'(x(q))
    =
    \frac{\partial_2 \fe(C_*,s_*+gq)}
        {\partial_1 \fe(C_*,s_*+gq)}
    \qquad\text{and}\qquad
    \lambda_*+g y_*(x(q))
    =
    \fe(C_*,s_*+gq).
  \end{equation}
  The first identity follows directly from the definitions of \(x(q)\) and \(y(q)\). For the second one, both sides have derivative \(g\,\partial_2\fe(C_*,s_*+gq)\) with respect to \(q\), and they agree at \(q=0\).
  These identities show in particular that the constructed graph solves the original Euler--Lagrange equation. Indeed, in the parameterization \(\MP_*(x)=(x,y_*(x))\), one has
  \[
    \nabla_v\icl_{\lambda_*+gy_*(x)}(1,y_*'(x))
    =
    (C_*,s_*+gq(x)),
  \]
  where \(q=q(x)\) is the inverse of \(x(q)\). The first component is
  constant. Moreover,
  \[
    \frac{\dd}{\dd x}(s_*+gq(x))
    =
    \frac{g}{\partial_1\fe(C_*,s_*+gq(x))}
    =
    g\,\partial_\mu\icl_\mu(1,y_*'(x))
    \vert_{\mu=\lambda_*+gy_*(x)},
  \]
  where the last equality follows from Lemma~\ref{lem:duality}. These are precisely the two Euler--Lagrange equations for the original Lagrangian. Thus the shooting construction removes the conditional assumption made in the derivation of the first-order relations.

  It remains to check the primal microscopic-length constraint. Since
  \[
    \partial_\mu\icl_\mu(1,y_*'(x(q))) \vert_{\mu=\lambda_*+gy_*(x(q))}
    =
    \frac{1}{\partial_1 \fe(C_*,s_*+gq)},
  \]
  we obtain, using \(\dd x = \partial_1 \fe(C_*,s_*+gq)\,\dd q\),
  \[
    \int_0^1 \partial_\mu\icl_\mu(1,y_*'(x)) \vert_{\mu=\lambda_*+gy_*(x)} \,\dd x
    =
    \int_0^\alpha \dd q
    =
    \alpha.
  \]
  Thus the primal microscopic-length constraint is satisfied.

  Conversely, any smooth graph solution of the Euler--Lagrange equation satisfying the endpoint and microscopic-length constraints gives, by the first-order relations derived above, a pair of constants \((C,s_0)\) solving the shooting system. By uniqueness of the solution to the shooting system, we must have \((C,s_0)=(C_*,s_*)\), and the corresponding graph coincides with \(y_*\).
\end{proof}

%%%%
\subsection{The stationary solution is a global minimizer}
\label{sec:VP:Minimizer}
%%%%

Proposition~\ref{prop:shooting-solution} identifies the unique smooth stationary candidate. We now show that the stationary solution constructed above is in fact a global minimizer. It will be convenient to work with the monomer-time functional
\[
  \calB(\eta) \defby \int_0^\alpha \bigl[ J(\eta'(q)) + g\eta_2(q) \bigr]\, \dd q,
\]
defined on absolutely continuous curves
\[
  \eta:[0,\alpha]\to\mathbb R^2,
  \qquad
  \eta(0)=(0,0),\quad \eta(\alpha)=(1,a).
\]

Let \((C_*,s_*)\) be the unique solution of the shooting system and set
\[
  h_*(q) \defby (C_*,s_*+gq),
  \qquad
  \eta_*(q) = \int_0^q \nabla \fe(C_*,s_*+gu) \, \dd u.
\]
By construction, \(\eta_*(0)=(0,0)\) and \(\eta_*(\alpha)=(1,a)\).
We first prove that \(\eta_*\) minimizes \(\mathcal B\). Since \(J\) is the Legendre transform of \(\fe\), and since \(\eta_*'(q)=\nabla \fe(h_*(q))\), we have \(h_*(q)=\nabla J(\eta_*'(q))\).
Therefore, by convexity of \(J\), for every admissible curve \(\eta\),
\[
  J(\eta'(q)) \geq J(\eta_*'(q)) + \spr[\big]{h_*(q)}{\eta'(q)-\eta_*'(q)}
\]
for a.e.\ \(q\). Integrating and adding the gravitational term gives
\begin{align*}
  \calB(\eta)-\calB(\eta_*)
  &\geq
  \int_0^\alpha \spr[\big]{h_*(q)}{\eta'(q)-\eta_*'(q)} \, \dd q
  + g\int_0^\alpha \bigl(\eta_2(q)-\eta_{*,2}(q)\bigr) \, \dd q .
\end{align*}
Integration by parts yields
\begin{multline*}
  \int_0^\alpha \spr[\big]{h_*(q)}{\eta'(q)-\eta_*'(q)} \, \dd q
  =
  \Bigl[ \spr[\big]{h_*(q)}{\eta(q)-\eta_*(q)} \Bigr]_0^\alpha \\
  - \int_0^\alpha \spr[\big]{h_*'(q)}{\eta(q)-\eta_*(q)} \, \dd q .
\end{multline*}
The boundary term vanishes because \(\eta\) and \(\eta_*\) have the same endpoints, while
\[
  \bigl\langle h_*'(q),\eta(q)-\eta_*(q)\bigr\rangle
  =
  g\bigl(\eta_2(q)-\eta_{*,2}(q)\bigr),
\]
since \(h_*'(q)=(0,g)\). We conclude that \(\calB(\eta)-\calB(\eta_*) \geq 0\), and thus that \(\eta_*\) is a global minimizer of \(\calB\).

\medskip
Let us now relate this to the primal variational problem. Let \((\lambda,\MP)\) be any primal-admissible pair, and define \(\rho(t) \defby \partial_\mu \icl_\mu(\dot\MP(t)) \vert_{\mu=\lambda+g\MP_2(t)}\).
By the primal constraint,
\[
  \int_0^1 \rho(t) \, \dd t = \alpha.
\]
Define the monomer-time variable
\[
  q(t) \defby \int_0^t \rho(u) \,\dd u,
\]
and let \(\eta\) be the associated parameterization, so that \(\eta(q(t))=\MP(t)\).
As this step will be repeated several times in the sequel, we will refer to this construction as the monomer-time representation associated with \((\lambda,\MP)\). Then,
\begin{align*}
  \mathcal A(\MP,\lambda)
  &= \int_0^1 \icl_{\lambda+g\MP_2(t)}(\dot\MP(t))\, \dd t - \lambda\alpha \\
  &= \int_0^1 \bigl[ \lambda+g\MP_2(t) + J(\dot\MP(t)/\rho(t)) \bigr] \rho(t)\, \dd t - \lambda\alpha \\
  &= \int_0^1 \bigl[ g\eta_2(q(t)) + J(\dot\eta(q(t))) \bigr] \rho(t)\, \dd t \\
  &= \int_0^\alpha \bigl[ g\eta_2(q) + J(\dot\eta(q)) \bigr] \dd q
  = \mathcal B(\eta),
\end{align*}
where the second line follows from the perspective identity of Lemma~\ref{lem:perspective}.
Since \(\eta_*\) minimizes \(\mathcal B\), it follows that
\(\calA(\MP,\lambda) = \calB(\eta) \geq \calB(\eta_*)\).

It remains to identify \(\calB(\eta_*)\) with the value of the primal functional at the stationary solution.
Let \(\lambda_* \defby \fe(C_*,s_*)\). Since
\[
  \eta_{*,2}(q) = \int_0^q \partial_2 \fe(C_*,s_*+gu) \, \dd u,
\]
we have \(\lambda_*+g\eta_{*,2}(q) = \fe(C_*,s_*+gq) = \fe(h_*(q))\).
Together with \(\eta_*'(q)=\nabla \fe(h_*(q))\), this shows that equality holds in the perspective identity with \(\rho=1\), \(\mu=\lambda_* + g\eta_{*,2}(q)\), and \(v=\eta_*'(q)\).
Therefore \(\calA(\MP_*,\lambda_*) = \calB(\eta_*)\).
Combining the preceding inequalities gives, for every primal-admissible pair \((\lambda,\MP)\),
\[
  \calA(\MP,\lambda) \geq \calA(\MP_*,\lambda_*).
\]
Thus the stationary solution is a global minimizer of the primal variational problem.

%%%%
\subsection{Uniqueness of minimizers up to reparameterization}
\label{sec:VP:GeneralUniqueness}
%%%%

We now analyze the equality case in the previous argument. This yields uniqueness of the minimizing trace among all absolutely continuous competitors.

\begin{proposition}\label{prop:AC-uniqueness}
Let \((\lambda,\MP)\) be a primal minimizer. Then \(\lambda=\lambda_*\), and the trace of \(\MP\) coincides with the trace of the curve \(\eta_*\) constructed above. More precisely, after removing constant pieces, there exists a nondecreasing absolutely continuous map
\(\theta:[0,1]\to[0,\alpha]\), satisfying \(\theta(0)=0\) and \(\theta(1)=\alpha\), such that \(\MP(t) = \eta_*(\theta(t))\).

Conversely, every such reparameterization, together with \(\lambda=\lambda_*\), is a primal minimizer.
\end{proposition}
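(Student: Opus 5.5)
The plan is to analyze the equality case in the chain of inequalities of Section~\ref{sec:VP:Minimizer}. Let \((\lambda,\MP)\) be a primal minimizer, and let \(\eta\) be its monomer-time representation, with \(\rho(t)=\partial_\mu\icl_\mu(\dot\MP(t))\vert_{\mu=\lambda+g\MP_2(t)}\) and \(q(t)=\int_0^t\rho\). That computation gives \(\calA(\MP,\lambda)=\calB(\eta)\geq\calB(\eta_*)=\calA(\MP_*,\lambda_*)\). Minimality forces \(\calB(\eta)=\calB(\eta_*)\). Hence the pointwise convexity inequality
\[
  J(\eta'(q))\geq J(\eta_*'(q))+\spr{h_*(q)}{\eta'(q)-\eta_*'(q)}
\]
must hold with equality for a.e.\ \(q\in[0,\alpha]\), since the integrated remainder vanishes after the integration by parts. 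Equality in this supporting-hyperplane inequality means \(h_*(q)\in\partial J(\eta'(q))\). Equivalently, \(\eta'(q)\) maximizes \(v\mapsto\spr{h_*(q)}{v}-J(v)\). By Legendre duality, \(\fe\) is differentiable and \(h_*(q)\neq0\), so this forces \(\eta'(q)=\nabla\fe(h_*(q))=\eta_*'(q)\). Here I would invoke strict convexity of \(J\) on the relevant set, which follows from analyticity and positive definiteness of \(\Hess\fe\) (Lemma~\ref{lem:FreeEnergy}). The uniqueness of the maximizer then follows from differentiability of \(\fe\) at \(h_*(q)\). Integrating from \(\eta(0)=0\) gives \(\eta\equiv\eta_*\) on \([0,\alpha]\).

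Next I would go back to \(\MP\). Set \(\theta(t)\defby q(t)\). This map is absolutely continuous because \(\rho\in L^1\), nondecreasing because \(\rho\geq0\), and satisfies \(\theta(0)=0\) and \(\theta(1)=\alpha\) by the primal constraint. Since \(\MP(t)=\eta(q(t))=\eta_*(\theta(t))\), the trace statement follows. The technical point is that the monomer-time change of variables has to be justified for merely absolutely continuous \(\MP\). On intervals where \(\rho=0\) we have \(\dot\MP=0\) a.e., because \(\partial_\mu\icl_\mu(v)>0\) for \(v\neq0\) (Lemma~\ref{lem:perspective}). These are exactly the constant pieces that the statement removes, and on them \(\eta\) is well defined. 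A chain-rule argument for the composition of absolutely continuous maps with monotone \(q\) then gives \(\dot\MP=\rho\,\eta'(q)\) a.e.

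To get \(\lambda=\lambda_*\), I would use equality in the perspective identity. That equality holds by the choice of \(\rho\) in Lemma~\ref{lem:perspective}, and it forces \(\eta'(q)=\nabla\fe(h)\), where \(h\) is the dual point of level \(\lambda+g\eta_2(q)\). Since \(\eta'=\eta_*'=\nabla\fe(h_*(q))\) and \(\nabla\fe\) is injective on \(\bbR^2\setminus\{0\}\) by strict convexity, we get \(h=h_*(q)\). Therefore \(\lambda+g\eta_{*,2}(q)=\fe(h_*(q))=\lambda_*+g\eta_{*,2}(q)\), which gives \(\lambda=\lambda_*\). This must hold on a set of \(q\) of positive measure, which is guaranteed because \(\eta_*'\neq0\).

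For the converse, take \(\MP=\eta_*\circ\theta\) with \(\lambda_*\). Then \(\dot\MP=\dot\theta\,\eta_*'(\theta)\). By one-homogeneity of \(\icl_\mu\) in \(v\), together with \(\partial_\mu\icl_\mu(\eta_*')=1\), which is the saturation shown in Section~\ref{sec:VP:Minimizer}, the density is \(\rho=\dot\theta\). The constraint becomes \(\int\dot\theta=\alpha\), and the change of variables gives \(\calA(\MP,\lambda_*)=\calB(\eta_*)\). I expect the main obstacle to be the equality-case step: showing that pointwise equality identifies \(\eta'\) uniquely even when \(\normI{\eta'}=1\), where \(J\) lies on the boundary of its domain. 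I would handle this by noting that \(h_*(q)\) is finite, so the supremum defining the Legendre transform at \(h_*(q)\) is attained only at \(\nabla\fe(h_*(q))\), which lies in the open ball by Lemma~\ref{lem:FreeEnergy}.
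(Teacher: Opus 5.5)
Your proposal is correct and follows essentially the same route as the paper. Equality in the supporting-hyperplane inequality for \(J\), after the integration by parts, forces \(\eta=\eta_*\); one then takes \(\theta=q\), the saturated perspective identity gives \(\lambda=\lambda_*\), and one-homogeneity in the velocity gives the converse. Your identification of \(\eta'(q)\) via \(h_*(q)\in\partial J(\eta'(q))\), i.e.\ \(\eta'(q)\in\partial\fe(h_*(q))=\{\nabla\fe(h_*(q))\}\), is a slightly cleaner substitute for the paper's appeal to strict convexity of \(J\) on the interior of its domain, since it handles velocities with \(\normI{\eta'(q)}=1\) directly.
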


\begin{proof}
Let \((\lambda,\MP)\) be a primal minimizer, and let \(\eta\) be the monomer-time representation associated with \((\lambda,\MP)\), after
removing constant pieces if necessary. By the perspective identity, \(\calA(\MP,\lambda)=\calB(\eta)\).
Since \((\lambda,\MP)\) is minimizing and since the previous subsection proved that \(\inf\calA=\calB(\eta_*)\), we get \(\calB(\eta)=\calB(\eta_*)\).
Thus \(\eta\) is also a minimizer of \(\calB\).

We now use the equality case in the convexity argument proving minimality of \(\eta_*\). Recall that \(h_*(q)=(C_*,s_*+gq)\) and \(\eta_*'(q)=\nabla \fe(h_*(q))\), and hence \(h_*(q)=\nabla J(\eta_*'(q))\). For every admissible \(\eta\), convexity of \(J\) gives
\[
  J(\eta'(q))
  \geq
  J(\eta_*'(q))
  +
  \spr[\big]{h_*(q)}{\eta'(q)-\eta_*'(q)}
\]
for a.e.\ \(q\). In the proof of minimality, after integration by parts, this inequality yielded \(\calB(\eta)-\calB(\eta_*) \geq 0\).
Since equality holds, equality must hold in the above convexity inequality for a.e.\ \(q\). By strict convexity of \(J\) on the interior of its effective domain, this implies \(\eta'(q) = \eta_*'(q)\) for a.e.\ \(q\in [0,\alpha]\).
Since \(\eta(0)=\eta_*(0)\), we conclude that
\[
  \forall q\in[0,\alpha],\qquad
  \eta(q)=\eta_*(q).
\]
Therefore the trace of \(\MP\) is the trace of \(\eta_*\). Equivalently, \(\MP(t)=\eta_*(q(t))\), where \(q\) is the monomer-time change of variables associated with \((\lambda,\MP)\). This proves uniqueness of the minimizing trace.

It remains to identify the value of \(\lambda\). On the moving part of the curve, equality in the perspective identity implies that the local parameter satisfies \(\lambda + g\eta_{*,2}(q) = \fe(h_*(q))\).
But for the constructed minimizer we have \(\fe(h_*(q)) = \lambda_* + g\eta_{*,2}(q)\).
Hence \(\lambda = \lambda_*\).

Conversely, let \(\theta:[0,1]\to[0,\alpha]\) be nondecreasing and absolutely continuous, with \(\theta(0)=0\) and \(\theta(1)=\alpha\), and set \(\MP(t) \defby \eta_*(\theta(t))\).
Since the primal functional and the primal constraint are homogeneous of degree one in the velocity, this reparameterization does not change the value of the action. More explicitly, along \(\eta_*\) the natural
monomer-time density is equal to \(1\), and by homogeneity, along \(\MP\) it is equal to \(\theta'(t)\). Hence
\[
  \int_0^1 \partial_\mu\icl_\mu(\dot\MP(t)) \vert_{\mu=\lambda_*+g\MP_2(t)} \, \dd t
  =
  \int_0^1 \theta'(t) \, \dd t
  =
  \alpha.
\]
Similarly, \(\calA(\MP,\lambda_*) = \calA(\MP_*,\lambda_*)\).
Thus every such reparameterization is a primal minimizer.
\end{proof}

%%%%
\subsection{Critical barrier}
\label{sec:VP:CriticalBarrier}
%%%%

The next result shows that, along the stationary solution, the local parameter \(\lambda + g\eta_{*,2}(q)\) remains bounded away from the critical value \(\lambdac\).
\begin{lemma}\label{lem:uniform-gap}
There exists \(\delta_*>0\) such that
\[
  \forall q\in [0,\alpha],\qquad
  \lambda_*+g\eta_{*,2}(q) \geq \lambdac+\delta_*.
\]
Equivalently, in graph parameterization,
\[
  \forall x\in [0,1],\qquad
  \lambda_* + g y_*(x) \geq \lambdac+\delta_*.
\]
\end{lemma}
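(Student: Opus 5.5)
The plan is to use the identity established in the proof of Proposition~\ref{prop:shooting-solution} (see~\eqref{eq:stationary-identities}): for every \(q\in[0,\alpha]\),
\[
  \lambda_*+g\eta_{*,2}(q) = \fe(C_*,s_*+gq) = \fe(h_*(q)).
\]
Thus the local parameter along the stationary curve is the free energy evaluated along the vertical segment \(\setof{h_*(q)}{q\in[0,\alpha]}\) in dual space. The claim then reduces to bounding \(\fe\) from below on this segment by something strictly larger than \(\lambdac=\fe(0)\).

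The key observation is that this segment stays away from the origin. Indeed, its first coordinate is constant and equal to \(C_*>0\), since \(C_*\in(0,\infty)\) by Proposition~\ref{prop:shooting-solution}. Hence \(\normII{h_*(q)}\geq C_*\) for all \(q\), and the segment is a compact subset of \(\bbR^2\setminus\{0\}\).

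By point~1 of Lemma~\ref{lem:LevelSets}, \(\K_{\lambdac}=\{0\}\). Therefore \(\fe(h)>\lambdac\) for every \(h\neq0\): \(0\) is the unique minimizer of \(\fe\). Since \(\fe\) is continuous (it is convex and finite), it attains its minimum on the compact segment \([0,\alpha]\ni q\mapsto h_*(q)\). That minimum is strictly larger than \(\lambdac\), and we define \(\delta_*\) to be the difference. Alternatively, one can get an explicit bound: by convexity and evenness of \(\fe\) in the first coordinate, \(s\mapsto\fe(C_*,s)\) is minimized at \(s=0\), so \(\fe(h_*(q))\geq\fe(C_*,0)>\lambdac\), and one may take \(\delta_*=\fe(C_*,0)-\lambdac\).

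The graph-parameterization statement follows immediately. By Proposition~\ref{prop:shooting-solution}, \(x(q)\) is a strictly increasing bijection \([0,\alpha]\to[0,1]\) with \(y_*(x(q))=\eta_{*,2}(q)\), so the bound transfers directly. The only step requiring any care is making sure \(C_*>0\) strictly, rather than \(C_*\geq0\). This is guaranteed by the shooting argument, because \(\Psi(C)\to0<1\) as \(C\downarrow0\) by Lemma~\ref{lem:sC-Psi}; so no real obstacle is expected.
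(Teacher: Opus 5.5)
Your proposal is correct and follows the paper's proof step for step: the identity \(\lambda_*+g\eta_{*,2}(q)=\fe(C_*,s_*+gq)\), the observation that the dual segment is compact and avoids the origin because \(C_*>0\), the fact \(\K_{\lambdac}=\{0\}\), and continuity of \(\fe\). One small slip in your optional explicit bound: \(s\mapsto\fe(C_*,s)\) is minimized at \(s=0\) because \(\fe\) is convex and even in the \emph{second} coordinate (a lattice reflection symmetry), not the first.
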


\begin{proof}
By~\eqref{eq:stationary-identities},
\[
  \lambda_*+g\eta_{*,2}(q)
  =
  \fe(C_*,s_*+gq).
\]
Since \(C_*>0\), the set \(\Gamma_* \defby \setof{(C_*,s_*+gq)}{q\in[0,\alpha]}\) is compact and does not contain the origin.
By Lemma~\ref{lem:LevelSets}, \(\K_{\lambdac} = \{0\}\). Hence \(\fe(h)>\lambdac\) for all \(h\in\Gamma_*\).
By continuity of \(\fe\) and compactness of \(\Gamma_*\), the minimum
\[
  \delta_* \defby \min_{q\in[0,\alpha]} \bigl\{\fe(C_*,s_*+gq)-\lambdac\bigr\}
\]
is strictly positive. Using the identity above, this gives
\[
  \forall q\in [0,\alpha],\qquad
  \lambda_*+g\eta_{*,2}(q) \geq \lambdac+\delta_*.
\]

Finally, since the graph parameterization is obtained from the same trace, \(y_*(x)=\eta_{*,2}(q(x))\) for the inverse \(q=q(x)\) of \(x(q)=\eta_{*,1}(q)\). The same lower bound therefore holds for \(\lambda_*+g y_*(x)\) on \([0,1]\).
\end{proof}

%%%%
\subsection{Gathering the pieces: proof of Theorem~\ref{thm:SolutionPrimalVP}}
\label{sec:VP:ProofSolutionPrimalVP}
%%%%

\begin{proof}[Proof of Theorem~\ref{thm:SolutionPrimalVP}]
Proposition~\ref{prop:shooting-solution} gives the existence and uniqueness of the pair \((C_*,s_*)\) solving the shooting system. It also constructs the corresponding curve
\[
  \eta_*(q)
  =
  \int_0^q \nabla\fe(C_*,s_*+gu)\,\dd u,
  \qquad q\in[0,\alpha],
\]
and shows that its trace is the graph of an analytic function \(y_*:[0,1]\to\bbR\). With \(\lambda_*\defby\fe(C_*,s_*)\), the same proposition shows that \((\lambda_*,y_*)\) satisfies the Euler--Lagrange equation and the primal microscopic-length constraint.

The global minimality of this stationary solution was proved in Section~\ref{sec:VP:Minimizer}: every primal-admissible pair \((\lambda,\MP)\), after passing to its monomer-time representation, has action at least \(\calB(\eta_*)=\calA(\MP_*,\lambda_*)\). Hence \((\lambda_*,\MP_*)\) is a primal minimizer.

Proposition~\ref{prop:AC-uniqueness} identifies the equality case. It shows that any primal minimizer has \(\lambda=\lambda_*\), and that, after removing constant pieces, its trace is a nondecreasing absolutely continuous reparameterization of \(\eta_*\). Conversely, every such reparameterization is a primal minimizer. This proves uniqueness of the minimizing trace.

Finally, Lemma~\ref{lem:uniform-gap} gives the existence of \(\delta_*>0\) such that
\[
  \lambda_*+g y_*(x)\ge\lambdac+\delta_*
  \qquad\text{for all }x\in[0,1].
\]
All the assertions of Theorem~\ref{thm:SolutionPrimalVP} have been established.
\end{proof}

%%%%
\subsection{Equivalence of the primal and dual variational problems: proof of Theorem~\ref{thm:Main:EquVP}}
\label{sec:VP:PrimalDual}
%%%%

\begin{proof}[Proof of Theorem~\ref{thm:Main:EquVP}]
  Let \((\MP,\rho)\) be dual-admissible, and let \(\eta\) be its monomer-time representation, defined in the usual way by \(q(t)=\int_0^t\rho\). Then \(\eta\) is admissible for \(\calB\) and
  \[
    \calA^*(\MP,\rho)=\calB(\eta).
  \]
  Conversely, every admissible curve \(\eta:[0,\alpha]\to\bbR^2\) gives a dual-admissible pair, for instance by taking \(\rho(t)=\alpha\) and \(\MP(t)=\eta(\alpha t)\). Hence the dual problem is equivalent to minimizing \(\calB\), and therefore
  \[
    \inf\calA^* = \min\calB = \calB(\eta_*).
  \]

  We next compare this with the primal problem. Let \((\lambda,\MP)\) be primal-ad\-mis\-si\-ble, and define the associated density \(\rho(t) = \partial_\mu\icl_\mu(\dot\MP(t)) \vert_{\mu=\lambda+g\MP_2(t)}\).
  By the primal microscopic-length constraint, \(\int_0^1\rho=\alpha\), so \((\MP,\rho)\) is dual-ad\-mis\-si\-ble. By the equality case in Lemma~\ref{lem:perspective},
  \[
    \calA(\MP,\lambda) = \calA^*(\MP,\rho).
  \]
  Consequently,
  \[
    \inf\calA \geq \inf\calA^*.
  \]
  On the other hand, the stationary solution constructed above satisfies
  \[
    \calA(\MP_*,\lambda_*) = \calB(\eta_*) = \inf\calA^*,
  \]
  and therefore
  \[
    \inf\calA = \inf\calA^* = \calB(\eta_*).
  \]
  \noindent
  \textit{Primal to dual.}
  Let now \((\lambda,\MP)\) be a primal minimizer, and define \(\rho\) as above. Then
  \[
    \calA^*(\MP,\rho) = \calA(\MP,\lambda) = \inf\calA = \inf\calA^*,
  \]
  so \((\MP,\rho)\) is a dual minimizer.

  \noindent
  \textit{Dual to primal.}
  Conversely, let \((\MP,\rho)\) be a dual minimizer, and let \(\eta\) be its monomer-time representation. Then
  \[
    \calB(\eta)=\calA^*(\MP,\rho)=\inf\calA^*=\calB(\eta_*).
  \]
  By uniqueness of the minimizer of \(\calB\), \(\eta=\eta_*\). Thus, if \(q(t)=\int_0^t\rho(u)\,\dd u\), then \(\MP(t)=\eta_*(q(t))\). Set \(\lambda_*=\fe(C_*,s_*)\). By the stationary identity \(\lambda_*+g\eta_{*,2}(q)=\fe(h_*(q))\), we have \(\lambda_* + g\MP_2(t) = \fe(h_*(q(t))) > \lambdac\) for a.e.\ \(t\). Moreover, \(\dot\MP(t)=\rho(t)\eta_*'(q(t)) = \rho(t)\nabla\fe(h_*(q(t)))\) for a.e.\ \(t\). Hence, by the equality case in Lemma~\ref{lem:perspective},
  \[
    \rho(t)
    =
    \partial_\mu\icl_\mu(\dot\MP(t))
    \vert_{\mu=\lambda_*+g\MP_2(t)}
  \]
  for a.e.\ \(t\). In particular, since \(\int_0^1\rho(t)\,\dd t=\alpha\), the pair \((\lambda_*,\MP)\) is primal-admissible. The same equality case also gives
  \[
    \calA(\MP,\lambda_*)=\calA^*(\MP,\rho)=\inf\calA.
  \]
  Thus \((\lambda_*,\MP)\) is a primal minimizer.
\end{proof}

%%%%
\subsection{Stability of the minimizer}
\label{sec:VP:Stability}
%%%%

The preceding subsections identified the minimizer and established uniqueness by exploiting the convex-dual structure of the problem. We now prove a (non-quantitative) stability result, using a standard compactness and lower-semicontinuity argument from the direct method in the calculus of variations; see, e.g., \cite{Dacorogna-2008}.
\begin{proposition}\label{prop:VP:Stability}
Let \(\frm_*=(\eta_*)_\sharp\Leb_{[0,\alpha]}\).
For every \(\epsilon>0\), there exists \(c_\epsilon>0\) such that the following statements hold.
\begin{enumerate}
  \item If \(\eta:[0,\alpha]\to\bbR^2\) is admissible for the monomer-time functional, then
  \[
    d_{\mathrm{BL}}\bigl(\eta_\sharp\Leb_{[0,\alpha]},\frm_*\bigr) \geq \epsilon \implies
    \calB(\eta)\geq \calB(\eta_*)+c_\epsilon.
  \]
  \item If \((\MP,\rho)\) is admissible for the dual problem, then
  \[
    d_{\mathrm{BL}} \bigl( \MP_\sharp(\rho(t)\,\dd t), \frm_* \bigr) \geq\epsilon \implies
    \calA^*(\MP,\rho) \geq \calA^*(\MP_*,\rho_*)+c_\epsilon.
  \]
  \item If \((\lambda,\MP)\) is primal-admissible and the associated monomer-time representation gives a monomer measure at
  \(d_{\mathrm{BL}}\)-distance at least \(\epsilon\) from \(\frm_*\), then
  \[
    \calA(\MP,\lambda) \geq \calA(\MP_*,\lambda_*)+c_\epsilon.
  \]
\end{enumerate}
\end{proposition}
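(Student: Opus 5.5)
Part 1 is proved by contradiction with the direct method; parts 2 and 3 then follow by reducing to the monomer-time functional \(\calB\).

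\emph{Setting up the contradiction.} Fix \(\epsilon>0\) and suppose no \(c_\epsilon>0\) works. Then there are admissible curves \(\eta_n\) with \(d_{\mathrm{BL}}((\eta_n)_\sharp\Leb_{[0,\alpha]},\frm_*)\geq\epsilon\) and \(\calB(\eta_n)\to\calB(\eta_*)\).

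\emph{Compactness.} Finiteness of \(\calB(\eta_n)\) forces \(\normI{\eta_n'}\leq1\) a.e., since \(\mathrm{dom}(J)\) is the closed unit \(\ell^1\)-ball. Hence the \(\eta_n\) are uniformly \(1\)-Lipschitz with \(\eta_n(0)=0\). By Arzelà--Ascoli, along a subsequence \(\eta_n\to\bar\eta\) uniformly, and \(\eta_n'\rightharpoonup\bar\eta'\) weakly-\(*\) in \(L^\infty\). The limit still satisfies the endpoint conditions and \(\normI{\bar\eta'}\leq1\).

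\emph{Lower semicontinuity.} \(J\) is convex and lower semicontinuous, being a Legendre transform, and it is bounded below on its domain (continuity on the compact ball, as used in Lemma~\ref{lem:logAsymptotics:UpperZ}). So \(\eta'\mapsto\int_0^\alpha J(\eta')\) is weakly lower semicontinuous (Ioffe/Tonelli). The term \(g\int\eta_{n,2}\) converges by uniform convergence. Therefore
\[
  \calB(\bar\eta)\leq\liminf_n\calB(\eta_n)=\calB(\eta_*),
\]
so \(\bar\eta\) is a minimizer of \(\calB\).

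\emph{Identifying the limit.} By the uniqueness of the minimizer of \(\calB\) from Section~\ref{sec:VP:GeneralUniqueness}, \(\bar\eta=\eta_*\). One point needs care there: the equality case uses strict convexity of \(J\) only where \(\eta_*'\) lies in the interior of \(\mathrm{dom}(J)\), which holds since \(\eta_*'=\nabla\fe(h_*)\) has \(\ell^1\)-norm \(<1\). At such points the supporting-hyperplane inequality is strict for any \(\eta'\neq\eta_*'\), including boundary values, so the uniqueness argument still applies.

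\emph{Contradiction.} For \(\varphi\) bounded by \(1\) and \(1\)-Lipschitz,
\[
  \abs[\Big]{\int\varphi\,\dd(\eta_n)_\sharp\Leb-\int\varphi\,\dd\frm_*}\leq\alpha\normsup{\eta_n-\eta_*}\to0 .
\]
This contradicts \(d_{\mathrm{BL}}\geq\epsilon\), and proves part~1.

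\emph{Part 2.} Given a dual-admissible pair \((\MP,\rho)\), set \(q(t)=\int_0^t\rho\) and define \(\eta\) by \(\eta(q(t))=\MP(t)\). On intervals where \(\rho=0\), finiteness of the perspective forces \(\dot\MP=0\), so \(\eta\) is well defined and absolutely continuous. The change of variables gives \(\MP_\sharp(\rho\,\dd t)=\eta_\sharp\Leb_{[0,\alpha]}\), because \(q_\sharp(\rho\,\dd t)=\Leb_{[0,\alpha]}\). It also gives \(\calA^*(\MP,\rho)=\calB(\eta)\), and \(\calA^*(\MP_*,\rho_*)=\calB(\eta_*)\) by Theorem~\ref{thm:Main:EquVP}. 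Part~1 then applies directly.

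\emph{Part 3.} Associate to \((\lambda,\MP)\) the density \(\rho=\partial_\mu\icl_\mu(\dot\MP)\). Section~\ref{sec:VP:Minimizer} gives \(\calA(\MP,\lambda)=\calB(\eta)\) for its monomer-time representation \(\eta\), and \(\calA(\MP_*,\lambda_*)=\calB(\eta_*)\), so part~1 applies again.

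\emph{Main obstacle.} The delicate step is the lower semicontinuity and uniqueness at the boundary of \(\mathrm{dom}(J)\), where \(J\) may fail to be strictly convex or differentiable. The plan is to rely only on lower semicontinuity of the convex integrand, together with the strict supporting-hyperplane inequality at the interior points \(\eta_*'(q)\).
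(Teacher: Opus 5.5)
Your proposal is correct and follows essentially the same route as the paper: part~1 by contradiction via Arzel\`a--Ascoli, lower semicontinuity of \(\calB\), and uniqueness of the minimizer \(\eta_*\), with parts~2 and~3 reduced to part~1 through the monomer-time representation and the perspective identity. Your extra remarks simply make explicit details the paper leaves implicit: the weak-\(*\) convergence of the derivatives, the well-definedness and absolute continuity of \(\eta\) where \(\rho\) vanishes, and the strict supporting-hyperplane inequality at the interior points \(\eta_*'(q)\).
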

\begin{proof}
\textit{1.} Suppose, by contradiction, that the statement is false. Then there exists a sequence of admissible curves
\((\eta_n)_{n\ge1}\) such that \(d_{\mathrm{BL}}\bigl((\eta_n)_\sharp\Leb_{[0,\alpha]},\frm_*\bigr) \geq \epsilon\),
while \(\calB(\eta_n)\downarrow\calB(\eta_*)\).

Since \(J(v)=+\infty\) whenever \(\normI{v}>1\), every curve with finite \(\calB\)-value satisfies \(\normI{\eta_n'(q)} \leq 1\) for a.e.\ \(q\in [0,\alpha]\).
Thus \((\eta_n)\) is equi-Lipschitz and uniformly bounded. By Arzelà--Ascoli, after passing to a subsequence, \(\eta_n\) converges uniformly to an admissible curve \(\eta\).

The functional \(\calB\) is lower semicontinuous under this convergence: the term \(\int_0^\alpha J(\eta'(q))\,\dd q\) is lower semicontinuous by convexity and lower semicontinuity of \(J\), whereas \(g\int_0^\alpha \eta_2(q)\,\dd q\) is continuous under uniform convergence. Hence \(\calB(\eta) \leq \liminf_{n\to\infty}\calB(\eta_n) = \calB(\eta_*)\).

By uniqueness of the minimizer of \(\calB\), we have \(\eta=\eta_*\). The uniform convergence \(\eta_n\to\eta_*\) implies
\[
  (\eta_n)_\sharp\Leb_{[0,\alpha]} \xrightarrow{n\to\infty} (\eta_*)_\sharp\Leb_{[0,\alpha]}
\]
in \(d_{\mathrm{BL}}\), contradicting the assumption. This proves the first statement.

\textit{2.} The dual statement follows by passing to the monomer-time representation of \((\MP,\rho)\). Indeed, if \(\eta\) is this representation, then
\[
  \MP_\sharp(\rho(t)\,\dd t) = \eta_\sharp\Leb_{[0,\alpha]}
  \quad\text{and}\quad
  \calA^*(\MP,\rho)=\calB(\eta).
\]
\textit{3.} The primal statement follows similarly from the associated dual pair and the perspective identity.
\end{proof}

%%%%
\subsection{An exactly solvable case: the simple random walk}
\label{sec:VP:SRW}
%%%%

In this section, we provide some explicit computations in the simplest model in the class: the simple random walk, corresponding to \(\Phi(\gamma)\equiv 0\). For the latter, it is possible to determine the relevant thermodynamic quantities, as well as the resulting geodesics.

\paragraph{Thermodynamic quantities.}

A direct computation yields
\[
  \Z{h}{n}
  = \sum_{\gamma:\, \abs{\gamma}=n} e^{\langle h,X(\gamma)\rangle}
  = \bigl(e^{h_1}+e^{-h_1}+e^{h_2}+e^{-h_2}\bigr)^n,
\]
which shows that the free energy is given by
\[
  \fe(h) = \log\bigl(2\cosh h_1+2\cosh h_2\bigr).
\]

\smallskip
The inverse correlation length can be computed directly from its definition as the exponential decay rate of \(\Z{\lambda}{x}\). This leads to the explicit representation
\[
  \icl_\lambda(x)
  =
  \sum_{i=1}^2 x_i\,\arsinh(t x_i),
\]
where \(t=t(x,\lambda)\) is the unique positive solution of
\[
  \sqrt{1+t^2 x_1^2}+\sqrt{1+t^2 x_2^2}=\tfrac12 e^\lambda.
\]
\begin{figure}[t]
  \centering
  \includegraphics[height=5cm]{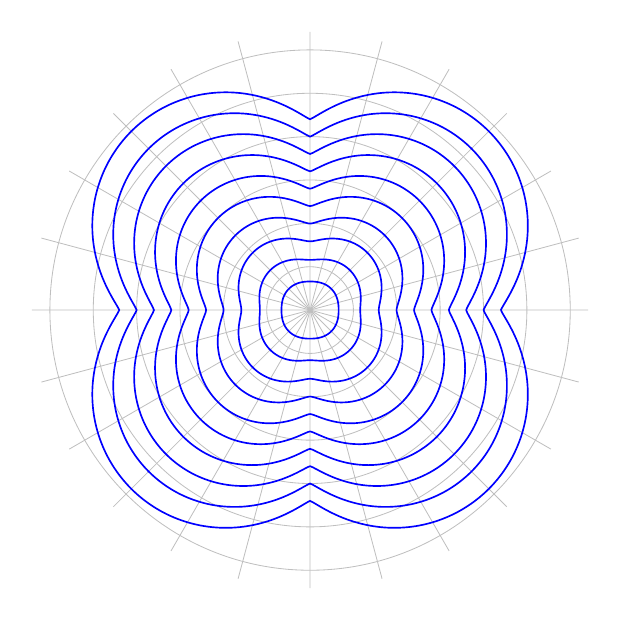}
  \hspace{1cm}
  \includegraphics[height=5cm]{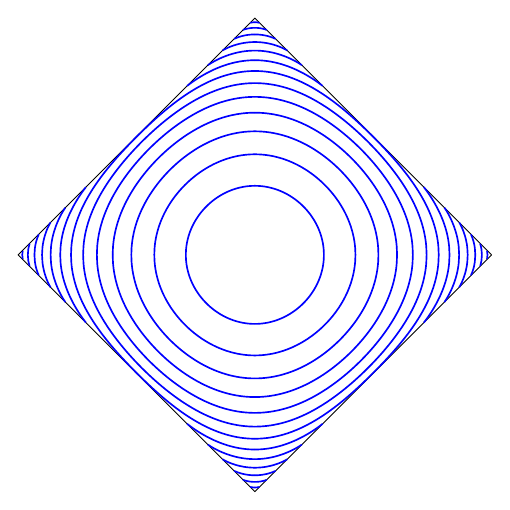}
  \caption{Left: Polar plot of \(\icl_\lambda(x)\) for \(x\in\bbS^1\) and \(\lambda\) taking values from \(1\) to \(10\). Right: Level lines of the rate function \(J\) on its effective domain (from \(-0.1\) to \(-1.3\) by steps of \(-0.1\)).}
  \label{fig:ICL+RateFct}
\end{figure}

\smallskip
Finally, the rate function \(J\), that is the Legendre transform of \(\fe\), is given as follows. For \(\normI{v} < 1\),
\[
  J(v) = \sum_{i=1}^2 v_i\,\arsinh(\tau v_i) - \log(2\tau),
\]
where \(\tau=\tau(v)>0\) is the unique solution of \(\sqrt{1+\tau^2v_1^2} + \sqrt{1+\tau^2v_2^2} = \tau\).

The rate function extends continuously to the boundary \(\normI v=1\), where
\[
  J(v)=\sum_{i=1}^2 \abs{v_i}\log \abs{v_i},
\]
with the convention \(0\log 0=0\), and \(J(v)=+\infty\) for \(\normI v>1\).

\smallskip
Figure~\ref{fig:LevelLines} shows some level lines of \(\fe\), while Figure~\ref{fig:ICL+RateFct} shows plots of the inverse correlation length and the rate function.

\paragraph{The geodesic.}

We now specialize the general parametric representation~\eqref{eq:parameteric-solution} to the simple random walk. Writing
\(s_0\) for the initial value of the dual variable and \(s_1=s_0+g\alpha\), we obtain
\[
  x(s) = \frac1g\int_{s_0}^s \partial_1\fe(C,u)\,\dd u,
  \qquad
  y(s) = \frac{\fe(C,s)-\fe(C,s_0)}{g},
  \qquad
  s\in[s_0,s_1].
\]
Since \(\partial_1\fe(C,s) = \sinh C/(\cosh C+\cosh s)\), this gives
\[
  x(s) = \frac{\sinh C}{g} \int_{s_0}^s \frac{\dd u}{\cosh C+\cosh u},
  \qquad
  y(s) = \frac1g \log \frac{\cosh C+\cosh s}{\cosh C+\cosh s_0}.
\]
The parameters \((C,s_0)\) are determined by \(x(s_1)=1\) and \(y(s_1)=a\):
\[
  1 = \frac{\sinh C}{g} \int_{s_0}^{s_0+g\alpha} \frac{\dd s}{\cosh C+\cosh s},
  \qquad
  e^{ga} = \frac{\cosh C+\cosh s_1}{\cosh C+\cosh s_0}.
\]
Using
\[
  \int \frac{\dd s}{\cosh C+\cosh s} = \frac{2}{\sinh C} \artanh\Bigl( \tanh\frac{C}{2}\tanh\frac{s}{2} \Bigr)
  \bydef \frac{2}{\sinh C}\Theta_C(s),
\]
we may write
\[
  x(s) = \frac2g \bigl[\Theta_C(s)-\Theta_C(s_0)\bigr].
\]

\paragraph{Symmetric case \(a=0\).}
When \(a=0\), symmetry implies \(s_0=-\frac{g\alpha}{2}\) and \(s_1=\frac{g\alpha}{2}\).
The horizontal constraint becomes
\[
  \tanh\frac{g}{4}
  =
  \tanh\frac{C}{2}\,
  \tanh\frac{\alpha g}{4}.
\]
Since \(g>0\), this equation has a unique solution \(C>0\) if and only if \(\alpha>1\). The curve is then given by
\begin{equation}\label{eq:geodesicSRW}
  x(s) = \frac2g \bigl[ \Theta_C(s) - \Theta_C(-g\alpha/2) \bigr],
  \qquad
  y(s) = \frac1g \log \frac{\cosh C+\cosh s}{\cosh C+\cosh(g\alpha/2)} ,
\end{equation}
with \(s\in[-g\alpha/2,g\alpha/2]\).

%%%%
\subsection{Behavior at small \texorpdfstring{\(g\)}{g}: proof of Proposition~\ref{pro:smallg}}
\label{sec:VP:Smallg}
%%%%

We work in the symmetric case \(a=0\). Let \(h_0=(C_0,0)\) be the
solution of the shooting system at \(g=0\). Thus \(\alpha\,\partial_1\fe(C_0,0)=1\) and \(\partial_2\fe(C_0,0)=0\).
The second identity follows from symmetry. For \(g>0\), symmetry of \(\fe(C,s)\) in the second coordinate and uniqueness in the shooting system imply \(s_*=-g\alpha/2\).
Indeed, the vertical shooting constraint becomes
\[
  \int_0^\alpha \partial_2\fe(C_*,s_*+gu)\,\dd u=0,
\]
and this integral vanishes when the interval \([s_*,s_*+g\alpha]\) is centered at \(0\).

We next expand the horizontal constraint. Since \(s_*=-g\alpha/2\),
\[
  1 = \int_0^\alpha \partial_1\fe(C_*,g(u-\alpha/2))\,\dd u.
\]
By evenness in the second variable, the first-order term in \(g\) vanishes. Hence \(C_*=C_0+\sfO(g^2)\).

We now use the monomer-time parameterization
\[
  x(q) = \int_0^q \partial_1\fe(C_*,g(u-\alpha/2))\,\dd u,
  \qquad
  y(q) = \int_0^q \partial_2\fe(C_*,g(u-\alpha/2))\,\dd u .
\]
Using \(C_*=C_0+\sfO(g^2)\), the symmetry of \(\fe\), and Taylor expansion at \((C_0,0)\), we obtain uniformly for \(q\in[0,\alpha]\),
\[
  x(q) = \frac{q}{\alpha} + \sfO(g^2)
\quad\text{and}\quad
  y(q) = g\,\partial_{22}\fe(C_0,0) \Bigl( \frac{q^2}{2}-\frac{\alpha q}{2} \Bigr) + \sfO(g^3).
\]
Inverting the first relation gives \(q(x)=\alpha x+\sfO(g^2)\), uniformly in \(x\in [0,1]\). Therefore
\[
  y_*(x)
  =
  \kappa_g(x^2-x)+\sfO(g^3),
\]
where
\[
  \kappa_g
  =
  g\,\frac{\alpha^2}{2}\,\partial_{22}\fe(C_0,0)
  +
  \sfO(g^3).
\]
The absence of a quadratic correction follows again from the symmetry in the second coordinate.
Since \(\Hess\fe(C_0,0)\) is positive definite, one has \(\partial_{22}\fe(C_0,0)>0\), and hence \(\kappa_g>0\) for all sufficiently small \(g\). Thus the polymer geodesic is, to leading order, an upward parabola.

It remains to compare this parabola with the classical symmetric catenary of the same apparent length. The apparent length of \(y_*\) is
\[
  L_g
  =
  \int_0^1 \sqrt{1+y_*'(x)^2}\,\dd x
  =
  1+\frac{\kappa_g^2}{6}+\sfO(g^4).
\]

The symmetric catenary connecting \((0,0)\) and \((1,0)\) can be written,
after centering at \(x=1/2\), in the form
\[
  y_{\mathrm{cat}}(x)
  =
  A\left[
    \cosh\left(\frac{x-1/2}{A}\right)
    -
    \cosh\left(\frac{1}{2A}\right)
  \right],
\]
with \(A\to\infty\) in the small-sag regime. Expanding for large \(A\),
\[
  y_{\mathrm{cat}}(x)
  =
  \frac{1}{2A}(x^2-x)
  +
  \sfO(A^{-3}),
\]
and its length is
\[
  L_{\mathrm{cat}}
  =
  1+\frac{1}{24A^2}+\sfO(A^{-4}).
\]
Choosing \(A\) so that \(L_{\mathrm{cat}}=L_g\) gives
\[
  \frac{1}{2A}
  =
  \kappa_g+\sfO(g^3).
\]
Consequently,
\[
  y_{\mathrm{cat}}(x)
  =
  \kappa_g(x^2-x)+\sfO(g^3),
\]
uniformly in \(x\in[0,1]\). Comparing this with the expansion of \(y_*\)
yields
\[
  \sup_{x\in[0,1]}
  \abs{y_*(x)-y_{\mathrm{cat}}(x)}
  \leq Cg^3,
\]
as claimed.
\myQED

%%%%
\subsection{Behavior at large \texorpdfstring{\(g\)}{g}: proof of Proposition~\ref{pro:largeg}}
\label{sec:VP:Largeg}
%%%%

Let us start by considering the limiting (\(g=\infty\)) variational problem.
\begin{lemma}\label{lem:VP:ginfinity}
Let \(a\in\bbR\) and \(\alpha>1+\abs{a}\). Among all absolutely continuous curves \(\eta:[0,\alpha]\to\bbR^2\) satisfying
\[
  \eta(0)=(0,0),\qquad
  \eta(\alpha)=(1,a),\qquad
  \normI{\eta'(q)} \leq 1 \quad\text{for a.e.\ }q,
\]
the functional
\[
  \eta\mapsto I(\eta) \defby \int_0^\alpha \eta_2(q)\,\dd q
\]
has a unique minimizing trace. This trace is the polygonal curve joining
\[
  (0,0),\quad (0,b_*),\quad (1,b_*),\quad (1,a),
  \qquad
  b_*=\frac{1+a-\alpha}{2}.
\]
\end{lemma}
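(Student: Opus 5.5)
The plan is a pointwise lower bound on \(\eta_2(q)\) that depends only on the constraints, followed by an analysis of when equality holds. Write \(\eta=(\eta_1,\eta_2)\). Since \(\normI{\eta'}\le 1\), the total variation of \(\eta\) on \([q_1,q_2]\) in the \(\ell^1\) sense is at most \(q_2-q_1\).

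\emph{Lower bound.} Fix \(q\in[0,\alpha]\) and suppose \(\eta_2(q)=y\) with \(y<0\). From the start, the curve must move vertically by at least \(\abs{y}\) to reach height \(y\). From height \(y\) it must then travel a horizontal distance \(1\) in total and rise to height \(a\). Horizontal and vertical motion share the \(\ell^1\) budget, so this gives
\[
  \abs{y} + 1 + (a-y) \le \alpha,
\]
that is, \(y\ge b_*\). This bound is uniform in \(q\), but it is far from sharp near the endpoints. The sharp pointwise bound is
\[
  \eta_2(q)\ \ge\ \bar\eta_2(q) = \max\{-q,\ b_*,\ a-(\alpha-q)\}.
\]
Each term comes from one constraint. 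The term \(\eta_2(q)\ge -q\) follows from \(\eta_2(0)=0\) and \(\abs{\eta_2'}\le 1\). The term \(\eta_2(q)\ge a-(\alpha-q)\) follows from \(\eta_2(\alpha)=a\) in the same way. For the term \(\eta_2(q)\ge b_*\), the \(\ell^1\)-length of \(\eta\) is at least the \(\ell^1\)-length of the polygon \((0,0)\to\eta(q)\to(1,a)\). That polygon has \(\ell^1\)-length at least \(1+\abs{y}+\abs{a-y}\), where \(y=\eta_2(q)\). Hence \(\alpha\ge 1+\abs{y}+\abs{a-y}\ge 1-2y+a\) for \(y\le\min(0,a)\), which again gives \(y\ge b_*\).

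\emph{Optimality of \(\bar\eta\).} One checks directly that \(\bar\eta\) as defined in Proposition~\ref{pro:largeg} is admissible. It satisfies \(q_-\ge0\), \(q_+\le\alpha\), \(\normI{\bar\eta'}=1\), and it has the right endpoints since \(b_*+\alpha-q_+=a\). Its second coordinate equals the pointwise lower bound. Integrating the bound therefore gives \(I(\eta)\ge I(\bar\eta)\) for every admissible \(\eta\), so \(\bar\eta\) is a minimizer.

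\emph{Uniqueness of the trace.} Suppose \(I(\eta)=I(\bar\eta)\). By continuity, \(\eta_2=\bar\eta_2\) everywhere. We then identify \(\eta_1\) interval by interval.
\begin{itemize}
  \item On \([0,q_-]\) we have \(\eta_2'=-1\) a.e., so the constraint \(\normI{\eta'}\le1\) forces \(\eta_1'=0\) and hence \(\eta_1=0\).
  \item On \([q_+,\alpha]\) we have \(\eta_2'=1\), so \(\eta_1'=0\) there as well. Since \(\eta_1(\alpha)=1\), this gives \(\eta_1=1\) on that interval.
  \item On \([q_-,q_+]\), \(\eta_1\) increases from \(0\) to \(1\) over an interval of length \(1\) with \(\abs{\eta_1'}\le1\). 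This forces \(\eta_1'=1\) a.e.
\end{itemize}
Thus \(\eta=\bar\eta\), and the minimizing trace is the stated polygon.

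The step requiring the most care is the \(\ell^1\)-budget argument in the lower bound. The triangle inequality must be applied to \(\ell^1\) path length, not to displacement alone, so that the horizontal unit length is correctly charged. The degenerate cases \(q_-=0\) or \(q_+=\alpha\) also need checking. These cannot occur, because \(\alpha>1+\abs a\) gives \(q_->0\) and \(\alpha-q_+>0\).
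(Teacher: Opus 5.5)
Your proof is correct and follows essentially the same route as the paper. Both establish the pointwise bound \(\eta_2(q)\ge\max\{-q,b_*,a-\alpha+q\}=\bar\eta_2(q)\), integrate it, and identify the equality case piece by piece. The only difference is how the middle bound \(b_*\) is obtained: the paper adds the two one-sided estimates \(\eta_2(q)\ge\int_0^q(\abs{\eta_1'}-1)\) and \(\eta_2(q)\ge a-\int_q^\alpha(1-\abs{\eta_1'})\) and uses \(\int_0^\alpha\abs{\eta_1'}\ge 1\), whereas you use the \(\ell^1\) triangle inequality through the point \(\eta(q)\). Note also that \(1+\abs{y}+\abs{a-y}\ge 1+a-2y\) holds for every \(y\), so your case restriction \(y\le\min(0,a)\) is unnecessary.
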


\begin{proof}
Set \(q_- \defby -b_*\) and \(q_+\defby 1+q_-\). Since \(\alpha>1+\abs{a}\), \(0 < q_- < q_+ < \alpha\). Let \(\bar\eta\) be the curve which goes vertically from \((0,0)\) to \((0,b_*)\) during the time interval \([0,q_-]\), horizontally from \((0,b_*)\) to \((1,b_*)\) during \([q_-,q_+]\), and vertically from \((1,b_*)\) to \((1,a)\) during \([q_+,\alpha]\). Equivalently, \(\bar\eta_2(q)=\max\{-q,b_*,a-\alpha+q\}\).
This curve is admissible and satisfies \(\normI{\bar\eta'(q)}=1\) for a.e.\ \(q\in [0,\alpha]\).

Let now \(\eta\) be any admissible curve. The constraint \(\abs{\eta_1'(q)} + \abs{\eta_2'(q)}\leq 1\) implies that, for all \(q\in [0,\alpha]\),
\[
  \eta_2(q)
  =
  \int_0^q \eta_2'(s)\,\dd s
  \geq
  \int_0^q\bigl(\abs{\eta_1'(s)}-1\bigr)\,\dd s
  \geq -q,
\]
and
\[
  \eta_2(q)
  =
  a-\int_q^\alpha \eta_2'(s)\,\dd s
  \geq
  a-\int_q^\alpha\bigl(1-\abs{\eta_1'(s)}\bigr)\,\dd s
  \geq
  a-\alpha+q.
\]
Adding the two sharper estimates gives
\[
  2\eta_2(q) \geq a-\alpha+\int_0^\alpha\abs{\eta_1'(s)}\,\dd s \geq a-\alpha+1 = 2b_*,
\]
since \(\int_0^\alpha\abs{\eta_1'(s)}\,\dd s \geq \int_0^\alpha\eta_1'(s)\,\dd s = 1\).
Therefore, \(\eta_2(q) \geq \max\{-q,b_*,a-\alpha+q\} = \bar\eta_2(q)\) for all \(q\in[0,\alpha]\).
It follows immediately that
\[
  I(\eta) = \int_0^\alpha \eta_2(q)\,\dd q \geq \int_0^\alpha \bar\eta_2(q)\,\dd q = I(\bar\eta).
\]
It remains only to identify the equality case.
If equality holds in the last display, then the continuous nonnegative function \(\eta_2-\bar\eta_2\) vanishes identically. The constraint then forces \(\eta_1'=0\) on the two vertical pieces of \(\bar\eta_2\), while on the plateau the interval has length \(1\) and the total horizontal displacement is \(1\); hence \(\eta_1'=1\) there. We conclude that \(\eta=\bar\eta\).
\end{proof}

\begin{proof}[Proof of Proposition~\ref{pro:largeg}]
By minimality and by the lower bound \(J \geq -\lambdac\),
\[
  \forall g>0,\qquad
  gI(\eta_g) - \lambdac\alpha
  \leq
  \calB_g(\eta_g)
  \leq
  \calB_g(\bar\eta)
  =
  gI(\bar\eta) + \sfO(1).
\]
Hence
\begin{equation}\label{eq:ineq_largeg}
  \forall g>0,\qquad
  I(\eta_g) \leq I(\bar\eta)+\sfO(1/g).
\end{equation}
Now let \(g_n\to\infty\). Since finite action implies \(\normI{\eta_{g_n}'}\leq 1\) a.e., Arzelà--Ascoli gives a uniformly
convergent subsequence with limit \(\eta_\infty\). The inequality~\eqref{eq:ineq_largeg} passes to the limit and gives \(I(\eta_\infty)\leq I(\bar\eta)\). By Lemma~\ref{lem:VP:ginfinity}, \(\eta_\infty=\bar\eta\). Therefore the whole family converges uniformly to \(\bar\eta\).
\end{proof}

%%%%%%%%%%%%%%%%%%%%%%%
\section{Proofs of the main results}
\label{sec:ProofMain}
%%%%%%%%%%%%%%%%%%%%%%%

In this section, we prove our main results: Theorems~\ref{thm:FreeEnergy} and~\ref{thm:Concentration}, and Corollary~\ref{cor:TubeConcentration}. The section is structured as follows. We first prove Theorem~\ref{thm:FreeEnergy} in Section~\ref{sec:ProofMain:FreeEnergy}; the proof has two parts: a lower bound derived in Section~\ref{sec:ProofMain:FreeEnergy:LowerBound} and a matching upper bound established in Section~\ref{sec:ProofMain:FreeEnergy:UpperBound}. We then use this information, and the stability of the variational problem established in Proposition~\ref{prop:VP:Stability}, to prove Theorem~\ref{thm:Concentration} and Corollary~\ref{cor:TubeConcentration} in Section~\ref{sec:ProofMain:Concentration}.

%%%%
\subsection{Proof of Theorem~\ref{thm:FreeEnergy}}
\label{sec:ProofMain:FreeEnergy}
%%%%

%
\subsubsection{Lower bound on the free energy}\label{sec:ProofMain:FreeEnergy:LowerBound}

Before turning to the actual proof, let us note the following consequence of the repulsivity of the self-interaction and of the normalization \(\phi(1)=0\). Suppose that \(\gamma':0\to x\) and \(\gamma'':x\to z\), and that \(\gamma'\) visits \(x\) only at its final time, then
\begin{align}
  \Phi(\gamma'\concat\gamma'')
  &= \phi(\ell_x(\gamma'\concat\gamma'')) + \sum_{y\neq x} \phi(\ell_y(\gamma'\concat\gamma'')) \notag \\
  &= \phi(\ell_x(\gamma') +\ell_x(\gamma'') - 1) + \sum_{y\neq x} \phi(\ell_y(\gamma') + \ell_y(\gamma'')) \notag \\
  &\geq \phi(\ell_x(\gamma'')) + \sum_{y\neq x} \bigl(\phi(\ell_y(\gamma')) + \phi(\ell_y(\gamma''))\bigr) \notag \\
  &= \Phi(\gamma') + \Phi(\gamma''), \label{eq:IneqConcat}
\end{align}
where we used \(\ell_x(\gamma')=1\) and~\eqref{eq:condphi} for the third line, and \(\phi(1)=0\) for the last one.

\medskip
We are now ready to start the proof. For \(z\in\bbZ^2\), set
\[
    V_N(z) \defby \frac{g}{N}z_2,\qquad g>0.
\]
We introduce the partition functions in the presence of this field:
\begin{gather*}
    \Z{g}{n,x} \defby \sum_{\substack{\gamma: 0 \to x \\ \abs{\gamma}=n}} \W(\gamma) \exp\Bigl( - \sum_{k=1}^n V_N(\gamma_k) \Bigr), \\
    \Z{g;\mu}{x,y} \defby \sum_{\gamma: x \to y} \W(\gamma) \exp\Bigl( - \sum_{k=1}^{\abs{\gamma}} (\mu + V_N(\gamma_k)) \Bigr).
\end{gather*}

Let \((\lambda_*,\MP_*)\) be the unique solution of the primal variational problem constructed in Section~\ref{sec:VP}.
The goal of this section is to prove that
\begin{equation}\label{eq:ProofMain:LBFreeEnergy}
  \liminf_{N\to\infty} -\frac1N\log \Z{g}{L_N,A_N} \geq \calA(\MP_*,\lambda_*).
\end{equation}

\paragraph{Coarse-graining.}
Let us fix some \(\delta>0\).
Fix \(0<\zeta<1\) and set \(R_N\defby \lfloor N^\zeta\rfloor.\)
For a nearest-neighbor path \(\gamma\) contributing to \(\Z{g}{L_N,A_N}\), define
\[
    \tau_0\defby 0,
\]
and, for \(k\geq0\),
\[
    \tau_{k+1} \defby \inf\setof{t>\tau_k}{\normsup{\gamma_t-\gamma_{\tau_k}}\geq R_N}\wedge L_N.
\]
Set also, for all \(k\geq 0\),
\[
  v_k\defby \gamma_{\tau_k}.
\]
Let \(M=M(\gamma)\) be the smallest integer such that \(\tau_M=L_N\), and define the skeleton
\[
    \Skel{\gamma} \defby (\bfv,\bftau),
\]
where \(\bfv = (v_0,\dots,v_M)\) and \(\bftau = (\tau_0,\dots,\tau_M)\).

By construction \(v_0=0\), \(v_M=A_N\), and, for \(k<M-1\), \(\normsup{v_{k+1}-v_k}=R_N\), while the last increment satisfies \(\normsup{v_M-v_{M-1}}\leq R_N\).
Moreover, \((M-1)R_N\leq L_N\), which implies that
\[
  M\leq M_N\defby \left\lfloor\frac{L_N}{R_N}\right\rfloor+1 = \sfO(N^{1-\zeta}).
\]
Let \(\calV_{N,M}\) be the set of skeletons with \(M\) segments arising from paths of length \(L_N\) from \(0\) to \(A_N\), and put
\[
    \calV_N\defby \bigcup_{1\leq M\leq M_N}\calV_{N,M}.
\]
We partition the paths according to their skeletons, obtaining
\[
  \Z{g}{L_N,A_N}
  = \sum_{M=1}^{M_N} \sum_{(\bfv,\bftau)\in\calV_{N,M}} \sum_{\substack{\gamma:\,0\to A_N\\\Skel{\gamma}=(\bfv,\bftau)}} \W(\gamma)
  \exp\Bigl\{
      -\sum_{j=1}^{L_N} V_N(\gamma_j)
  \Bigr\}.
\]

\paragraph{Upper bound on the partition function.}
We fix a skeleton \((\bfv,\bftau)\in\calV_{N,M}\), and write
\[
  \Delta v_k\defby v_{k}-v_{k-1}
  \qquad\text{and}\qquad
  n_k \defby \tau_{k}-\tau_{k-1}.
\]
For every \(j\in\{\tau_{k-1}+1,\dots,\tau_k\}\), one has by construction \(\normsup{\gamma_j-v_{k-1}}\le R_N\).
Therefore \((\gamma_j)_2\geq (v_{k-1})_2-R_N\), which implies that
\[
    V_N(\gamma_j)
    =
    \frac{g}{N}(\gamma_j)_2
    \geq
    \frac{g}{N}(v_{k-1})_2 -\epsilon_N,
\]
where \(\epsilon_N\defby \frac{gR_N}{N}\).
For \(k<M\), the \(k\)-th skeleton segment visits its endpoint \(v_k\) only at its final time, by the definition of \(\tau_k\). Applying~\eqref{eq:IneqConcat} iteratively from right to left therefore yields
\begin{multline}\label{eq:UB_segment_product}
  \sum_{\substack{\gamma:\,0\to A_N\\\Skel{\gamma}=(\bfv,\bftau)}} \W(\gamma) \exp\Bigl\{ -\sum_{j=1}^{L_N} V_N(\gamma_j) \Bigr\} \notag\\
  \leq
  \prod_{k=1}^{M}
      \Bigl(
          \sum_{\substack{\eta:\,0\to \Delta v_k\\\abs{\eta} = n_k}} \W(\eta) e^{-(g(v_{k-1})_2/N - \epsilon_N) \abs{\eta}}
      \Bigr)\\
  = \prod_{k=1}^{M} \exp\Bigl\{- \frac{n_k}{N}g(v_{k-1})_2 + \epsilon_N n_k \Bigr\} \Z{}{n_k,\Delta v_k}.
\end{multline}
We apply Lemma~\ref{lem:logAsymptotics:UpperZ} to every segment with \(n_k\geq n_\delta\), obtaining
\[
  \Z{}{n_k,\Delta v_k} \leq \exp\Bigl\{-n_k J\bigl( \frac{\Delta v_k}{n_k} \bigr)  + \delta n_k \Bigr\}.
\]
Notice that all segments except the last one automatically satisfy this condition once \(N\) is sufficiently large, since \(n_k\geq R_N\). If the last segment has \(n_M<n_\delta\), we use the crude bound \(\Z{}{n_M,\Delta v_M}\le \Z{0}{n_M}\le 4^{n_\delta}\), which is \(\sfO(1)\). Otherwise the lemma applies to it as well. Therefore,
\begin{align*}
  -\frac1N \log& \biggl[ \sum_{\substack{\gamma:\,0\to A_N\\\Skel{\gamma}=(\bfv,\bftau)}} \W(\gamma) \exp\Bigl\{ -\sum_{j=1}^{L_N} V_N(\gamma_j) \Bigr\} \biggr] \\
  &\geq
  \sum_{k=1}^{M} \Bigl( \frac{n_k}{N}g\frac{(v_{k-1})_2}{N} + \frac{n_k}{N} J\bigl( \frac{\Delta v_k/N}{n_k/N} \bigr) \Bigr) - \frac{L_N}{N}(\epsilon_N+\delta) - \sfO(1/N) \\
  &\bydef A^*(\bfv,\bftau) - \frac{L_N}{N}(\epsilon_N+\delta) - \sfO(1/N).
\end{align*}
Let us turn to the entropy associated to the skeletons, that is, let us bound \(\abs{\calV_{N}}\). First, the number of possible \(\bfv\) is easily bounded: given \(v_k\), there are at most \((2R_N+1)^2 \leq 9R_N^2\) possible choices for \(v_{k+1}\). This shows that the total number of possible \(\bfv\) is bounded above by \((9R_N^2)^M = e^{\sfO(N^{1-\zeta}\log N)}\). The number of possible choices for \(\bftau\) is bounded above by
\[
  \binom{L_N + M}{M} \leq \biggl(\frac{e(L_N+M)}{M}\biggr)^M = e^{\sfO(N^{1-\zeta}\log N)}.
\]
After summing over \(M\), we conclude that \(\abs{\calV_{N}} \leq e^{\sfO(N^{1-\zeta}\log N)}\). We thus have
\begin{multline*}
	-\frac1N \log \Z{g}{L_N,A_N}
	\geq \inf_{(\bfv,\bftau)\in\calV_{N}} A^*(\bfv,\bftau) - \frac{L_N}{N}(\epsilon_N+\delta) - \sfO(N^{-\zeta}\log N).
\end{multline*}

\paragraph{Limiting behavior.}
Let \(\alpha_N\defby L_N/N\).
For \(1\leq k\leq M\), set
\[
  \rho_k\defby \frac{n_k}{N},
  \qquad
  \Delta\MP_k\defby \frac{\Delta v_k}{N},
  \qquad
  \MP_k\defby \frac{v_k}{N},
  \qquad
  t_k\defby \frac{\tau_k}{L_N}.
\]
Given a skeleton \((\bfv,\bftau)\), define \(\MP_{\bfv,\bftau}:[0,1]\to\bbR^2\) by linear interpolation:
for \(t\in[t_{k-1},t_k]\),
\begin{equation}\label{eq:skel_path}
  \MP_{\bfv,\bftau}(t) \defby \MP_{k-1} + (t-t_{k-1})\frac{L_N}{n_k}\Delta\MP_k .
\end{equation}
Define also, for \(t\in[t_{k-1},t_k]\),
\begin{equation}\label{eq:skel_density}
  \rho_{\bfv,\bftau}(t) \defby \frac{\rho_k}{t_k-t_{k-1}} = \frac{L_N}{N} = \alpha_N.
\end{equation}
Then
\[
  \int_0^1\rho_{\bfv,\bftau}(t)\,\dd t=\alpha_N \xrightarrow{N\to\infty}\alpha,
\]
and
\[
  \forall t\in(t_{k-1},t_k),\qquad
  \frac{\MP_{\bfv,\bftau}'(t)}{\rho_{\bfv,\bftau}(t)} = \frac{\Delta v_k}{n_k}.
\]
Consequently,
\[
  \sum_{k=1}^M \frac{n_k}{N} J\Bigl( \frac{\Delta v_k}{n_k} \Bigr)
  =
  \int_0^1 \rho_{\bfv,\bftau}(t) J\Bigl( \frac{\MP_{\bfv,\bftau}'(t)}{\rho_{\bfv,\bftau}(t)} \Bigr)\,\dd t,
\]
which shows that the kinetic terms agree exactly. Moreover, on each interval \([t_{k-1},t_k]\), the second coordinate of \(\MP_{\bfv,\bftau}\) differs from \((v_{k-1})_2/N\) by at most \(R_N/N\). Therefore the potential terms differ by at most \(C R_N/N=\sfo_N(1)\), uniformly in the skeleton. We conclude that
\[
  A^*(\bfv,\bftau) \geq \calA^*(\MP_{\bfv,\bftau},\rho_{\bfv,\bftau}) - \sfo_N(1),
\]
where
\[
  \int_0^1\rho_{\bfv,\bftau}(t)\,\dd t = \alpha_N\defby L_N/N.
\]
Let us denote by \(\Z{g}{L_N,A_N}[\bfv,\bftau]\) the contribution to \(\Z{g}{L_N,A_N}\) due to paths that have skeleton \((\bfv,\bftau)\). We have just proved the following.
\begin{lemma}\label{lem:ProofMain:SkeletonUpperBound}
With \(\delta>0\) fixed as above, uniformly in \((\bfv,\bftau)\in\calV_N\),
\[
  \Z{g}{L_N,A_N}[\bfv,\bftau]
  \leq
  \exp\left\{
    -N\calA^*(\MP_{\bfv,\bftau},\rho_{\bfv,\bftau})
    +N\alpha_N(\epsilon_N+\delta)
    +\sfo(N)
  \right\}.
\]
\end{lemma}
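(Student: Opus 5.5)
The estimate is a bookkeeping consequence of the computations just carried out, so my plan is to assemble them in order, keeping every error term uniform over skeletons. Fix \((\bfv,\bftau)\in\calV_{N,M}\) with \(M\leq M_N\). First, I would bound the field term from below on each segment, \(V_N(\gamma_j)\geq g(v_{k-1})_2/N-\epsilon_N\). Then I would decouple the segments through the superadditivity inequality~\eqref{eq:IneqConcat}. This step is legitimate because for \(k<M\) the piece \(\gamma_{[\tau_{k-1},\tau_k]}\) visits \(v_k\) only at its final time, by the definition of \(\tau_k\). The result is the product bound~\eqref{eq:UB_segment_product}. Dropping the skeleton constraint inside each factor only enlarges the sum, so this inequality is harmless.

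Next, I would apply Lemma~\ref{lem:logAsymptotics:UpperZ} to each factor \(\Z{}{n_k,\Delta v_k}\). For \(k<M\) one has \(n_k\geq R_N\geq n_\delta\) once \(N\) is large, and this threshold depends only on \(\delta\), not on the skeleton. The last segment may be short. In that case I would use the crude bound \(4^{n_\delta}=e^{\sfO(1)}\); since \(J\geq-\lambdac\) and \(n_M<n_\delta\), omitting its \(J\)-term changes the exponent by only \(\sfO(1)\). Taking logarithms and dividing by \(N\) gives \(-\frac1N\log\Z{g}{L_N,A_N}[\bfv,\bftau]\geq A^*(\bfv,\bftau)-\alpha_N(\epsilon_N+\delta)-\sfO(1/N)\), using \(\sum_k n_k=L_N\).

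The remaining step, and the only one with real content, is comparing the discrete action \(A^*(\bfv,\bftau)\) with \(\calA^*(\MP_{\bfv,\bftau},\rho_{\bfv,\bftau})\). For the interpolation~\eqref{eq:skel_path} and the constant density~\eqref{eq:skel_density}, the ratio \(\MP'_{\bfv,\bftau}/\rho_{\bfv,\bftau}\) equals \(\Delta v_k/n_k\) on \((t_{k-1},t_k)\), and \(t_k-t_{k-1}=n_k/L_N\). Hence \(\int_{t_{k-1}}^{t_k}\rho J(\MP'/\rho)\,\dd t=\frac{n_k}{N}J(\Delta v_k/n_k)\), so the kinetic terms coincide exactly. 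For the potential term, \(\abs{(\MP_{\bfv,\bftau})_2(t)-(v_{k-1})_2/N}\leq \normsup{\Delta v_k}/N\leq R_N/N\) on that interval. Integrating against \(\rho\,\dd t\), which has total mass \(\alpha_N\), gives a discrepancy of at most \(g\alpha_N R_N/N=\sfo(1)\) uniformly in the skeleton. Note that the potential in \(A^*\) is evaluated at the left endpoint, so the two potential sums differ only at order \(R_N/N\).

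Combining these gives \(A^*\geq\calA^*-\sfo(1)\), where the \(\sfo(1)\) depends only on \(N\), and hence the claimed bound with error \(\sfo(N)\) after multiplying by \(N\). The one point requiring care is that every error (\(n_\delta\), the \(R_N/N\) potential shift, and the last-segment correction) is controlled uniformly over \(\calV_N\), which holds because each depends only on \(N\), \(\delta\) and \(g\).
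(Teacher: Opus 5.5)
Your proposal is correct and follows the paper's own argument step by step: the segmentwise field bound, the decoupling via \eqref{eq:IneqConcat} (legitimate because for \(k<M\) the \(k\)-th piece hits \(v_k\) only at time \(\tau_k\)), Lemma~\ref{lem:logAsymptotics:UpperZ} on every segment of length at least \(n_\delta\) with a crude \(\sfO(1)\) bound for a short last segment, then exact agreement of the kinetic terms and an \(\sfO(R_N/N)\) discrepancy in the potential terms for the interpolation \eqref{eq:skel_path}--\eqref{eq:skel_density}. One small correction: to reinsert the \(J\)-term of a short last segment you need \(J\) bounded \emph{above} on the unit \(\ell^1\)-ball (indeed \(J\leq 0\) there, since \(\fe(h)\geq\normsup{h}\)); the bound \(J\geq-\lambdac\) that you cite goes in the wrong direction.
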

Now, taking the infimum over skeletons and then enlarging the class of competitors gives
\[
  \inf_{\calV_N} A^*
  \ge
  \inf_{\substack{\MP,\rho\\ \int\rho=\alpha_N}}
  \calA^*(\MP,\rho)-\sfo_N(1).
\]
Since \(L_N/N\to\alpha\), the mass constraint in the dual problem tends to the desired one. The minimal value of the dual problem is continuous in this mass parameter: indeed, by Proposition~\ref{prop:shooting-solution} the minimizer is obtained from the unique solution of the shooting system, and this solution depends continuously, in fact analytically, on \(\alpha\) by the implicit function theorem. Thus the infimum with constraint \(\int\rho=\alpha_N\) converges to the infimum with constraint \(\int\rho=\alpha\).

We have thus proved that
\[
\liminf_{N\to\infty} -\frac1N\log \Z{g}{L_N,A_N} \geq \inf_{\substack{\MP,\rho\\ \int\rho=\alpha}} \calA^*(\MP,\rho) - \alpha\delta = \calA(\MP_*,\lambda_*) - \alpha\delta,
\]
where the last identity follows from Theorem~\ref{thm:Main:EquVP}.
\(\delta\) being arbitrary, the desired bound~\eqref{eq:ProofMain:LBFreeEnergy} is proved.

%%%%
\subsubsection{Upper bound on the free energy}
\label{sec:ProofMain:FreeEnergy:UpperBound}
%%%%

\paragraph{Coarse-graining.}
Let \((\lambda_*,\MP_*)\) be the primal minimizer, written as \(\MP_*(x)=(x,y_*(x))\), and set \(\mu_*(x)=\lambda_*+g y_*(x)\). Define \(\rho_*(x) = \partial_\mu \icl_\mu(\dot\MP_*(x)) \vert_{\mu=\mu_*(x)}\).
Then \(\int_0^1\rho_*=\alpha\).
Let \(h_*(x)\in\partial \K_{\mu_*(x)}\) be the dual point to \(\MP_*'(x)\), so that \(\MP_*'(x)=\rho_*(x)\nabla\fe(h_*(x))\).

Fix \(M\geq 1\). Let
\[
  x_k\defby  \frac{k}{M},
  \qquad 0\leq k\leq M.
\]
For \(1\leq k\leq M-1\), set \(u_{k,N}\defby [N\MP_*(x_k)],\) \(u_{0,N}\defby 0,\) and \(u_{M,N}\defby A_N.\)
Let
\[
  \Delta u_{k,N}\defby u_{k,N}-u_{k-1,N}, \qquad
  \ell_k\defby \int_{x_{k-1}}^{x_k}\rho_*(s) \dd s,
\]
so that \(\sum_{k=1}^M \ell_k=\alpha\) and
\[
  \lim_{N \rightarrow \infty} \frac{\Delta u_{k,N}}{N} = \MP_*(x_k)-\MP_*(x_{k-1}).
\]

\begin{figure}[t]
  \centering
  \begin{tikzpicture}[scale=10]
    \def\M{10}
    \def\ymin{-0.55}

    \foreach \k in {0, ..., \M} {
        \pgfmathsetmacro{\xval}{\k/\M}
        \pgfmathsetmacro{\yval}{2*\xval*(\xval-1)}
        \draw[Gray!50!white, thin] (\xval, 0) -- (\xval, \ymin);% node[pos=1, below, black] {\(t_\k\)};
        \coordinate (P-\k) at (\xval, \yval);
    }

    \pgfmathtruncatemacro{\MminusOne}{\M-1}
    \foreach \m in {0, ..., \MminusOne} {
        \pgfmathtruncatemacro{\n}{\m+1}
        \pgfmathsetmacro{\Ax}{\m/\M}
        \pgfmathsetmacro{\Ay}{2*\Ax*(\Ax-1)}
        \pgfmathsetmacro{\Cx}{\n/\M}
        \pgfmathsetmacro{\Cy}{2*\Cx*(\Cx-1)}
        \pgfmathsetmacro{\dx}{\Cx - \Ax}
        \pgfmathsetmacro{\dy}{\Cy - \Ay}
        \pgfmathsetmacro{\theta}{atan(\dy/\dx)}
        \pgfmathsetmacro{\alpha}{\theta/2 - 45}
        \pgfmathsetmacro{\gamma}{\theta - \alpha}
        \pgfmathsetmacro{\L}{sqrt(\dx*\dx + \dy*\dy)}
        \pgfmathsetmacro{\LOne}{\L * cos(\gamma)}
        \pgfmathsetmacro{\LTwo}{\L * sin(\gamma)}
        \pgfmathsetmacro{\Bx}{\Ax + \LOne * cos(\alpha)}
        \pgfmathsetmacro{\By}{\Ay + \LOne * sin(\alpha)}
        \pgfmathsetmacro{\Dx}{\Ax + \LTwo * cos(\alpha + 90)}
        \pgfmathsetmacro{\Dy}{\Ay + \LTwo * sin(\alpha + 90)}
        \filldraw[black, thin, fill=yellow!30!white] (P-\m) -- (\Bx, \By) -- (P-\n) -- (\Dx, \Dy) -- cycle;
        \coordinate (B-\m) at (\Bx, \By);
        \coordinate (D-\m) at (\Dx, \Dy);
    }

    \draw[blue, thick, domain=0:1, samples=100] plot (\x, {2*\x*(\x-1)});
    \foreach \k in {0, ..., \M} {
        \filldraw[black, thin, fill=white] (P-\k) circle (.15pt);
    }
    \node[above=1.2] at (P-0) {\(u_{0,N}\)};
    \node[above=1.2] at (P-\M) {\(u_{M,N}\)};
    \node[label={[label distance=0pt]87:\(u_{1,N}\)}] at (P-1) {};
    \node[label={[label distance=0pt]80:\(u_{2,N}\)}] at (P-2) {};
    \draw[<->, thin] (0.3,0) -- (0.4,0) node[pos=.5, above] {\(N/M\)};
  \end{tikzpicture}
  \caption{The construction in Section~\ref{sec:ProofMain:FreeEnergy:UpperBound}. The blue curve is the minimizer of the variational problem. The partition function is lower bounded by summing only over paths that are concatenations of pieces contained inside successive diamonds; the length of each sub-path is also fixed to the value dictated by the target functional.}
\end{figure}

We use only admissible values of \(N\), namely those for which \(L_N-\normI{A_N}\) is nonnegative and even. For all sufficiently large \(N\), choose integers \(n_{k,N}\) such that
\[
    \sum_{k=1}^M n_{k,N}=L_N, \qquad n_{k,N}\equiv \normI{\Delta u_{k,N}}\pmod 2, \qquad \lim_{N\to\infty} \frac{n_{k,N}}{N}=\ell_k.
\]
Such a choice is possible because the parity constraints are compatible:
\[
  \sum_{k=1}^M \normI{\Delta u_{k,N}} \equiv \normI{A_N} \equiv L_N \pmod 2.
\]
Starting from integers close to \(N\ell_k\) with the prescribed parities, one adjusts finitely many of them by multiples of \(2\) to make the total sum equal to \(L_N\). Since \(M\) is fixed, this does not affect the limits \(n_{k,N}/N\to\ell_k\).

For each \(k\), consider paths from \(u_{k-1,N}\) to \(u_{k,N}\) obtained by translating paths contributing to \(\D{}{n_{k,N},\Delta u_{k,N}}\).
By Proposition~\ref{prop:shooting-solution}, \(y_*\) is analytic on \([0,1]\), and therefore the minimizing graph has bounded slope. The latter observation is used here to ensure that the diamonds are nondegenerate at the lattice scale. Since the directions of the increments \(\Delta u_{k,N}\) stay uniformly away from the vertical direction, our choice of aperture implies that the associated forward cone contains \(e_1\) with a uniform angular margin. Thus each diamond contains a nearest-neighbor path between its endpoints. The parity condition and the interior-velocity condition stated below then ensure that paths of the prescribed lengths \(n_{k,N}\) are available and are covered by the diamond-confined asymptotics of Lemma~\ref{lem:LogAsymptotics:LowerD}.

\paragraph{Lower bound on the partition function.}
For \(z = u_{k-1,N}+\eta_j\), where \(\eta\) is any such path from \(u_{k-1,N}\) to \(u_{k,N}\), one has
\[
  z_2 \leq (u_{k-1,N})_2 + n_{k,N}.
\]
Thus
\[
  V_N(z) = \frac{g}{N}z_2 \leq g y_*(x_{k-1}) + \varepsilon_{N,M},
\]
where
\[
  \varepsilon_{N,M}\defby g\max_{1\leq k\leq M}\abs[\Big]{\frac{(u_{k-1,N})_2}{N} - y_*(x_{k-1})} + g\max_{1\leq k\leq M}\frac{n_{k,N}}{N},
\]
which satisfies \(\limsup_{N\to\infty}\varepsilon_{N,M}\leq C/M\) for every \(M\).

The interiors of the successive diamond-confined pieces lie in disjoint diamonds, and the endpoints of such pieces are not revisited. Hence concatenation creates no additional intersections, except for the identification of consecutive endpoints. Since \(\phi(1)=0\), the
self-interaction energy is thus additive over the concatenated pieces. Hence, restricting the partition function to the concatenations described above gives
\[
  \Z{g}{L_N,A_N}\geq
  \prod_{k=1}^M
  \bigl[ \exp\{-n_{k,N}g y_*(x_{k-1})-n_{k,N}\varepsilon_{N,M}\}\D{}{n_{k,N},\Delta u_{k,N}} \bigr].
\]

\paragraph{Limiting behavior.}
For fixed \(M\), the velocities \(\Delta u_{k,N}/n_{k,N}\) remain, for all large \(N\), in a compact subset of the interior of the effective velocity
domain. Indeed,
\[
  \frac{\Delta u_{k,N}}{n_{k,N}}
  \xrightarrow{N\to\infty}
  \frac{\MP_*(x_k)-\MP_*(x_{k-1})}{\ell_k}
  =
  \frac{
    \int_{x_{k-1}}^{x_k}
    \nabla\fe(h_*(s))\,\rho_*(s)\,\dd s
  }{
    \int_{x_{k-1}}^{x_k}\rho_*(s)\,\dd s
  },
\]
where we used \(\MP_*'(s)=\rho_*(s)\nabla\fe(h_*(s))\).
Since \(h_*(s)\) ranges over a compact subset of \(\bbR^2\setminus\{0\}\), the vectors \(\nabla\fe(h_*(s))\) form a compact subset of the interior of the unit \(\ell^1\)-ball. Their weighted averages are therefore uniformly separated from the boundary \(\normI v=1\), and the same holds for \(\Delta u_{k,N}/n_{k,N}\) for all large \(N\). By Lemma~\ref{lem:LogAsymptotics:LowerD}, applied for fixed \(M\) to each compatible pair \((n_{k,N},\Delta u_{k,N})\),
\begin{multline*}
  \frac1N\log \Z{g}{L_N,A_N}
  \geq
  - \frac1N \sum_{k=1}^M n_{k,N} J \Bigl( \frac{\Delta u_{k,N}}{n_{k,N}} \Bigr) - \frac1N \sum_{k=1}^M n_{k,N}g y_*(x_{k-1}) \\
  - \varepsilon_{N,M}\frac{L_N}{N} + \sfo_N(1),
\end{multline*}
where \(\sfo_N(1)\) is for fixed \(M\). Letting \(N\to\infty\) yields
\begin{multline}\label{eq:LB_fini}
  \liminf_{N\to\infty} \frac1N \log \Z{g}{L_N,A_N}
  \geq
  - \sum_{k=1}^M \ell_k J\Bigl( \frac{\MP_*(x_k)-\MP_*(x_{k-1})}{\ell_k} \Bigr) \\
  - \sum_{k=1}^M \ell_k g y_*(x_{k-1}) - \frac{C}{M} .
\end{multline}
Now,
\[
  \lim_{M\to\infty} \sum_{k=1}^M \ell_k J\Bigl( \frac{\MP_*(x_k)-\MP_*(x_{k-1})}{\ell_k} \Bigr)
  = \int_0^1\rho_*(x) J\Bigl( \frac{\MP_*'(x)}{\rho_*(x)} \Bigr)\, \dd x
\]
and
\[
  \lim_{M\to\infty} \sum_{k=1}^M \ell_k g y_*(x_{k-1})
  =
  \int_0^1 \rho_*(x)g y_*(x)\, \dd x.
\]
The convergence of the first sum follows because \(\rho_*\), \(\MP_*'\), and \(J\) are continuous on the relevant compact subset of the interior of the effective domain, and
\[
  \frac{\MP_*(x_k)-\MP_*(x_{k-1})}{\ell_k}
\]
is the \(\rho_*\)-average of \(\MP_*'(x)/\rho_*(x)\) over \([x_{k-1},x_k]\). The second convergence is an ordinary weighted Riemann sum.

Combining this with~\eqref{eq:LB_fini} yields
\begin{align*}
  \liminf_{N\to\infty} \frac1N \log \Z{g}{L_N,A_N}
  &\geq
  -\int_0^1 \rho_*(x) \left[ J\Bigl(\frac{\MP_*'(x)}{\rho_*(x)}\Bigr) + g y_*(x) \right]\,\dd x \\
  &=
  -\calA^*(\MP_*,\rho_*) \\
  &=
  -\calA(\MP_*,\lambda_*),
\end{align*}
where the last equality follows from Theorem~\ref{thm:Main:EquVP}.
This establishes the upper bound on the free energy,
\begin{equation}\label{eq:ProofMain:UBFreeEnergy}
    \limsup_{N\to\infty} -\frac1N \log \Z{g}{L_N,A_N}
    \leq
    \calA(\MP_*,\lambda_*).
\end{equation}
Combining this bound with~\eqref{eq:ProofMain:LBFreeEnergy} concludes the proof of Theorem~\ref{thm:FreeEnergy}.\myQED

%%%%
\subsection{Proofs of Theorem~\ref{thm:Concentration} and Corollary~\ref{cor:TubeConcentration}}
\label{sec:ProofMain:Concentration}
%%%%

%
\subsubsection{Concentration: proof of Theorem~\ref{thm:Concentration}}
Let \(m_*\defby \calA^*(\MP_*,\rho_*) = \calA(\MP_*,\lambda_*)\).
Let \((\bfv,\bftau)\in\calV_N\) be the skeleton associated with a path \(\gamma\) and recall the definition of \(\MP_{\bfv,\bftau}\) and \(\rho_{\bfv,\bftau}\) in~\eqref{eq:skel_path} and~\eqref{eq:skel_density}.
We also associate with the skeleton the finite measure
\[
  \frm_{\bfv,\bftau} \defby (\MP_{\bfv,\bftau})_\sharp \bigl(\rho_{\bfv,\bftau}(t)\,\dd t\bigr).
\]
\begin{lemma}\label{lem:SkeletonMeasureApprox}
Uniformly over paths \(\gamma\) and their associated skeletons \((\bfv,\bftau)\in\calV_N\),
\[
  d_{\mathrm{BL}} ( \frm_\gamma^N, \frm_{\bfv,\bftau} ) \xrightarrow{N\to\infty} 0.
\]
\end{lemma}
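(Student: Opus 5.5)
We plan to couple the two measures segment by segment and estimate the bounded-Lipschitz distance through a direct transport bound. Fix a path \(\gamma\) with skeleton \((\bfv,\bftau)\). The empirical measure decomposes as
\[
  \frm_\gamma^N=\sum_{k=1}^M \frac1N\sum_{j=\tau_{k-1}+1}^{\tau_k}\delta_{\gamma_j/N}.
\]
On the interval \([t_{k-1},t_k]\), the density \(\rho_{\bfv,\bftau}\) equals \(\alpha_N=L_N/N\) and \(t_k-t_{k-1}=n_k/L_N\). Hence the piece of \(\frm_{\bfv,\bftau}\) coming from the \(k\)-th segment has total mass \(n_k/N\), which is exactly the mass of the \(k\)-th block of \(\frm_\gamma^N\). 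Both measures therefore have the same total mass \(L_N/N\), and it suffices to compare them block by block.

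For \(\varphi\) with \(\normsup{\varphi}\le 1\) and \(\mathrm{Lip}(\varphi)\le 1\), we write the contribution of the \(k\)-th block as
\[
  \frac1N\sum_{j=\tau_{k-1}+1}^{\tau_k}\varphi(\gamma_j/N)-\alpha_N\int_{t_{k-1}}^{t_k}\varphi(\MP_{\bfv,\bftau}(t))\,\dd t .
\]
The key geometric input is the definition of the stopping times. For \(\tau_{k-1}<j\le\tau_k\), we have \(\normsup{\gamma_j-v_{k-1}}\le R_N\), because the path has not yet exited the box of radius \(R_N\) before \(\tau_k\), and at \(\tau_k\) the exit is by a single step, so the bound still holds at \(\tau_k\) itself. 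Likewise, \(\MP_{\bfv,\bftau}(t)\) for \(t\in[t_{k-1},t_k]\) lies on the segment from \(v_{k-1}/N\) to \(v_k/N\), and \(\normsup{v_k-v_{k-1}}\le R_N\). Thus every atom \(\gamma_j/N\) and every point \(\MP_{\bfv,\bftau}(t)\) in block \(k\) lies within \(\normII{\cdot}\)-distance \(\sqrt2R_N/N\) of \(v_{k-1}/N\). By the Lipschitz property, we replace \(\varphi\) by the constant \(\varphi(v_{k-1}/N)\) in both terms at cost at most \(\sqrt2 R_N/N\) times the block mass \(n_k/N\) for each term. The resulting constant terms cancel exactly, since the masses agree.

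Summing over \(k\) gives
\[
  d_{\mathrm{BL}}(\frm_\gamma^N,\frm_{\bfv,\bftau})\le 2\sqrt2\,\frac{R_N}{N}\cdot\frac{L_N}{N}=\sfO(N^{\zeta-1}),
\]
uniformly in \(\gamma\), which tends to \(0\) since \(\zeta<1\).

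There is no real obstacle in this argument. The only points requiring care are the exact mass matching, which holds because the paper chose \(\rho_{\bfv,\bftau}\equiv\alpha_N\) with \(t_k=\tau_k/L_N\), and the indexing convention \(k=1,\dots,L_N\), which excludes \(\gamma_0\) and so matches the blocks \(\tau_{k-1}<j\le\tau_k\) exactly.
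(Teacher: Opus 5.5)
Your proposal is correct and follows essentially the same route as the paper: a block-by-block comparison in which both the atoms \(\gamma_j/N\) and the interpolated segment stay within distance \(O(R_N/N)\) of \(v_{k-1}/N\), giving a uniform bound of order \(\alpha_N R_N/N\). You also spell out the exact matching of the block masses \(n_k/N\), which the paper's estimate relies on implicitly.
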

\begin{proof}
Let \(\varphi\) be such that \(\normsup{\varphi}\leq 1\) and \(\mathrm{Lip}(\varphi)\leq 1\). We compare the two measures segment by segment. On the \(k\)-th segment, both the microscopic points \(\gamma_j/N\), \(j=\tau_{k-1}+1,\ldots,\tau_k\), and the interpolated macroscopic segment \(\MP_{\bfv,\bftau}([t_{k-1},t_k])\) lie within distance \(C R_N/N\) of \(v_{k-1}/N\). Therefore
\[
  \abs[\biggl]{
    \frac1N \sum_{j=\tau_{k-1}+1}^{\tau_k} \varphi(\gamma_j/N)
    - \int_{t_{k-1}}^{t_k} \varphi(\MP_{\bfv,\bftau}(t)) \rho_{\bfv,\bftau}(t)\,\dd t
  }
  \leq
  C\frac{n_k}{N}\frac{R_N}{N}.
\]
Summing over \(k\) gives
\[
  C\frac{R_N}{N}\sum_{k=1}^M\frac{n_k}{N} \leq C\alpha_N\frac{R_N}{N} .
\]
Since \(R_N/N\xrightarrow{N\to\infty} 0\), the claim follows after taking the
supremum over admissible test functions \(\varphi\).
\end{proof}
Fix \(\epsilon>0\). By Lemma~\ref{lem:SkeletonMeasureApprox}, for all large \(N\), if \(d_{\mathrm{BL}}(\frm_\gamma^N,\frm_*)>\epsilon\), then the associated skeleton satisfies \(d_{\mathrm{BL}}(\frm_{\bfv,\bftau},\frm_*)>\epsilon/2\).
By the measure stability estimate of Proposition~\ref{prop:VP:Stability}, applied uniformly for \(\alpha_N=L_N/N\) close to \(\alpha\), it follows that
\[
  \calA^*(\MP_{\bfv,\bftau},\rho_{\bfv,\bftau}) \geq m_* + \tfrac12c_{\epsilon/2}
\]
for all such skeletons and all sufficiently large \(N\).
(As observed in Section~\ref{sec:ProofMain:FreeEnergy:LowerBound}, the minimal value and the minimizer of the dual problem depend continuously on the mass parameter; hence the stability constant can be chosen uniformly for masses sufficiently close to \(\alpha\).)

Choose \(\delta>0\) sufficiently small. On the one hand, by Lemma~\ref{lem:ProofMain:SkeletonUpperBound}, every such bad skeleton has weight at most
\[
  \Z{g}{L_N,A_N}[\bfv,\bftau] \leq \exp\{-N(m_* + c_{\epsilon/2}/3)\}
\]
for \(N\) large enough. Let us denote by \(\Z{g}{L_N,A_N}[d_{\mathrm{BL}}(\frm_\gamma^N,\frm_*)>\epsilon]\) the contribution to \(\Z{g}{L_N,A_N}\) due to paths \(\gamma\) such that \(d_{\mathrm{BL}}(\frm_\gamma^N,\frm_*)>\epsilon\). Since the number of skeletons is \(e^{\sfo(N)}\),
\[
  \Z{g}{L_N,A_N} \bigl[ d_{\mathrm{BL}}(\frm_\gamma^N,\frm_*)>\epsilon \bigr]
  \leq
  \exp\{-N(m_* + c_{\epsilon/2}/4)\}.
\]
On the other hand, the free-energy upper bound~\eqref{eq:ProofMain:UBFreeEnergy} gives
\[
  \Z{g}{L_N,A_N} \geq \exp\{-Nm_* - \sfo(N)\}.
\]
Taking the ratio yields the desired exponential bound.
\myQED

\subsubsection{Proof of Corollary~\ref{cor:TubeConcentration}}
It is enough to observe that a nearest-neighbor path cannot make a macroscopic deviation outside the \(\epsilon\)-neighborhood of \(\Gamma_*\) without placing a positive amount of monomer mass away from \(\Gamma_*\). Indeed, the function \(k \mapsto d_2(\gamma_k/N,\Gamma_*)\) changes by at most \(1/N\) at each step. Therefore, if \(N^{-1}\gamma\) leaves \(\calT_\epsilon^*\), then the path must spend at least \(c_\epsilon N\) steps at distance at least \(\epsilon/2\) from \(\Gamma_*\), for some \(c_\epsilon>0\). Equivalently,
\[
  \frm_\gamma^N(\bbR^2\setminus\calT_{\epsilon/2}^*) \geq c_\epsilon
\]
for all large \(N\). Taking a bounded Lipschitz function which vanishes on \(\calT_{\epsilon/2}^*\) and is positive outside
\(\calT_\epsilon^*\), we obtain \(d_{\mathrm{BL}}(\frm_\gamma^N,\frm_*)\geq c'_\epsilon\).
The claim thus follows from Theorem~\ref{thm:Concentration}.
\myQED

%%%%%%%%%%%%%%%%%%%%%%%
\section*{Acknowledgments}
%%%%%%%%%%%%%%%%%%%%%%%

The authors gratefully acknowledge the financial support of the Swiss National Science Foundation (SNSF) under grant number 200021\_219333. Both authors are members of the NCCR SwissMAP.

\bibliographystyle{plain}
\bibliography{PGF}

\end{document}